\newtheorem{theorem}{Theorem}
\newtheorem{assumption}{Assumption}
\newtheorem*{assumption*}{Assumption}
\newtheorem{proposition}[theorem]{Proposition}
\newtheorem*{start*}{Starting Point}
\newtheorem*{proof*}{Proof}
\newtheorem*{note*}{Side Remark}
\DeclareMathOperator*{\Prb}{\text{Pr}^{b}}
\begin{document}

\title{Staggered Adoption DiD Designs with Misclassification and Anticipation\thanks{We thank seminar participants at Ca' Foscari Venice, KU Leuven, the Frankfurt Econometrics Workshop, the HKMetrics workshop in Mannheim, the 2025 IAAE meeting in Turin, the 2025 IPDC meeting in Montpellier, the 2025 VfS meeting in Cologne for helpful comments and discussions. We are also grateful to Emilie Berkhout and her co-authors  for kindly sharing data and code from their paper \textit{Using technology to prevent fraud in high stakes national school
examinations: Evidence from Indonesia} (Journal of Development Economics, 170(2024), 103307) with us. Financial support from the DFG (DFG Grant No. 463916029) is gratefully acknowledged.}}
\author{Clara Augustin\thanks{Faculty of Economics and Business, Goethe University Frankfurt, 60629
Frankfurt am Main, Germany. E-mail: \texttt{Augustin@econ.uni-frankfurt.de.}%
} \and Daniel Gutknecht\thanks{%
Faculty of Economics and Business, Goethe University Frankfurt, 60629
Frankfurt am Main, Germany. E-mail: \texttt{Gutknecht@wiwi.uni-frankfurt.de.}%
} \and Cenchen Liu\thanks{%
Faculty of Economics and Business, Goethe University Frankfurt, 60629
Frankfurt am Main, Germany. E-mail: \texttt{Liu@wiwi.uni-frankfurt.de.}%
}}
\date{\today }
\maketitle

\begin{abstract}
\noindent This paper examines the identification and estimation of treatment effects in staggered adoption designs -- a common extension of the canonical Difference-in-Differences (DiD) model to multiple groups and time-periods -- in the presence of (time varying)
misclassification of the treatment status as well as of anticipation. We demonstrate that standard estimators are biased
with respect to commonly used causal parameters of interest under such forms of misspecification. To address this issue, we provide modified estimators that recover the Average Treatment Effect of observed and true switching units, respectively. Additionally,  we suggest two moment based specification tests aimed at detecting Parallel Trends violations in pre-treatment periods as well as the timing and extent of misclassification and anticipation effects. We illustrate the proposed methods with an application to the effects of an anti-cheating policy on school mean test scores in high stakes national exams in Indonesia.

\bigskip\noindent\textbf{JEL Classification:} C13, C21, C51

\bigskip \noindent \textbf{Keywords:} Measurement Error, Causal Inference, Moment Equality Tests, Specification Tests

\end{abstract}

\onehalfspacing

\pagebreak

\section{Introduction}\label{sec:intro}

The Difference-in-Differences (DiD) methodology is arguably one of the most popular research designs in applied economics, offering a simple and intuitive framework to estimate causal effects in observational settings. It serves as a preferred tool for policy evaluations, labor market studies, and other areas where randomized controlled trials are impractical. Staggered adoption designs naturally extend the canonical two-by-two DiD framework to settings where multiple groups adopt treatments at different times. This structure enables researchers to exploit variation in treatment timing and group-level adoption, aggregate treatment effects across groups, and accommodate heterogeneity in treatment effects.\footnote{To illustrate the prevalence of staggered adoption designs, \citet{BLW2022} report that more than 54\% of all DiD studies published in the leading Finance and Accounting journals between 2000 and 2019 (744 studies in total) were based on staggered adoption designs.}

Despite the widespread use of staggered adoption designs in practice, common forms of misspecification, such as (time-varying) treatment status misclassification or anticipatory effects, have received relatively little attention, even though they can undermine the validity of estimation and inference procedures in this setting. In this paper, we address these challenges and examine the implications of time-varying misclassification and anticipation for the identification and estimation of commonly used estimands in staggered adoption designs. In a second step, we propose remedies by introducing modified estimators that can recover well-defined treatment parameters, such as the Average Treatment Effect (ATE) of units in the \textit{observed} switching-to-treatment cells, or of units in the \textit{actual} or \textit{true} switching-to-treatment cells. The former parameter measures the ATE for all treated units (switcher-to-treatment and already treated units) at the time when treatment is observed in the data, while the latter is the ATE of all switching-to-treatment units at the time when they \textit{actually} switch status, which may be unobserved in the data. Additionally, we propose a testing procedure aimed at detecting the timing and extent of misclassification or anticipation effects.

Misclassification of the treatment status arises when the recorded treatment status does not coincide with the actual treatment status. This can occur due to data limitations, such as when the exact timing of treatment is not observed \citep[e.g.,][]{Bindler2018}, or due to reporting and classification errors when treatment status can only be inferred indirectly \citep[see][]{GS2017}. Anticipation effects, on the other hand, occur when individuals or units are either exposed to treatment prior to the recorded treatment date or adjust their behavior in anticipation of future treatment, thereby contaminating pre-treatment periods \citep[see, e.g.,][]{B2022,BBS2022,BPRSS2024}. Both issues are particularly problematic in staggered adoption designs, where treatment assignment evolves over time and across groups, amplifying the risk of ``forbidden comparisons'' and leading to invalid inference, even when estimands appropriate for staggered adoption settings are targeted.

This paper makes several contributions to the literature. First, it characterizes the bias introduced by misclassification and anticipation effects in commonly used estimators across different settings. Specifically, under the restrictive assumption of homogeneous treatment effects, we show that the Two-Way Fixed Effects (TWFE) estimator, the most prominent estimator in standard two-by-two DiD settings, is biased. The bias arises because misspecification distorts the observed timing and incidence of treatment, leading to ``forbidden comparisons'' where treated units may be incorrectly classified or compared to units that have already been partially treated. We then extend our analysis to alternative DiD estimands designed to accommodate heterogeneous treatment effects in staggered adoption designs. In particular, we consider the ATE of observed switching-to-treatment cells as well as the ATE of actual switching-to-treatment cells.   For both causal parameters, we derive the structure and nature of the bias under staggered adoption, highlighting the limitations of existing methods in the presence of such forms of misspecification.

As a second contribution, we define alternative estimators that allow to recover the aforementioned causal parameters in staggered adoption designs. In particular, we introduce a bias-corrected version of the staggered adoption estimator proposed by \citet{CH2020}, which recovers the ATE of observed switching-to-treatment units. Based on knowledge of the timing and extent of misspecification, the estimator adjusts the comparisons conducted by the original estimator of \citet{CH2020} by using groups unaffected by misspecification. In contrast, for the ATE of actual switching-to-treatment cells, additional steps are required. Since the actual timing of treatment adoption is unobserved at the unit level, we exploit a homogeneity assumption on the misclassification (anticipation) probability, which restricts heterogeneity across units but not over time. This allows us to consistently estimate the fraction of actual switching-to-treatment units by extending the Parallel Trends (PT) assumption to the (potential) outcomes under treatment. In a subsequent step, we construct an estimator for the ATE of actual switching-to-treatment cells and derive its asymptotic properties.

As a third contribution, we propose two flexible specification tests based on moment equalities that allow to check for violations of pre-treatment PT and for the presence, extent, and timing of misclassification. In fact, testing for both forms of misspecification is required as our proposed corrections  rely on PT assumptions, and distinguishing between bias arising from misclassification and bias due to PT violations in pre-treatment periods is inherently challenging. To address this, we firstly propose a moment equality based test for pre-trends. This test compares switching groups with not-yet-switched comparison groups in periods known not to be affected by misclassification or anticipation. This test may serve as a more powerful alternative to  event-study type pre-trend tests as it uses all possible combinations of comparisons across the time periods considered and not just with respect to a specific reference period. The second test on the other hand is a  test for misclassification or anticipation comparing the outcomes of observed switcher groups to not-yet-switched groups using periods potentially affected by the former. We implement these tests as moment equality tests and derive their asymptotic properties. Finally, we show how both tests may be combined to construct a decision rule for practicioners.

To illustrate the finite sample performance of both the estimators and the tests, we carry out a small-scale Monte Carlo simulation. The findings suggest a good performance even in cases of misspecification because of misclassification or anticipation. We then apply these methods to assess the effects and the presence of anticipation in a study by \citet{BPRSS2024} on the effects of a new computer based testing system on school mean test scores in Indonesia between 2015 to 2019. The testing system, which was voluntary and intended to curb cheating behavior, was reported to have led schools to anticipate measures that limit cheating. Our estimates for the ATE of observed switching-to-treatment units confirm this and align with \citet{BPRSS2024}, while the estimates for the ATE of actual switching-to-treatment units are slightly larger in magnitude. Moreover, we estimate the fraction of anticipating schools to be  around 15\% in 2018, and up to 30\% in 2019.  Finally, the misspecification tests provide evidence for anticipation effects, though they also  suggest some violation of parallel pre-trends in periods just prior to that anticipation for at least one cohort.

By providing a framework for inference in staggered adoption designs under misspecification, this paper contributes to the growing literature on the methodological underpinnings of DiD. In particular, the paper builds on three strands of the econometric literature. First, it contributes to the growing body of research on DiD methods tailored to staggered adoption designs, including recent advancements that address the challenges of heterogeneous treatment effects and dynamic treatment timing \citep[e.g.,][]{CH2020,CSA2021,BJS2024}. Second, the paper intersects with a broader literature on the misclassification of treatment status, which is well known to compromise the validity of inference procedures \citep[e.g.,][]{L2007,U2018,DK2022}. Unlike other approaches in the DiD literature that rely on proxy variables \citep[e.g.,][]{BG2018} or exclusion restrictions \citep[][]{NN2025}, our framework does not require such external information. Third, the paper relates to recent contributions on pre-testing of Parallel Trends and its implications for causal inference depending on the test outcome \citep[e.g.,][]{Rambachan2023,CH2024}.

The paper is structured as follows. Section \ref{sec:setup} introduces the outcome framework together with the misclassification design and outlines the key assumptions. Section \ref{sec:BIASANALYSIS} analyzes the bias of commonly defined estimands under homogeneous and heterogeneous treatment effects. Section \ref{sec:IdentTrueSwitch} discusses modified estimators that allow to recover of the ATE of observed and true switching units. Section \ref{sec:testing} proposes a testing procedure to detect misspecification in the form of misclassification or anticipation as well as pre-trend violations. Section \ref{sec:MC} examines the finite-sample behavior of the tests and the proposed estimators. Finally, Section \ref{sec:empirical} presents the empirical application, and Section \ref{sec:conclusion} concludes. All proofs are contained in the Appendix.

\section{Setup}\label{sec:setup}

We build on the setup of \citet{CH2020}, adapting it to account for treatment status misclassification. Specifically, as in \citet{CH2020}, we assume that the observations can be divided into \(G\) groups observed over \(T\) periods, with each group-period combination indexed by \((g, t) \in \{1, \ldots, G\} \times \{1, \ldots, T\}\). For each \((g, t)\) pair, we observe \(N_{g,t}\) units indexed by \(i \in \{1, \ldots, N_{g,t}\}\). Accordingly, for each \(i\) within a given \((g,t)\) combination, we define the potential outcomes \((Y_{i,g,t}(0), Y_{i,g,t}(1))\) alongside the latent, potentially misclassified, binary treatment status \(D_{i,g,t}^{*}\), where \(D_{i,g,t}^{*} = 1\) if unit \(i\) from group \(g\) receives treatment in period \(t\), and \(D_{i,g,t}^{*} = 0\) otherwise. The observed (or recorded) treatment status is given by a separate binary indicator \(D_{i,g,t}\), which may differ from \(D_{i,g,t}^{*}\) with positive probability, as formalized in Assumptions \ref{A6} and \ref{A7} below.  Additionally, to explicitly address potential misclassification of treatment timing under the sharp design (see Assumption \ref{A2} below), we introduce a latent indicator \(S^{*}_{i,g,t}\), equal to one if unit \(i\) from group \(g\) actually switches to treatment at the beginning of period \(t\), and zero otherwise. Unlike the observed treatment indicator \(D_{i,g,t}\), which might incorrectly classify some units as switchers, \(S^{*}_{i,g,t}\) precisely identifies true switching events. This distinction is especially useful in cross-sectional settings, where each unit is observed only once, allowing us to define causal parameters based explicitly on actual rather than observed switchers (see SubSection \ref{ssec:trueswitch}).

The observation rule in our setting is given by:
\[
Y_{i,g, t}=Y_{i,g, t}(D_{i,g,t}^* )=D_{i,g,t}^* Y_{i,g, t}(1)+\left(1-D_{i,g,t}^*\right) Y_{i,g ,t}(0)
\]
We further define averages of observed treatment indicators at the group-period level:
\begin{align*}
D_{g, t}&:=\frac{1}{N_{g,t}} \sum_{i=1}^{N_{g,t}} D_{i, g, t},  \ 
\mathbf{{D}}_g:=\left({D}_{g, t}\right)_{t\in \{1, \ldots, T\}}, \
\mathbf{{D}}:=\left({D}_{g, t}\right)_{(g, t) \in\{1, \ldots, G\} \times\{1, \ldots, T\}},
\end{align*}
with analogous definitions other quantities such as $D_{g, t}^*$,  $Y_{g, t}$ and $Y_{g, t}(0)$. Since  \(D_{i,g,t}\neq D_{i,g,t}^{*}\) may occur with positive probability, the above  setup explicitly accommodates scenarios in which individuals experience the treated potential outcome \(Y_{i,g,t}(1)\) despite having an observed treatment status \(D_{i,g,t}=0\), or vice versa. We refer to these discrepancies as instances of misclassification.\footnote{As noted earlier, misclassification can arise when exact treatment timing is unobserved or when treatment status is indirectly inferred.} Furthermore, in Section \ref{sec:BIASANALYSIS}, we demonstrate that this notation readily captures anticipatory effects, where units experience \(Y_{i,g,t}(1)\) prior to the observed treatment.

We make the following assumptions:

\begin{assumption}[Balanced Panel of Groups] \label{A1}
For each $(g, t) \in\{1, \ldots, G\}
\times\{1, \ldots, T\}:$ $$N_{g,t}>0.$$\end{assumption}

\begin{assumption}[Sharp Design] \label{A2}
For each $(g, t) \in\{1, \ldots, G\} \times\{1, \ldots, T\}$ and $i \in\left\{1, \ldots, N_{g,t}\right\}:$ $$ D_{i, g, t}={D}_{g, t}.$$\end{assumption}

\begin{assumption}[Mean Independence] \label{A3}
For each $(g, t) \in\{1, \ldots, G\} \times\{2, \ldots, T\}$:
$$\mathrm{E}\left[Y_{g, t}(0) \mid \mathbf{D}\right]=\mathrm{E}\left[Y_{g, t}(0) \mid \mathbf{D}_g\right].$$\end{assumption}

\begin{assumption}[Strong Exogeneity] \label{A4}
For each $(g, t) \in\{1, \ldots, G\} \times\{2, \ldots, T\}$: $$\mathrm{E}\left[{Y}_{g, t}(0)-{Y}_{g, t-1}(0) \mid \mathbf{{D}}_g\right]=\mathrm{E}\left[{Y}_{g, t}(0)-{Y}_{g, t-1}(0)\right].$$\end{assumption}

\begin{assumption}[Parallel Trends] \label{A5}
For each  $(g,k, t) \in\{1, \ldots, G\} \times\{1, \ldots, G\} \times\{2, \ldots, T\}$: 
$$\mathrm{E}\left[{Y}_{g, t}(0)-{Y}_{g, t-1}(0)\right]=\mathrm{E}\left[{Y}_{k, t}(0)-{Y}_{k, t-1}(0)\right].$$
\end{assumption}

Similar to \citet{CH2020}, we replace the independent groups assumption with mean independence, since the presence of stable groups (as defined below) inherently violates independence across groups (specifically, observed treatment status cannot be independent across groups in this case). Moreover, our sharp design assumption (Assumption \ref{A2}) is stated with respect to the observed treatment status $D_{i,g,t}$, thus explicitly accommodating scenarios in which researchers mistakenly assume a sharp design based solely on observed data. Importantly, we allow for violations of the sharp design at the level of the latent treatment status $D_{i,g,t}^{*}$, recognizing that misclassification may occur within specific groups and periods. Furthermore, mean independence and strong exogeneity are defined with respect to the observed treatment status $\mathbf{D}_g$. Finally, similar to \citet{CH2020}, we maintain a common trends assumption as our core identification assumption throughout the paper.\footnote{See \citet{CH2020} for a  discussion of the plausibility of the common trends assumption in settings with heterogeneous treatment effects.}

 We provide sufficient conditions for our assumptions to hold:
 
\begin{proposition}[Characterizing Mean Independence and Strong Exogeneity with Observed Treatment Status]\label{PROPCHARACTERIZATION}
Suppose that mean independence and strong exogeneity hold with respect to the actual treatment status $\mathbf{D}^*$:
\begin{align}\label{EQ:PROP1a}
\mathrm{E}\left[Y_{g, t}(0) \mid \mathbf{D}^*\right]=\mathrm{E}\left[Y_{g, t}(0) \mid \mathbf{D}_g^*\right],	
\end{align}
for all $(g, t) \in\{1, \ldots, G\} \times\{1, \ldots, T\}$, and
\begin{align}\label{EQ:PROP1b}
\mathrm{E}\left[Y_{g, t}(0)-Y_{g, t-1}(0) \mid \mathbf{D}_g^*\right]=\mathrm{E}\left[Y_{g, t}(0)-Y_{g, t-1}(0)\right],	
\end{align}
for all $(g, t) \in\{1, \ldots, G\} \times\{2, \ldots, T\}$.  Then Assumptions \ref{A3} and \ref{A4}, stated with respect to the observed treatment status $\mathbf{D}$, hold if the following are satisfied for all $(g, t) \in\{1, \ldots, G\} \times\{2, \ldots, T\}$:
\begin{enumerate}
    \item Trend-nondifferential misclassification:
\begin{align}\label{EQ:PROP1c}
	\mathrm{E}\left[{Y}_{g, t}(0)-{Y}_{g, t-1}(0) \mid \mathbf{D}_g^*,\mathbf{D}_g\right]&=\mathrm{E}\left[{Y}_{g, t}(0)-{Y}_{g, t-1}(0)\mid \mathbf{D}_g^*\right], \\
\label{EQ:PROP1d}
	\mathrm{E}\left[{Y}_{g, t}(0)-{Y}_{g, t-1}(0) \mid \mathbf{D}^*,\mathbf{D}\right]&=\mathrm{E}\left[{Y}_{g, t}(0)-{Y}_{g, t-1}(0)\mid \mathbf{D}^*\right].
\end{align}
	\item  Conditionally independence:
\begin{align}\label{EQ:PROP1e}
	\mathbf{D}_g^* \perp \mathbf{D}_{-g}\mid \mathbf{D}_g,
\end{align}
where $\mathbf{D}_{-g}:=\left(D_{k, t}\right)_{(k, t) \in\{1,\ldots,g-1,g+1,\ldots, G\} \times\{1, \ldots, T\}}$.
\end{enumerate}  
\end{proposition}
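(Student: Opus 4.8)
The plan is to prove both claims with the same device: insert the latent treatment status as an intermediate conditioning set via the law of iterated expectations, peel off the observed status using nondifferential misclassification, reduce the latent conditioning from the full vector $\mathbf{D}^*$ to the own-group vector $\mathbf{D}_g^*$ using the maintained assumptions \eqref{EQ:PROP1a}--\eqref{EQ:PROP1b} on the actual status, and finally collapse the remaining cross-group \emph{observed} status $\mathbf{D}_{-g}$ using the conditional-independence condition \eqref{EQ:PROP1e}. I would dispatch the strong-exogeneity claim (Assumption~\ref{A4}) first, since it requires neither \eqref{EQ:PROP1a} nor \eqref{EQ:PROP1e}. Writing $\Delta Y_{g,t}:=Y_{g,t}(0)-Y_{g,t-1}(0)$, I would condition on $(\mathbf{D}_g^*,\mathbf{D}_g)$, use \eqref{EQ:PROP1c} to drop $\mathbf{D}_g$ from the inner expectation, and \eqref{EQ:PROP1b} to replace the inner term by the unconditional mean:
\begin{align*}
\mathrm{E}\left[\Delta Y_{g,t}\mid \mathbf{D}_g\right]
&=\mathrm{E}\left[\mathrm{E}\left[\Delta Y_{g,t}\mid \mathbf{D}_g^*,\mathbf{D}_g\right]\mid \mathbf{D}_g\right] \\
&=\mathrm{E}\left[\mathrm{E}\left[\Delta Y_{g,t}\mid \mathbf{D}_g^*\right]\mid \mathbf{D}_g\right]
=\mathrm{E}\left[\Delta Y_{g,t}\right],
\end{align*}
which is exactly Assumption~\ref{A4}.

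For the mean-independence claim (Assumption~\ref{A3}) I would run the analogous argument on levels, now threading through the full vectors. Starting from $\mathrm{E}[Y_{g,t}(0)\mid \mathbf{D}]$, I would condition on $\mathbf{D}^*$, use nondifferential misclassification to remove $\mathbf{D}$ from the inner expectation, and then invoke \eqref{EQ:PROP1a} to reduce the inner conditioning set to $\mathbf{D}_g^*$, so that the inner term becomes a function $h(\mathbf{D}_g^*):=\mathrm{E}[Y_{g,t}(0)\mid \mathbf{D}_g^*]$ of the own-group latent status alone. The crucial step is the last one: since $h(\mathbf{D}_g^*)$ depends on the data only through $\mathbf{D}_g^*$ and \eqref{EQ:PROP1e} gives $\mathbf{D}_g^*\perp \mathbf{D}_{-g}\mid \mathbf{D}_g$, the conditional law of $\mathbf{D}_g^*$ given $(\mathbf{D}_g,\mathbf{D}_{-g})$ coincides with that given $\mathbf{D}_g$, whence $\mathrm{E}[h(\mathbf{D}_g^*)\mid \mathbf{D}]=\mathrm{E}[h(\mathbf{D}_g^*)\mid \mathbf{D}_g]$. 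Expanding $\mathrm{E}[Y_{g,t}(0)\mid \mathbf{D}_g]$ by the same insert-and-drop logic (now using the group-level condition \eqref{EQ:PROP1c}) yields the identical expression $\mathrm{E}[h(\mathbf{D}_g^*)\mid \mathbf{D}_g]$, and matching the two gives Assumption~\ref{A3}.

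I expect the cross-group reduction in the mean-independence step to be the main obstacle, for two reasons. First, one must be careful that it is the \emph{latent} own-group status $\mathbf{D}_g^*$ --- not $Y_{g,t}(0)$ directly --- that is decoupled from $\mathbf{D}_{-g}$ in \eqref{EQ:PROP1e}; the argument only goes through because \eqref{EQ:PROP1a} has already collapsed the dependence of $Y_{g,t}(0)$ onto $\mathbf{D}_g^*$, so that the conditional-independence condition may be applied to the measurable function $h(\mathbf{D}_g^*)$ rather than to the outcome itself. Second, there is a level-versus-difference subtlety: \eqref{EQ:PROP1c} and \eqref{EQ:PROP1d} are stated for the first difference $Y_{g,t}(0)-Y_{g,t-1}(0)$, whereas Assumption~\ref{A3} concerns the level $Y_{g,t}(0)$. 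I would bridge this either by invoking the nondifferential conditions in their level form (the same proof then applies verbatim) or, staying within the differenced conditions, by telescoping $Y_{g,t}(0)=Y_{g,1}(0)+\sum_{s=2}^{t}\Delta Y_{g,s}$ and controlling the base period $Y_{g,1}(0)$ through the identical insert-drop-collapse argument; carefully verifying that this base-period term is genuinely pinned down is the delicate part of the write-up.
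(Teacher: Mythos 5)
Your proposal is correct and follows essentially the same route as the paper's proof: for both claims you use the identical chain of inserting the latent status via iterated expectations, dropping the observed status by nondifferential misclassification, collapsing to $\mathbf{D}_g^*$ via \eqref{EQ:PROP1a}, and then removing $\mathbf{D}_{-g}$ via \eqref{EQ:PROP1e}, while your strong-exogeneity argument coincides step for step with the paper's (LIE, then \eqref{EQ:PROP1c}, then \eqref{EQ:PROP1b}). The level-versus-difference subtlety you flag is real — the paper's own proof applies the differenced conditions \eqref{EQ:PROP1c}--\eqref{EQ:PROP1d} directly to the level $Y_{g,t}(0)$ without comment — so your explicit handling of it (level-form conditions, or telescoping with care about the base period) is if anything more careful than the published argument.
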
 

The proposition bears some resemblance to Lemma 1 of \citet{DK2022}. It shows that assumptions stated with respect to the true treatment status $\mathbf{D}^*$ can be translated to the observed treatment status $\mathbf{D}$ under two additional conditions. 
First, misclassification is \emph{trend-nondifferential in mean}, in the sense that conditional on the true treatment status, the observed treatment indicator is mean-independent of the untreated outcome trend. This condition is the natural trend analogue of mean nondifferential misclassification in levels \citep[see, e.g.,][]{L2007,DK2022}. 
Second, misclassification is \emph{conditionally independent across groups}, meaning that within a given group, the true treatment status is independent of the observed treatment paths of other groups once the group's own observed treatment status is conditioned on.

To further clarify the mean independence assumption, note that \eqref{EQ:PROP1c} is equivalent to
\[
\mathrm{E}\!\left[Y_{g,t}(0)-Y_{g,t-1}(0)\mid \mathbf{D}_g^*=d,\mathbf{D}_g=d'\right]
=
\mathrm{E}\!\left[Y_{g,t}(0)-Y_{g,t-1}(0)\mid \mathbf{D}_g^*=d,\mathbf{D}_g=d''\right],
\]
for all $d\in\{0,1\}$ and all $d',d''\in\{0,1\}^T$, provided the expectations are well defined.  
This condition states that, conditional on the true treatment status, untreated potential outcomes follow the same trend regardless of whether the observed treatment status is correctly classified or misclassified.  
Such \emph{mean trend-nondifferential} misclassification is analogous to the standard mean nondifferential misclassification assumption in the measurement error literature, and is typically plausible when misclassification arises at the data collection or recording stage. By contrast, it may fail in settings where treatment status is self-reported.  Finally, we emphasize that the conditions in Proposition \ref{PROPCHARACTERIZATION} are sufficient but not necessary for Assumptions \ref{A3} and \ref{A4} to hold. In what follows, we therefore work directly with Assumptions \ref{A3} and \ref{A4} to derive transparent bias expressions and corresponding estimators.

We also assume staggered adoption designs, where observed and true treatment represents ``absorbing states":
\begin{assumption}
	[Staggered Adoption Designs]\label{A6} For each  $g\in \{1,\cdots, G\}$ and $t\in \{2, \ldots, T\}$: 
$$ D_{g, t} \geq D_{g, t-1}.$$
\end{assumption}

Assumption \ref{A6} is imposed solely for simplifying the notation. For designs involving more general switching patterns, similar assumptions about misclassification can be made for different types of switchers  (``reverse switchers'') at the cost of more complex notation. 

\section{Bias Analysis}\label{sec:BIASANALYSIS}

\subsection{Homogenous Treatment Effects}

In the absence of misclassification, it is well known that the standard TWFE estimator is generally biased in staggered adoption designs when treatment effects are heterogeneous. A notable exception is the case of homogeneous treatment effects, where
\begin{equation}\label{HOMOGENEOUSTE}
Y_{i,g,t}(1)-Y_{i,g,t}(0)=:\Delta
\end{equation}
holds for all $i$, $g$, and $t$. Specifically, let $\widehat{\beta}_{fe}$ denote the TWFE estimator obtained from a regression of $Y_{i,g,t}$ on group and time fixed effects as well as the observed treatment status $D_{i,g,t}$.\footnote{Under the sharp design (Assumption \ref{A2}), we have $D_{i,g,t}=D_{g,t}$ for all $i\in\{1,\dots,N_{g,t}\}$ and $(g,t)$.} In the absence of misclassification and under Assumptions \ref{A1}--\ref{A5}, it follows from Theorem 1 of \citet{CH2020} that:
\[
\mathrm{E}\left[\widehat{\beta}_{fe}\right]=\mathrm{E}\left[\frac{\sum_{(g,t)}\frac{N_{g,t}}{N_{1}}\varepsilon_{g,t}D_{g,t}\mathrm{E}\left[\Delta\left\vert \mathbf{D}\right.\right]}{\sum_{(g,t):D_{g,t}=1}\frac{N_{g,t}}{N_{1}}D_{g,t}\varepsilon_{g,t}}\right]=\Delta,
\]
where $\varepsilon_{g,t}$ is the estimated residual from the unit-level regression of $D_{g,t}$ on group and period specific fixed effects. Unfortunately, this is no longer true in the presence of misclassification as the following result demonstrates:

\begin{proposition}[Two-Way Fixed Effects]\label{PROPTWFE}
Suppose Assumptions \ref{A1}-\ref{A5} and Equation (\ref{HOMOGENEOUSTE}) hold, as well as $\mathrm{E}\left[\vert\widehat{\beta}_{fe}\vert\right]<\infty$. Then, the TWFE estimand $\mathrm{E}\left[\widehat{\beta}_{fe}\right]$ is given by:
\[
\mathrm{E}\left[\widehat{\beta}_{fe}\right]=\Delta \mathrm{E}\left[\frac{\sum_{(g,t)}\frac{N_{g,t}}{N_{1}}\varepsilon_{g,t}\left(\frac{1}{N_{g,t}}\sum_{i=1}^{N_{g,t}}\mathrm{Pr}\left(D_{i,g,t}^{\ast}=1\mid \mathbf{D}\right)\right)}{\sum_{(g,t)}\frac{N_{g,t}}{N_{1}}\varepsilon_{g,t}D_{g,t}}\right]
\]
\end{proposition}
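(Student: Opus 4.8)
The plan is to follow the proof of Theorem~1 in \citet{CH2020} closely, adapting it to reflect that the realized outcome is generated by the latent status $D_{i,g,t}^{\ast}$ rather than by the observed status $D_{i,g,t}$ that enters the regression.

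First, I would apply the Frisch--Waugh--Lovell theorem to the unit-level regression of $Y_{i,g,t}$ on group dummies, time dummies and $D_{i,g,t}$. Under the sharp design (Assumption~\ref{A2}), both $D_{i,g,t}$ and the residual $\varepsilon_{g,t}$ obtained from projecting $D_{g,t}$ onto the fixed effects are constant within each cell $(g,t)$, which yields
\[
\widehat{\beta}_{fe}=\frac{\sum_{(g,t)}\frac{N_{g,t}}{N_{1}}\varepsilon_{g,t}Y_{g,t}}{\sum_{(g,t)}\frac{N_{g,t}}{N_{1}}\varepsilon_{g,t}D_{g,t}},
\]
where the denominator is written using the OLS identity $\sum_{(g,t)}N_{g,t}\varepsilon_{g,t}^{2}=\sum_{(g,t)}N_{g,t}\varepsilon_{g,t}D_{g,t}$, which holds because the residuals are orthogonal to the projection of $D_{g,t}$ onto the fixed effects.

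Next, I would use the observation rule together with the homogeneity restriction (\ref{HOMOGENEOUSTE}) to substitute $Y_{i,g,t}=Y_{i,g,t}(0)+\Delta D_{i,g,t}^{\ast}$, and hence $Y_{g,t}=Y_{g,t}(0)+\Delta D_{g,t}^{\ast}$. This decomposes $\widehat{\beta}_{fe}$ into a baseline term involving $Y_{g,t}(0)$ and a treatment term proportional to $\Delta D_{g,t}^{\ast}$, both sharing the same denominator. Since $\varepsilon_{g,t}$ and $D_{g,t}$ are measurable functions of $\mathbf{D}$, conditioning on $\mathbf{D}$ and applying the law of iterated expectations (justified by $\mathrm{E}[|\widehat{\beta}_{fe}|]<\infty$) leaves the denominator untouched and replaces $Y_{g,t}(0)$ by $\mathrm{E}[Y_{g,t}(0)\mid\mathbf{D}]$ and $D_{g,t}^{\ast}$ by $\mathrm{E}[D_{g,t}^{\ast}\mid\mathbf{D}]=\frac{1}{N_{g,t}}\sum_{i}\mathrm{Pr}(D_{i,g,t}^{\ast}=1\mid\mathbf{D})$. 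The treatment term then reproduces exactly the numerator in the claimed expression.

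The crux is to show that the baseline term integrates to zero. Here I would reproduce the argument of \citet{CH2020}: mean independence (Assumption~\ref{A3}) reduces $\mathrm{E}[Y_{g,t}(0)\mid\mathbf{D}]$ to $\mathrm{E}[Y_{g,t}(0)\mid\mathbf{D}_g]$, while strong exogeneity (Assumption~\ref{A4}) and parallel trends (Assumption~\ref{A5}) make the period-to-period increment $\mathrm{E}[Y_{g,t}(0)-Y_{g,t-1}(0)\mid\mathbf{D}_g]$ free of both $g$ and of the realization of $\mathbf{D}_g$. Telescoping then gives an additive decomposition $\mathrm{E}[Y_{g,t}(0)\mid\mathbf{D}]=\lambda_g+\mu_t$. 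Conditional on $\mathbf{D}$, the maps $g\mapsto\lambda_g$ and $t\mapsto\mu_t$ lie in the column space of the group and time dummies, to which the residuals $\varepsilon_{g,t}$ are orthogonal by construction, so that $\sum_{(g,t)}\frac{N_{g,t}}{N_{1}}\varepsilon_{g,t}(\lambda_g+\mu_t)=0$ and the baseline term vanishes in expectation. I expect this decomposition step to be the main obstacle, since it is where all three identification assumptions must be combined and where the baseline period requires careful handling; by contrast, the passage through the latent status $D_{g,t}^{\ast}$ is a routine conditioning argument once the decomposition is in place.
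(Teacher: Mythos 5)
Your proposal is correct and follows essentially the same route as the paper: Frisch--Waugh--Lovell under the sharp design, substitution of $Y_{i,g,t}=Y_{i,g,t}(0)+\Delta D_{i,g,t}^{\ast}$, conditioning on $\mathbf{D}$, and elimination of the baseline term via Assumptions \ref{A3}--\ref{A5}. Your additive decomposition $\mathrm{E}[Y_{g,t}(0)\mid\mathbf{D}]=\lambda_g+\mu_t$ killed by orthogonality of $\varepsilon_{g,t}$ to the dummies is just a repackaging of the paper's double-differencing step (subtracting the $g=1$ and $t=1$ anchors), which rests on the identical normal-equation identities $\sum_{t}N_{g,t}\varepsilon_{g,t}=0$ and $\sum_{g}N_{g,t}\varepsilon_{g,t}=0$.
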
 

The above result illustrates that even under the restrictive assumption of homogeneous treatment effects, misclassification generally introduces bias into the TWFE estimator. Whether this bias results in attenuation or amplification depends on the behavior of the misclassification probability $\mathrm{Pr}(D_{i,g,t}^{*}=1\mid \mathbf{D})$.

In the subsequent subsections, we extend this analysis to show that misclassification induces bias not only in the TWFE estimand under homogeneous effects, but also in more general settings with heterogeneous treatment effects, even when considering estimands specifically designed for the latter scenario.

\subsection{ATE of Observed Switchers}

To formalize the source of bias in what follows, we focus on a specific pattern of misclassification that arises in the observed untreated period(s) prior to treatment adoption. This form of measurement error is common in applied work when the exact timing of treatment is not directly observed due to data limitations. In such cases, treatment is often coded using proxy rules that may lag the true policy change.

For example, in a study of the causal effects of capital punishment reform on jury behavior, \citet{Bindler2018} define the “treatment year” as the first year with zero recorded death sentences, while acknowledging that the underlying legal change may have occurred earlier. In a staggered adoption setting, this type of coding can generate systematic misclassification of pre-treatment periods for treated groups, even when treatment status is correctly measured once adoption is observed. The following assumption formalizes this structure of misclassification around the time of observed treatment adoption.

\begin{assumption}
	[Patterns of Misclassification]\label{A7} For each $(i,g, t) \in\{1,\cdots,N_{g,t}\}\times \{1, \ldots, G\} \times\{1, \ldots, T\}$ and $(j,g', t') \in\{1,\cdots,N_{g',t'-1}\}\times \{1, \ldots, G\} \times\{2, \ldots, T\}$:
	\begin{enumerate}
	\item	\(\mathrm{Pr}\left(D_{i,g',t}^* = 1 \mid D_{g',t} = 1\right) = 1\).
	\item \(\mathrm{Pr}\left(D_{j,g',t'-1}^* = 0 \mid D_{g',t'} = 0,D_{g',t'-1} = 0\right) = 1\).
	\end{enumerate}
\end{assumption}

Assumption \ref{A7} requires that the unit-level observed treatment status is correctly classified in two situations:
 (i) periods in which a group is observed to receive treatment, and (ii) observed untreated periods that are followed by another observed untreated period. This assumption is a simplification made for tractability, and restricts misclassification to periods immediately preceding observed treatment. It can be relaxed in a straightforward manner to more general settings.\footnote{Note however that this formulation allows for misclassification to occur in the comparison group during the ``treatment year", which is particularly relevant when using ``not-yet-treated" groups as controls. For instance, suppose a group is observed to be untreated in periods $t-1$ and $t$ (${D}_{g, t-1}=0$, ${D}_{g, t}=0$), but receives treatment in period $t+1$ (${D}_{g, t+1}=1$). Then, misclassification may arise in period $t$.}  In the context of \citet{Bindler2018}, this assumption implies that units from the year immediately preceding the ``treatment year,'' defined as the first year with zero death sentences, may be misclassified due to uncertainty about the timing of the legal reform, whereas units from all other years are correctly classified.

Let 
\(
N_S := \sum_{(g,t):\, t\geq 2,\; D_{g,t} > D_{g,t-1}} N_{g,t}
\)
denote the total number of observations in observed switching-to-treatment cells.  
In staggered adoption designs, this is the only type of observed switching. To keep the analysis tractable, we restrict attention to this staggered setting and do not consider more general switching patterns, although such extensions are conceptually straightforward at the cost of more involved assumptions and arguments.

We begin by defining a first causal parameter of interest:
\begin{align*}
	\delta^{S}
	&:=\mathrm{E}\!\left[\frac{1}{N_{S}}\sum_{(i,g,t):\, t\geq 2,\; D_{g,t}\neq D_{g,t-1}}
	\left(Y_{i,g,t}(1)-Y_{i,g,t}(0)\right)\right] \\
	&=\mathrm{E}\!\left[\frac{1}{N_{S}}\sum_{(i,g,t):\, t\geq 2,\; D_{g,t}>D_{g,t-1}}
	\left(Y_{i,g,t}(1)-Y_{i,g,t}(0)\right)\right],
\end{align*}
where the second equality follows from Assumption \ref{A6}.  The parameter $\delta^{S}$ represents the average treatment effect for units in the \emph{observed} switching-to-treatment cells, that is, periods in which a group is observed to initiate treatment, aggregated across all switching periods among eventually treated groups. However, under the misclassification patterns described in Assumption \ref{A7}, some observed switching cells coincide with the true treatment adoption period, whereas others correspond to periods in which treatment had already been in place but is first recorded as starting in the data.
 Consequently, $\delta^{S}$ averages treatment effects from a mixture of true switchers and units that actually switched to treatment earlier than observed. While we show that the estimator proposed by \citet{CH2020} is generally biased in this setting, we also demonstrate how it can be adjusted to account for misclassification. Finally, in Subsection \ref{ssec:trueswitch}, we introduce the ATE of \emph{true} switchers, an arguably more meaningful estimand, which we aim to identify and estimate in Section \ref{sec:IdentTrueSwitch}.

Before introducing the estimator for \(\delta^{S}\), we impose an additional assumption on the existence of comparison groups, following \citet{CH2020}:

\begin{assumption}
	[Existence of Comparison Groups] \label{A8}
For each  $t \in\{2, \ldots, T\}$ and $g \in\{1, \ldots, G\}$ such that $$D_{g, t-1}=0, \ D_{g, t}=1,$$
 there exists $g^{\prime} \in\{1, \ldots, G\}$ with $g^{\prime} \neq g$ such that $$D_{g^{\prime}, t-1}=D_{g^{\prime}, t}=0.$$\end{assumption}

Let \( N_{1,0,t} := \sum_{g : D_{g, t} = 1, D_{g, t-1} = 0} N_{g,t} \) and \( N_{0,0,t} := \sum_{g : D_{g, t} = 0, D_{g, t-1} = 0} N_{g,t} \) denote the number of observations in, respectively, the observed switching-to-treatment cells and the comparison cells at time \( t \). Define the estimator
\begin{align}
\widehat{\operatorname{DID}} := \frac{1}{N_S} \sum_{t=2}^T N_{1,0,t}  \widehat{\operatorname{DID}}_{t}, \label{DIDESTIMATOR}
\end{align}
where
\begin{align*}
\widehat{\operatorname{DID}}_{t}
:= \frac{1}{N_{1,0,t}} \sum_{g : D_{g,t} = 1, D_{g,t-1} = 0} N_{g,t} \left(Y_{g,t} - Y_{g,t-1}\right)
- \frac{1}{N_{0,0,t}} \sum_{g : D_{g,t} = 0, D_{g,t-1} = 0} N_{g,t} \left(Y_{g,t} - Y_{g,t-1}\right),
\end{align*}
 is set to zero when there is no switching-to-treatment group (\(D_{g, t}=1\), \(D_{g, t-1}=0\)), or no comparison group  (\(D_{g, t}=0, D_{g, t-1}=0\)). Intuitively, $\widehat{\mathrm{DID}}$ first computes a time-specific difference-in-differences estimator $\widehat{\mathrm{DID}}_{t}$ by comparing outcome changes between observed switching and comparison groups, and then aggregates these two-by-two comparisons across all periods. In staggered adoption designs, this estimator coincides with the estimator $\widehat{\mathrm{DID}}_{M}$ proposed by \citet{CH2020}.

Heuristically, the DID estimator based on the \emph{observed} treatment status may fail to identify the ATE of units in the observed switching-to-treatment cells. This is because, in staggered adoption designs, the identifying parallel trends assumption concerns only the evolution of the potential nontreatment outcome \( Y_{i, g, t}(0) \). However, under misclassification, the observed just-before-switching cells are contaminated by a mix of potential nontreatment and treatment outcomes. A similar issue arises in the comparison cells.

The following proposition formalizes this argument by characterizing the DID estimand
$\mathrm{E}\!\left[\widehat{\mathrm{DID}}\right]$ and decomposing it into three components:
(i) the ATE for units in the observed switching-to-treatment cells,
(ii) a bias term arising from misclassified units that are already treated in the observed just-before-switching cells, and
(iii) an additional bias term due to misclassified, actually treated units in the comparison cells. 

\begin{proposition}[Characterizing DID estimand]\label{PROPESTIMAND}
Suppose Assumptions \ref{A1}-\ref{A8} hold. Then, the DID estimand $\mathrm{E}\left[\widehat{\mathrm{DID}}\right]$ can be decomposed as:
\small\begin{align}
\mathrm{E}\left[\widehat{\mathrm{DID}}\right]   &= \delta^S- \mathrm{E}\left[\frac{1}{N_{S}} \sum_{(g,t):t\geq 2,  D_{g, t}=1, D_{g, t-1}=0}\sum_{i=1}^{N_{g,t}} \frac{N_{g,t}}{N_{g,t-1}}D_{i,g, t-1}^*\left(Y_{i, g, t-1}(1)-Y_{i, g, t-1}(0)\right)\right]\notag\\
 &\qquad\ - \mathrm{E}\left[\frac{1}{N_{S}}\sum_{(g,t):t\geq 2, D_{g, t}=0, D_{g, t-1}=0}\sum_{i=1}^{N_{g,t}}  \frac{N_{1,0,t}}{N_{0,0,t}} D_{i,g,t}^{\ast} \left(Y_{i, g, t}(1)-Y_{i, g, t}(0)\right)\right]\label{EQ:PROP2b}.
\end{align}\normalsize\end{proposition}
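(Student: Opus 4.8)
The plan is to substitute the observation rule $Y_{i,g,t}=D_{i,g,t}^{*}Y_{i,g,t}(1)+(1-D_{i,g,t}^{*})Y_{i,g,t}(0)$ into each two-by-two building block $\widehat{\mathrm{DID}}_t$ and track exactly which cells are contaminated by treated potential outcomes where the parallel-trends logic expects only $Y(0)$. First I would use Assumption \ref{A7} to pin down the latent status in the four relevant cells. Part (1) forces $D_{i,g,t}^{*}=1$ for the observed switchers at time $t$, so $Y_{g,t}=\tfrac{1}{N_{g,t}}\sum_i Y_{i,g,t}(1)$ there; part (2), applied with $t'=t$ to the comparison groups (which satisfy $D_{g,t}=D_{g,t-1}=0$), forces $D_{i,g,t-1}^{*}=0$, so $Y_{g,t-1}=\tfrac{1}{N_{g,t-1}}\sum_i Y_{i,g,t-1}(0)$ for those groups. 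The two cells left uncontrolled by Assumption \ref{A7} are the switchers' just-before period $t-1$ and the comparison groups' current period $t$; these are precisely the cells that generate the two bias terms.

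Second, I would carry out the algebraic decomposition of each $\widehat{\mathrm{DID}}_t$. In the switchers' block I would write $Y_{i,g,t}(1)=Y_{i,g,t}(0)+\left(Y_{i,g,t}(1)-Y_{i,g,t}(0)\right)$ to split off the treatment effect, and in the period-$(t-1)$ term expand $Y_{i,g,t-1}$ via the observation rule so that the residual $D_{i,g,t-1}^{*}\left(Y_{i,g,t-1}(1)-Y_{i,g,t-1}(0)\right)$ appears. In the comparison block the only contaminated cell is period $t$, producing the residual $D_{i,g,t}^{*}\left(Y_{i,g,t}(1)-Y_{i,g,t}(0)\right)$. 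Using $\tfrac{1}{N_{g,t}}\sum_i Y_{i,g,t}(0)=Y_{g,t}(0)$ and the analogous identity at $t-1$, this regroups $\widehat{\mathrm{DID}}_t$ into (i) the average treatment effect of the switchers at $t$, (ii) a ``clean'' trend contrast $\tfrac{1}{N_{1,0,t}}\sum_{\text{sw}}N_{g,t}\left(Y_{g,t}(0)-Y_{g,t-1}(0)\right)-\tfrac{1}{N_{0,0,t}}\sum_{\text{comp}}N_{g,t}\left(Y_{g,t}(0)-Y_{g,t-1}(0)\right)$, and (iii) the two misclassification residuals. Aggregating with the weights $N_{1,0,t}/N_S$ as in \eqref{DIDESTIMATOR} collapses the double sums over switching and comparison periods into the single index sets $\{D_{g,t}=1,D_{g,t-1}=0\}$ and $\{D_{g,t}=0,D_{g,t-1}=0\}$ appearing in the statement, and reproduces the factors $N_{g,t}/N_{g,t-1}$ and $N_{1,0,t}/N_{0,0,t}$.

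The last and main step is to take expectations and show that the clean trend contrast (ii) vanishes. Here I would condition on $\mathbf{D}$: given $\mathbf{D}$, the group memberships and all weights $N_{g,t},N_{1,0,t},N_{0,0,t}$ are fixed, so by iterated expectations it suffices to evaluate $\mathrm{E}\left[Y_{g,t}(0)-Y_{g,t-1}(0)\mid\mathbf{D}\right]$. Assumption \ref{A3} reduces the conditioning set to $\mathbf{D}_g$, Assumption \ref{A4} removes the conditioning entirely, and Assumption \ref{A5} makes the resulting trend common across all groups; since $\tfrac{1}{N_{1,0,t}}\sum_{\text{sw}}N_{g,t}=\tfrac{1}{N_{0,0,t}}\sum_{\text{comp}}N_{g,t}=1$, the switcher and comparison averages of this common trend cancel. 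The treatment-effect term (i) then yields $\delta^S$ by definition, and the two residuals give the stated bias terms. I expect the delicate point to be the bookkeeping in this conditioning argument: which groups are switchers versus comparisons, and hence the summation ranges and the weights, are themselves $\mathbf{D}$-measurable, so the parallel-trends cancellation is legitimate only \emph{after} conditioning on $\mathbf{D}$ and invoking Assumptions \ref{A3}--\ref{A5} in that order. Assumption \ref{A8} is used throughout to guarantee that each active $\widehat{\mathrm{DID}}_t$ has a nonempty comparison group, so that no term is spuriously set to zero.
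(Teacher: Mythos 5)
Your proposal is correct and follows essentially the same route as the paper's proof: the paper also substitutes the observation rule $Y_{i,g,t}=D^{*}_{i,g,t}\left(Y_{i,g,t}(1)-Y_{i,g,t}(0)\right)+Y_{i,g,t}(0)$, conditions on $\mathbf{D}$ so that group memberships and weights are fixed, invokes Assumptions \ref{A3}--\ref{A5} in exactly your order to reduce the $Y(0)$-trend to a common term that cancels between switcher and comparison averages, and then uses Assumption \ref{A7} to clean the switchers' period-$t$ cells and the comparison groups' period-$(t-1)$ cells, leaving precisely your two contaminated cells as the bias terms. The only cosmetic difference is that the paper first derives a generic expression for $\mathrm{E}\left[Y_{g,t}-Y_{g,t-1}\mid\mathbf{D}\right]$ and then specializes it to the two blocks, whereas you identify the clean versus contaminated cells up front; the algebra and the role of Assumption \ref{A8} (and of the convention $\widehat{\operatorname{DID}}_{t}=0$ when no switcher exists) are identical.
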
 
The first term in \eqref{EQ:PROP2b} corresponds to $\delta^S$, the ATE for units in the observed switching-to-treatment cells. 
 The second term captures bias arising from units that are already treated in period $t-1$ but are misclassified as untreated in the observed just-before-switching cells.  
It disappears if such misclassification is absent, i.e.,  ${D}_{i,g,t-1}^*=0$ for all $(i,g, t) \in\{1, \ldots, N_{g,t}\} \times\{1, \ldots, G\} \times\{2, \ldots, T\}$  such that $D_{i,g, t-1}=0$ and $ D_{i,g, t}=1$. To facilitate interpretation, assume $\frac{N_{g,t}}{N_{g,t-1}}=1$ for all $(g, t) \in\{1, \ldots, G\} \times\{2, \ldots, T\}$, a requirement that is satisfied when the data are a balanced panel.  This bias can be interpreted as the \textit{reweighted} ATE of the misclassified, actually treated units in the observed  just-before-switching cells, 
$$\mathrm{E}\left[w\left(\mathbf{D}\right)\frac{1}{N_{1,1;1,0}} \sum_{(i,g,t):t\geq 2, D_{i,g, t}^*=1, D_{i,g, t-1}^*=1, D_{i,g, t}=1, D_{i,g, t-1}=0} Y_{i, g, t-1}(1)-Y_{i, g, t-1}(0)\right],$$
where $N_{1,1;1,0}:=\sum_{(i,g,t):t\geq 2}\mathbb{I}\left\{D_{i,g, t}^*=1, D_{i,g, t-1}^*=1,D_{i,g, t}=1, D_{i,g, t-1}=0\right\}$, and $w\left(\mathbf{D}\right):=\frac{N_{1,1;1,0}}{N_{S}}$ is the weighting factor and between zero and one with probability one. The third term in \eqref{EQ:PROP2b} represents bias arising from misclassified, actually treated units in the comparison cells associated with the observed switching-to-treatment groups. This term disappears if $D_{i,g,t}^*=0$ for all $(i,g,t)$ such that $D_{g,t-1}=0$ and $D_{g,t}=0$. While it is also related to treatment effects for misclassified treated units, its interpretation is less transparent due to the more complex weighting structure, and we therefore do not pursue it further.\footnote{If misclassification extends beyond periods immediately preceding observed switching, an additional bias term analogous to the third term in \eqref{EQ:PROP2b} arises for period $t-1$.}

\subsubsection{Illustration Example}

 For illustration, consider a stylized scenario with a balanced panel satisfying $N_{g,t}=N_{g,t-1}$ for all $(g, t) \in\{1, \ldots, G\} \times\{2, \ldots, T\}$ and treatment effects are homogeneous across groups and over time: $${Y}_{i,g, t}(1)-{Y}_{i,g, t}(0)=\Delta\in \mathbb{R},$$ for all $(i,g, t) \in\{1, \ldots, N_{g,t}\} \times\{1, \ldots, G\} \times\{2, \ldots, T\}$. Suppose misclassification is homogeneous within all observed  just-before-switching cells:
$$\mathrm{E}\left[{D}_{i,g, t-1}^*\mid D_{i,g, t}=1, D_{i,g, t-1}=0, \mathbf{D}_{-(i,g,t,t-1)}\right]=p \in \mathbb{R},$$
for all $(i,g, t) \in\{1, \ldots, N_{g,t}\} \times\{1, \ldots, G\} \times\{2, \ldots, T\}$, where $$\mathbf{D}_{-(i,g,t,t-1)}:=\left({D}_{i,g, t}\right)_{(i,g, t) \in\{1, \ldots, N_{g,t}\} \times\{1, \ldots, G\} \times\{1, \ldots, T\}\setminus \{(i,g,t-1),(i,g,t)\}}$$ denotes the observed treatment status of all units other than $(i,g,t-1)$ and $(i,g,t)$. Also, for simplicity, we rule out misclassification in any comparison cells for the observed switching-to-treatment cells, 
$$\mathrm{E}\left[{D}_{i,g, t}^*\mid D_{i,g, t}=0, D_{i,g, t-1}=0, \mathbf{D}_{-(i,g,t,t-1)}\right]=0,$$
for all $(i,g, t) \in\{1, \ldots, N_{g,t}\} \times\{1, \ldots, G\} \times\{2, \ldots, T\}$. Then, the second term of the decomposition formula (\ref{EQ:PROP2b}) becomes
\begin{align*}
     &\quad\  \mathrm{E}\left[\frac{1}{N_{S}} \sum_{(g,t):t\geq 2, D_{g, t}=1, D_{g, t-1}=0} \frac{N_{g,t}}{N_{g,t-1}}\sum_{i=1}^{N_{g,t}} D_{i,g,t-1}^*\left(Y_{i, g, t-1}(1)-Y_{i, g, t-1}(0)\right)\right]\\
& =    \Delta\mathrm{E}\left[\frac{1}{N_{S}} \sum_{(g,t):t\geq 2, D_{g, t}=1, D_{g, t-1}=0}\sum_{i=1}^{N_{g,t}} \mathrm{E}\left[{D}_{i,g,t-1}^*\mid \mathbf{D}\right]\right]\\
& =    \Delta\mathrm{E}\left[\frac{1}{N_{S}} \sum_{(g,t):t\geq 2, D_{g, t}=1, D_{g, t-1}=0} \sum_{i=1}^{N_{g,t}}\mathrm{E}\left[{D}_{i,g, t-1}^*\mid D_{i,g, t}=1, D_{i,g, t-1}=0, \mathbf{D}_{-(i,g,t,t-1)}\right]\right]\\
     &=\Delta p,
\end{align*}\normalsize
while the last term disappears, so the bias boils down to attenuation bias: 
$$\mathrm{E}\left[\widehat{\mathrm{DID}}\right]=\Delta(1-p).$$

\subsection{ATE of True Switchers}\label{ssec:trueswitch}

As an alternative estimand, we also consider the ATE for \emph{true} (or \emph{actual}) switching units in the period in which treatment is first received. We refer to this parameter as the \emph{ATE of true switchers}. To define it, recall that the unobserved switching indicator $S^{*}_{i,g,t}$ equals one if unit $i$ in group $g$ truly switches into treatment in period $t$, that is, if $D^{*}_{i,g,t}=1$ and $D^{*}_{i,g,t-1}=0$, and equals zero otherwise.

Let $N_{S^*}:=\sum_{(i,g,t): t\geq2} \mathbb{I}\left\{ S^{*}_{i,g,t}=1\right\}$ denote the number of true switching units. The  ATE of true switchers is then given by:
\begin{align*}
	\delta^{S^{*}}&:=\mathrm{E}\left[\frac{1}{N_{S^*}}\sum_{(i, g,t):  t\geq 2, S_{i,g, t}^* =1} \left({Y}_{i, g, t}(1)-{Y}_{i, g, t}(0)\right)\right].
	\end{align*}
As noted in the introduction, $\delta^{S^{*}}$ captures the average treatment effect for units at the time they \emph{actually} initiate treatment, aggregated across all switching periods among eventually treated groups. This parameter is particularly relevant in applications where the initial impact of treatment is of primary policy interest.

We obtain the following result for identification of the ATE of true switchers.

\begin{proposition}[DID for True Switchers]\label{PROPESTIMANDTRUE}
Suppose Assumptions \ref{A1}-\ref{A8} hold. Assume that for every \(t\geq 2\) and $g\in\{1,\ldots,G\}$ such that $D_{g,t}>D_{g,t-1}$, it holds for all $(i,j)\in\{1,\ldots,N_{g,t}\}\times \{1,\ldots,N_{g,t-1}\}$ that:
\[
\mathrm{Pr}\left({S}_{j,g,t-1}^{\ast}=1\mid D_{g,t} = 1, D_{g,t-1} = 0\right)+\mathrm{Pr}\left({S}_{i,g,t}^{\ast}=1\mid D_{g,t} = 1, D_{g,t-1} = 0 \right)=1.
\] 
On the contrary, for all \(t\geq 2\) and $g\in\{1,\ldots,G\}$ such that $D_{g,t}=D_{g,t-1}$, it holds for all $(i,j)\in\{1,\ldots,N_{g,t}\}\times \{1,\ldots,N_{g,t-1}\}$ that:
\[
\mathrm{Pr}\left({S}_{j,g,t-1}^{\ast}=1\mid D_{g,t} = D_{g,t-1} \right)+\mathrm{Pr}\left({S}_{i,g,t}^{\ast}=1\mid D_{g,t} = D_{g,t-1}\right)=0.
\]  Then, the DID estimand $\mathrm{E}\left[\widehat{\mathrm{DID}}\right]$ can be decomposed as:
\begin{align}
 \mathrm{E}\left[\widehat{\mathrm{DID}}\right]   &=\delta^{S^{\ast}} \notag\\
&\quad +\mathrm{E}\left[\left(\frac{1}{N_{S}}-\frac{1}{N_{S^{\ast}}}\right) \sum_{(g,t):t\geq 2, D_{g, t}=1, D_{g, t-1}=0} \sum_{i=1}^{N_{g,t}} D^{\ast}_{i,g, t} S_{i,g,t}^{\ast}\left(Y_{i, g, t}(1)-Y_{i, g, t}(0)\right)	\right]\notag\\
  &\quad + \mathrm{E}\left[\frac{1}{N_{S}} \sum_{(g,t):t\geq 2, D_{g, t}=1, D_{g, t-1}=0} \sum_{i=1}^{N_{g,t}} D^{\ast}_{i,g, t}  \left(1-S_{i,g,t}^{\ast}\right)\left(Y_{i, g, t}(1)-Y_{i, g, t}(0)\right)\right]\notag\\
 &\quad - \mathrm{E}\left[\frac{1}{N_{S}}\sum_{(g,t):t\geq 2, D_{g, t}=1, D_{g, t-1}=0}\sum_{i=1}^{N_{g,t}}\frac{N_{g,t}}{N_{g,t-1}}{D}_{i,g,t-1}^*\left(Y_{i, g, t-1}(1)-Y_{i, g, t-1}(0)\right)\right]\label{EQ:PROP2c}\\
&\quad +\mathrm{E}\left[\left(\frac{1}{N_{S}}-\frac{1}{N_{S^{\ast}}}\right)\sum_{(g,t):t\geq 2, D_{g, t}=0, D_{g, t-1}=0}\sum_{i=1}^{N_{g,t}} {D}_{i,g,t}^*S_{i,g,t}^{\ast}\left(Y_{i, g, t}(1)-Y_{i, g, t}(0)\right)\right]\notag\\
&\quad +\mathrm{E}\left[\frac{1}{N_{S}}\sum_{(g,t):t\geq 2, D_{g, t}=0, D_{g, t-1}=0}\sum_{i=1}^{N_{g,t}} {D}_{i,g,t}^*\left(1-S_{i,g,t}^{\ast}\right)\left(Y_{i, g, t}(1)-Y_{i, g, t}(0)\right)\right]\notag\\
&\quad -\mathrm{E}\left[\frac{1}{N_{S}}\sum_{(g,t):t\geq 2, D_{g, t}=0, D_{g, t-1}=0}\sum_{i=1}^{N_{g,t}}\left(\frac{N_{1,0,t}}{N_{0,0,t}}-1\right) {D}_{i,g,t}^*\left(Y_{i, g, t}(1)-Y_{i, g, t}(0)\right)\right].\notag
\end{align}\normalsize\end{proposition}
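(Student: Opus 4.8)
The plan is to build directly on the decomposition in Proposition \ref{PROPESTIMAND}, which already expresses $\mathrm{E}[\widehat{\mathrm{DID}}]$ in terms of $\delta^{S}$ and two bias terms, and to reorganize its right-hand side until $\delta^{S}$ is replaced by $\delta^{S^{\ast}}$. The conceptual gap between the two estimands is twofold: $\delta^{S^{\ast}}$ is normalized by the count of \emph{true} switchers $N_{S^{\ast}}$ rather than by the count of \emph{observed} switchers $N_{S}$, and it averages treatment effects only over units with $S^{\ast}_{i,g,t}=1$ rather than over all units in the observed switching cells. Accordingly, the engine of the proof is the pointwise identity $D^{\ast}_{i,g,t}=D^{\ast}_{i,g,t}S^{\ast}_{i,g,t}+D^{\ast}_{i,g,t}(1-S^{\ast}_{i,g,t})$, which splits each treated unit into an on-time true switcher and an already-treated unit, together with the renormalization identity $\tfrac{1}{N_{S}}=\tfrac{1}{N_{S^{\ast}}}+\bigl(\tfrac{1}{N_{S}}-\tfrac{1}{N_{S^{\ast}}}\bigr)$.

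First I would apply these two identities to the leading term $\delta^{S}$. On each observed switching cell Assumption \ref{A7} gives $D^{\ast}_{i,g,t}=1$, so $\delta^{S}$ splits into an on-time-switcher piece $\tfrac{1}{N_{S}}\sum_{\text{sw}}\sum_{i}D^{\ast}_{i,g,t}S^{\ast}_{i,g,t}(\cdots)$ and an already-treated piece, the latter being exactly Term B. Rewriting the on-time piece with the renormalization identity produces the correction Term A (carrying the factor $\tfrac{1}{N_{S}}-\tfrac{1}{N_{S^{\ast}}}$) plus the $\tfrac{1}{N_{S^{\ast}}}$-weighted sum of on-time switchers, which is the switching-cell part of $\delta^{S^{\ast}}$. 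The second bias term of Proposition \ref{PROPESTIMAND} (the just-before-switching cells) carries over unchanged as Term C, after recording that, under Assumptions \ref{A6}--\ref{A7} and the absorbing nature of the latent status, a unit with $D^{\ast}_{i,g,t-1}=1$ in such a cell is necessarily a true switcher at $t-1$. The third bias term (the comparison cells) is reorganized by first writing $\tfrac{N_{1,0,t}}{N_{0,0,t}}=1+\bigl(\tfrac{N_{1,0,t}}{N_{0,0,t}}-1\bigr)$, which peels off Term F, and then applying the $S^{\ast}$-split and the renormalization to the remaining unit-weight part, producing Term D, Term E, and the comparison-cell contribution to the true-switcher enumeration.

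The role of the two new switching-probability assumptions is to guarantee that, after this reorganization, the $\tfrac{1}{N_{S^{\ast}}}$-weighted sums over on-time switchers and over true switchers attached to the comparison cells together exhaust \emph{all} true switchers, so that they genuinely sum to $\delta^{S^{\ast}}$. The condition $\mathrm{Pr}(S^{\ast}_{j,g,t-1}=1\mid D_{g,t}=1,D_{g,t-1}=0)+\mathrm{Pr}(S^{\ast}_{i,g,t}=1\mid D_{g,t}=1,D_{g,t-1}=0)=1$ encodes that every observed switch corresponds to exactly one true switch, occurring either on time at $t$ or one period early at $t-1$, and is what ties $N_{S^{\ast}}$ to $N_{S}$ and pins down the split between Term A and the $\delta^{S^{\ast}}$ piece; the complementary condition for cells with $D_{g,t}=D_{g,t-1}$ rules out true switches in genuinely unchanged cells, so that no true switcher is either omitted from or double counted in the enumeration.

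I expect the main obstacle to be precisely this bookkeeping of where true switchers reside and how the normalization is carried through consistently. The difficulty is that a single not-yet-treated cell plays two roles at once — it is the just-before-switching cell for its own (observed) adoption and, simultaneously, it may serve as a comparison cell for another group's switch — so the same misclassified, truly-switched units surface in several summations with different reweighting factors $\tfrac{N_{g,t}}{N_{g,t-1}}$ and $\tfrac{N_{1,0,t}}{N_{0,0,t}}$. Keeping the summation ranges aligned across these roles, tracking the sign with which each comparison-cell contribution enters relative to its appearance inside $\delta^{S^{\ast}}$, and ensuring that each true switcher contributes exactly once to $\delta^{S^{\ast}}$, is the delicate part; the two switching-probability assumptions are the device that makes this accounting close.
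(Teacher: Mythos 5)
Your proposal is correct and follows essentially the same route as the paper's own proof: starting from the decomposition in Proposition \ref{PROPESTIMAND}, splitting via $D^{\ast}_{i,g,t}=D^{\ast}_{i,g,t}S^{\ast}_{i,g,t}+D^{\ast}_{i,g,t}\left(1-S^{\ast}_{i,g,t}\right)$ and $\tfrac{1}{N_{S}}=\tfrac{1}{N_{S^{\ast}}}+\bigl(\tfrac{1}{N_{S}}-\tfrac{1}{N_{S^{\ast}}}\bigr)$, peeling off $\tfrac{N_{1,0,t}}{N_{0,0,t}}=1+\bigl(\tfrac{N_{1,0,t}}{N_{0,0,t}}-1\bigr)$ in the comparison-cell term, and invoking the switching-probability assumptions to re-index the comparison-cell true switchers as the early switchers of the observed switching groups so that the two $\tfrac{1}{N_{S^{\ast}}}$-weighted pieces exactly reconstitute $\delta^{S^{\ast}}$. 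The bookkeeping concern you flag is precisely the step the paper resolves by that re-indexing identity.
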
 
Proposition \ref{PROPESTIMANDTRUE} provides a detailed decomposition of the bias of the DID estimand relative to the ATE of true switchers. The first two terms following $\delta^{S^{*}}$ arise because, in observed switching periods, some individuals may in fact have already switched into treatment in the preceding period and therefore are not true switchers at time $t$. The subsequent terms reflect analogous contamination in earlier periods that are recorded as pre-treatment, as well as in the comparison groups, where some individuals may actually switch into treatment shortly after the comparison period. Together, these terms illustrate how misalignment between observed and true switching times distorts the DID estimand.

Finally, note that the framework above can equivalently be interpreted as capturing treatment anticipation rather than misclassification. In particular, $Y_{i,g,t-1}(1)$ may represent outcomes realized in period $t-1$ if treatment exposure is anticipated prior to the recorded switch in period $t$. Focusing on anticipation in the period immediately preceding the observed switch for simplicity, Propositions \ref{PROPESTIMAND} and \ref{PROPESTIMANDTRUE} continue to apply, illustrating that anticipation effects also generate bias in the DID estimand.
\medskip

\noindent \textbf{Remark 1}: In the anticipation interpretation, the setup formally corresponds to a case of ``full anticipation,'' in the sense that anticipating individuals experience $Y_{i,g,t-1}(1)$ already in period $t-1$. Importantly, however, we place no restriction on the fraction of individuals who anticipate treatment. Moreover, a model with flexible anticipation probabilities and outcomes $Y_{i,g,t-1}(1)$ is observationally equivalent to an alternative model featuring ``fractional anticipation'' combined with different anticipation probabilities. For this reason, we do not view the full-anticipation formulation as restrictive.

\medskip

\noindent \textbf{Remark 2}: Similar to the case of anticipation, our results and methodology are also applicable in setups with spillover effects. For example, when a policy is implemented at the regional or state level, units in neighboring locations may experience treated potential outcomes $Y_{i,g,t}(1)$ even though their observed treatment status remains $D_{i,g,t}=0$. In such cases, bias expressions analogous to those in Propositions \ref{PROPESTIMAND} and \ref{PROPESTIMANDTRUE} can be derived. As an illustrative example, consider the legalization of recreational marijuana consumption for adults in the U.S., where different states adopted legalization in different years, such as New York in 2021. To the extent that some residents of neighboring non-legalizing states, such as Pennsylvania, can benefit from legal purchasing opportunities by crossing state borders, the DID estimator $\widehat{\mathrm{DID}}$ in \eqref{DIDESTIMATOR} will generally be biased relative to the ATE of observed switchers and the ATE of true switchers.
\medskip

\section{Estimation of  ATEs}\label{sec:IdentTrueSwitch}

In this section, we propose modified versions of the $\widehat{\mathrm{DID}}$ estimator that recover both the ATE of observed switchers, $\delta^{S}$, and the ATE of true switchers, $\delta^{S^{*}}$. We begin with estimation of the former.

\subsection{ATE of Observed Switchers}\label{ssec:ATEOS}

Let $N_{1,0,t}^{[1]}:=\sum_{g:  D_{g, t+1}=1, D_{g, t}=0} N_{g,t}$ and $N_{0,0,t}^{[1]}:=\sum_{g:  D_{g, t+1}=0, D_{g, t}=0} N_{g,t}$ and define
\begin{align}
\widehat{\mathrm{DID}}^{S}= \frac{1}{N_{S}} \sum_{t\geq2} \widehat{\mathrm{DID}}^{S}_{t},\label{DIDESTIMATORSTAR}
\end{align}
where \[
\widehat{\mathrm{DID}}^{S}_{t}=N_{1,0,t}\widehat{\operatorname{DID}}_{ t}+N_{1,0,t}\widehat{\operatorname{DID}}_{ t}^{*}+\frac{N_{1,0,t}^{[1]}N_{1,0,t}}{N_{0,0,t}}\widehat{\operatorname{DID}}_{ t}^{**}
\] 
with
\begin{align*}
 \widehat{\operatorname{DID}}_{ t}^{*} &=\frac{1}{N_{1,0,t}}\sum_{g: D_{g, t}=1, D_{g, t-1}=0} N_{g,t}\left(Y_{g, t-1}-Y_{g, t-2}\right) - \frac{1}{N_{0,0,t}}\sum_{g: D_{g, t}=0, D_{g, t-1}=0} N_{g,t}\left(Y_{g, t-1}-Y_{g, t-2} \right)\\
\widehat{\operatorname{DID}}_{ t}^{**} 
&=\frac{1}{N_{1,0,t}^{[1]}}\sum_{g: D_{g, t+1}=1, D_{g, t}=0} N_{g,t}\left(Y_{g, t}-Y_{g, t-1}\right) - \frac{1}{N_{0,0,t}^{[1]}}\sum_{g: D_{g, t+1}=0, D_{g, t}=0} N_{g,t}\left(Y_{g, t}-Y_{g, t-1} \right).
\end{align*}

As with $\widehat{\mathrm{DID}}_{t}$, each of the correction terms $\widehat{\mathrm{DID}}^{*}_{t}$ and $\widehat{\mathrm{DID}}^{**}_{t}$ is set to zero whenever the corresponding treatment or comparison group does not exist. In particular, we set $\widehat{\mathrm{DID}}^{*}_{2}=0$ and $\widehat{\mathrm{DID}}^{**}_{T}=0$, since observations prior to period $1$ and after period $T$ are unavailable.

Heuristically, the estimator \(\widehat{\mathrm{DID}}^{S}\) in (\ref{DIDESTIMATORSTAR}) corrects the original \(\widehat{\mathrm{DID}}\) estimator for the two sources of bias identified in Proposition \ref{PROPESTIMAND}, using appropriate sample counterparts. Specifically, \(\widehat{\operatorname{DID}}_{t}^{*}\) is a ``backward-looking correction'': it compares outcome changes from \(t-2\) to \(t-1\), to adjust for bias due to misclassified treated units in the treatment group at \(t-1\). In contrast, \(\widehat{\operatorname{DID}}_{t}^{**}\) is a ``forward-looking correction'': it compares outcome changes from \(t-1\) to \(t\) between units still observed as untreated at \(t\), but differing in whether they begin treatment at \(t+1\), to adjust for bias due to misclassified treated units in the comparison group at \(t\). Together, these corrections yield an unbiased estimator of the ATE of observed switchers, \(\delta^{S}\).

Since our correction relies on outcome information from both periods preceding and following the observed switch, we impose the following additional conditions:

 \begin{assumption}
	[No Misclassification in Never Treated] \label{A9}  For each $(i,g)\in \{1,\cdots,N_{g,T}\}\times \{1,\cdots, G\}$:  
	 $$\mathrm{Pr}\left(D_{i,g,T}^* = 0 \mid D_{g,T} = 0\right) = 1.
$$
\end{assumption}

 \begin{assumption}
	[No Observed Switch in Initial Period] \label{A10} For each $g\in \{1,\cdots, G\}$:  
	\[\Pr\left(D_{g, 1}= D_{g, 2}\right)=1.\]
\end{assumption}

In staggered adoption designs with the misclassification patterns misclassification patterns described in Assumption \ref{A7}, Assumption \ref{A9} ensures that  the never treated groups are always correctly classified. If a group indeed begins to receive treatment at \(T+1\), it could be misclassified at time \(T\), so that this condition is violated. In such cases, the researcher may address the issue by collecting data for one additional period, omitting the final period from analysis, or excluding the affected group (particularly, if the group is not central to the research question and sufficient comparison groups remain). On the other hand, Assumption \ref{A10} places a restriction on the data to rule out observed switches in the initial period. Combined with Assumption \ref{A7}, it ensures that no misclassification occurs at \(t=1\), which would otherwise be uncorrectable due to the absence of pre-period information. Assumption \ref{A10} rules out observed treatment switches in the initial period. When combined with Assumption \ref{A7}, it guarantees that no misclassification occurs at $t=1$, which would otherwise be uncorrectable due to the absence of pre-treatment information.

We now establish that \(\widehat{\mathrm{DID}}^{S}\) provides an unbiased estimator of \(\delta^{S}\):

\begin{proposition}[DID for Observed Switchers]\label{PROP6}
Suppose Assumptions \ref{A1}-\ref{A10} hold.  Then:
\begin{align*}
 \mathrm{E}\left[\widehat{\mathrm{DID}}^{S} \right]&= \delta^{S}.
\end{align*}\normalsize\end{proposition}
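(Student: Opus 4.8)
The plan is to exploit the linearity of the corrected estimator and reduce the claim to the bias decomposition already established in Proposition \ref{PROPESTIMAND}. Writing out definition (\ref{DIDESTIMATORSTAR}) and noting that the first summand of each $\widehat{\mathrm{DID}}^{S}_{t}$ reassembles the original estimator, one has
$$\widehat{\mathrm{DID}}^{S}=\widehat{\mathrm{DID}}+\frac{1}{N_{S}}\sum_{t\geq 2}N_{1,0,t}\widehat{\operatorname{DID}}_{t}^{*}+\frac{1}{N_{S}}\sum_{t\geq 2}\frac{N_{1,0,t}^{[1]}N_{1,0,t}}{N_{0,0,t}}\widehat{\operatorname{DID}}_{t}^{**}.$$
By Proposition \ref{PROPESTIMAND}, $\mathrm{E}[\widehat{\mathrm{DID}}]$ equals $\delta^{S}$ minus the two bias terms in (\ref{EQ:PROP2b}). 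It therefore suffices to show that the expectation of the backward-looking correction reproduces the second term of (\ref{EQ:PROP2b}) and that the expectation of the forward-looking correction reproduces the third term, so that the two corrections cancel the biases exactly.

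I would condition on $\mathbf{D}$ throughout and apply the law of iterated expectations at the end. The central preliminary step is a bookkeeping of the latent status $D_{i,g,t}^{*}$ implied by Assumption \ref{A7} under staggered adoption (Assumption \ref{A6}). For $\widehat{\operatorname{DID}}_{t}^{*}$ the treatment cells satisfy $D_{g,t}=1,D_{g,t-1}=0$: period $t-1$ is the only potentially misclassified one, whereas $D_{g,t-2}=0$ (staggered) together with $D_{g,t-1}=0$ forces $D_{i,g,t-2}^{*}=0$ by Assumption \ref{A7}(ii); the comparison cells $D_{g,t}=0,D_{g,t-1}=0$ are correctly classified at both $t-1$ and $t-2$ by the same argument. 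Substituting the observation rule and writing $Y_{g,t-1}=Y_{g,t-1}(0)+\frac{1}{N_{g,t-1}}\sum_{i}D_{i,g,t-1}^{*}(Y_{i,g,t-1}(1)-Y_{i,g,t-1}(0))$ while $Y_{g,t-2}=Y_{g,t-2}(0)$, the untreated trend $Y_{g,t-1}(0)-Y_{g,t-2}(0)$ has, for $t\geq 3$, a common conditional mean across all groups by Assumptions \ref{A3}--\ref{A5}. The differencing cancels it, using $\sum_{g:D_{g,t}=1,D_{g,t-1}=0}N_{g,t}=N_{1,0,t}$ and $\sum_{g:D_{g,t}=0,D_{g,t-1}=0}N_{g,t}=N_{0,0,t}$, which is precisely why the comparison block is weighted by $\tfrac{1}{N_{0,0,t}}$ and the whole correction by $N_{1,0,t}$. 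What survives is $\sum_{g:D_{g,t}=1,D_{g,t-1}=0}\frac{N_{g,t}}{N_{g,t-1}}\sum_{i}D_{i,g,t-1}^{*}(Y_{i,g,t-1}(1)-Y_{i,g,t-1}(0))$, keeping the latent-outcome product inside the conditional expectation; dividing by $N_{S}$, summing over $t$, and applying iterated expectations reproduces the second term of (\ref{EQ:PROP2b}).

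The forward correction is handled symmetrically. For $\widehat{\operatorname{DID}}_{t}^{**}$ the relevant cells are $D_{g,t+1}=1,D_{g,t}=0$, for which period $t$ is the just-before-switch period carrying the misclassification while $t-1$ is correctly classified untreated; the comparison cells $D_{g,t+1}=0,D_{g,t}=0$ are untreated at both $t$ and $t-1$ by Assumption \ref{A7}(ii). Cancelling the common untreated trend $Y_{g,t}(0)-Y_{g,t-1}(0)$ (again covered by Assumptions \ref{A3}--\ref{A5}, now for all $t\geq 2$) leaves $\frac{N_{1,0,t}}{N_{0,0,t}}\sum_{g:D_{g,t+1}=1,D_{g,t}=0}\sum_{i}D_{i,g,t}^{*}(Y_{i,g,t}(1)-Y_{i,g,t}(0))$. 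The key observation is that, within the comparison cells $\{D_{g,t}=0,D_{g,t-1}=0\}$ appearing in the third term of (\ref{EQ:PROP2b}), Assumption \ref{A7}(ii) forces $D_{i,g,t}^{*}=0$ unless $D_{g,t+1}=1$; hence the index set of genuinely misclassified comparison units coincides with $\{D_{g,t+1}=1,D_{g,t}=0\}$, and the expression matches the third term of (\ref{EQ:PROP2b}) term by term.

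Finally, I would dispose of the boundary periods. Assumption \ref{A10} rules out observed switches at $t=2$, so the backward correction (undefined for lack of a period $t-2$) is needed only for $t\geq 3$ and the $t=2$ contribution to the second bias term vanishes. Assumption \ref{A9} guarantees that groups observed untreated at $T$ are correctly classified, so the $t=T$ contribution to the third bias term is zero, consistent with the convention $\widehat{\operatorname{DID}}_{T}^{**}=0$. Collecting the three pieces yields $\mathrm{E}[\widehat{\mathrm{DID}}^{S}]=\delta^{S}-(\text{term 2})-(\text{term 3})+(\text{term 2})+(\text{term 3})=\delta^{S}$. I expect the main obstacle to be exactly this bookkeeping of latent statuses and weights: one must verify, case by case, that every non-switching period entering the two corrections is correctly classified (so that the parallel-trends cancellation applies to genuine untreated outcomes) and that the multiplicative factors align the surviving misclassification sums precisely with the bias terms of Proposition \ref{PROPESTIMAND}, including the degenerate boundary cases.
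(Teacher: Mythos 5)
Your proposal is correct and takes essentially the same route as the paper's own proof: you decompose $\widehat{\mathrm{DID}}^{S}$ into the original estimator plus the two corrections, invoke Proposition \ref{PROPESTIMAND} for the former, and then show, by conditioning on $\mathbf{D}$, tracking the latent statuses via Assumptions \ref{A6}--\ref{A7}, cancelling the common $Y(0)$ trend via Assumptions \ref{A3}--\ref{A5}, and disposing of the boundary periods $t=2$ and $t=T$ via Assumptions \ref{A10} and \ref{A9}, that the backward- and forward-looking corrections reproduce the two bias terms exactly. This matches the paper's argument step for step, including the key observation that the misclassified units in the comparison cells $\{D_{g,t}=0,D_{g,t-1}=0\}$ are precisely those in groups with $D_{g,t+1}=1$, $D_{g,t}=0$, so the forward correction's surviving sum coincides with the third bias term of (\ref{EQ:PROP2b}).
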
 

\medskip

\noindent \textbf{Remark 3}: Throughout the paper, we ignore the presence of additional observed control variables $\mathbf{X}_{i,gt}$. In fact, when $\mathbf{X}_{i,gt}$ is discrete with a finite number of support points, we can readily modify all assumptions to hold also conditional on $\mathbf{X}_{i,gt}$. Together with a suitable overlap condition \citep{Khan2010}, we may then construct the estimator $\widehat{\mathrm{DID}}^{S}$ for each value of the covariate vector to recover an estimate of the ATE of Observed Switchers conditional $\mathbf{X}_{i,11}$. Alternatively, we may also aggregate the estimates from each cell to obtain an estimate of the unconditional ATE of Observed Switchers $\delta^{S}$. When $\mathbf{X}_{i,gt}$ is continuous or discrete with infinite support, estimation is more involved \citep[e.g.,][]{Abadie2005}.
\medskip

\subsection{ATE of True Switchers}\label{EstimationATETRUE}

Since $\delta^{S^{*}}$, the ATE of true switchers, is arguably a more substantively meaningful estimand than $\delta^{S}$, we now propose an estimator for this second target parameter. Recall that $\delta^{S^{*}}$ is defined as
\[
\delta^{S^{\ast}}=\mathrm{E}\left[\frac{1}{N_{S^{\ast}}} \sum_{(i,g,t):t\geq 2,  S^{\ast}_{i,g, t}=1} \left(Y_{i, g, t}(1)-Y_{i, g, t}(0)\right)\right].
\]
 We construct an estimator of $\delta^{S^{\ast}}$ as:
\begin{align}
\widehat{\mathrm{DID}}^{S^{\ast}}&=  \frac{1}{\widehat{N}_{S^{\ast}}}\sum_{t\geq 2}\left(\widehat{\lambda}_{t-1}\widehat{\mathrm{DID}}^{S^{\dagger}}_{t-1}+\left(1-\widehat{\lambda}_{t-1}\right)\widehat{\mathrm{DID}}^{S}_t\right)
,\label{DIDESTIMATORDAG}
\end{align}
where  \[\widehat{\mathrm{DID}}^{S^{\dagger}}_{t-1}=N_{1,0,t-1}^{[1]}\widehat{\operatorname{DID}}_{t-1}^{**}+N_{1,0,t-1}^{[1]}\widehat{\operatorname{DID}}_{t-1}^{***} \quad \text{and} \quad 
\widehat{\lambda}_{t-1}\equiv \frac{\widehat{\operatorname{DID}}_{t-1}^{**} }{\widehat{\operatorname{DID}}_{t-1}^{**} +\widehat{\mathrm{DID}}_{t-1}^{***}},\quad
\]
and
\[
\widehat{N}_{S^{\ast}}\equiv \sum_{t\geq 2}\sum_{D_{g, t}=1, D_{g, t-1}=0}\left(\widehat{\lambda}_{t-1}N_{g,t-1} + (1-\widehat{\lambda}_{t-1})N_{g,t}\right),
\]
with the component:
\begin{align*}
\widehat{\mathrm{DID}}_{t-1}^{***}=&\frac{1}{N_{1,0,t-1}^{[1]}}\sum_{g: D_{g, t}=1, D_{g, t-1}=0} N_{g,t-1}\left(Y_{g, t}-Y_{g, t-1}\right)-\frac{1}{N_{1,1,t-1}^{[1]}}\sum_{g: D_{g, t}=1, D_{g, t-1}=1} N_{g,t-1}\left(Y_{g, t}-Y_{g, t-1}\right),
\end{align*}
while \(\widehat{\mathrm{DID}}^{S}_t\) and \(\widehat{\mathrm{DID}}_{t}^{**}\) are given in (\ref{DIDESTIMATORSTAR}). As before, we adopt the convention that $\widehat{\mathrm{DID}}^{***}_{t}$ is set to zero whenever any of the corresponding groups do not exist in the data. We also set $\widehat{\mathrm{DID}}^{**}_{1}=0$, and define $\widehat{\lambda}_{t-1}=0$ whenever both $\widehat{\mathrm{DID}}^{**}_{t-1}$ and $\widehat{\mathrm{DID}}^{***}_{t-1}$ equal zero.
  Heuristically, the estimator \(\widehat{\mathrm{DID}}^{S^{\ast}}\) in (\ref{DIDESTIMATORDAG}) aggregates treatment effect estimates from both the observed just-before-switching and switching-to-treatment periods, weighted by the estimated probability of a true switch occurring in each. Under the identifying restrictions introduced below, $\widehat{\mathrm{DID}}^{S^{\dagger}}_{t-1}$ estimates the (unscaled) ATE in period $t-1$ for units that truly switch in $t-1$ but are only observed to switch in $t$, whereas $\widehat{\mathrm{DID}}^{S}_{t}$ estimates the (unscaled) ATE in period $t$ for units whose switch is both actual and observed at time $t$. Moreover, the weight $\widehat{\lambda}_{t-1}$ estimates the share of the former group, while $1-\widehat{\lambda}_{t-1}$ estimates the share of the latter.


We impose the following additional assumptions.

\begin{assumption}[Patterns of Switch] \label{A11}    For each $(i,j,g, t) \in \{1, \ldots, N_{g,t}\} \times\{1, \ldots, N_{g,t-1}\} \times \{1, \ldots, G\} \times\{2, \ldots, T\}$:
\begin{enumerate}
	\item \(\mathrm{Pr}\left( S^{\ast}_{i,g,t} = 0 \mid D_{g,t} = 1,\ D_{g,t-1} = 1 \right)= 1\).
	\item \(\mathrm{Pr}\left( S^{\ast}_{j,g,t-1} = 0 \mid D_{g,t} = 0,\ D_{g,t-1} = 0 \right) = 1.\)
\end{enumerate}
 In addition,  for each $(i,g)\in \{1, \ldots, N_{g,T}\} \times\{1,\cdots, G\}$:  
	 $$\mathrm{Pr}\left(S_{i,g,T}^* = 0 \mid D_{g,T} = 0 \right) = 1.
$$
\end{assumption}

\begin{assumption}[Switcher Homogeneity] \label{A13}   For each $(i,j,g,t)\in\{1, \ldots, N_{g,t}\} \times\{1, \ldots, N_{g,t-1}\} \times\{1,\ldots,G\}\times \{2,\ldots,T\}$:
\begin{enumerate}
	\item \(\mathrm{Pr}\left({D}_{j,g,t-1}^{\ast}=1\mid D_{g,t} = 1,\ D_{g,t-1} = 0 \right)=\mathrm{Pr}\left({S}_{j,g,t-1}^{\ast}=1\mid D_{g,t} = 1,\ D_{g,t-1} = 0 \right)= \lambda_{t-1}\).
	\item \(\mathrm{Pr}\left( S^{\ast}_{i,g,t} = 1 \mid D_{g,t} = 1,\ D_{g,t-1} = 0 \right) +\mathrm{Pr}\left( S^{\ast}_{j,g,t-1} = 1 \mid D_{g,t} = 1,\ D_{g,t-1} = 0 \right)=1.\)
\end{enumerate}
\end{assumption}

\begin{assumption}[Treatment Effects Homogeneity] \label{A14} For each $(i,g,t)\in\{1,\cdots,N_{g,t}\}\times\{1,\ldots,G\}\times \{2,\ldots,T\}$, it holds that:
\begin{align*}
\mathrm{E}\left[Y_{i, g, t}(1)-Y_{i, g, t}(0)\mid D_{i,g,t}^{\ast}, S_{i,g,t}^{\ast},\mathbf{D}_g\right]=\mathrm{E}\left[Y_{i, g, t}(1)-Y_{i, g, t}(0)\right].
\end{align*}
\end{assumption}

Assumption \ref{A11} governs the behavior of the latent switching indicator $S^{*}_{i,g,t}$. It restricts true switching to be determined solely by the local information contained in $(D_{g,t-1},D_{g,t})$, rather than by the full observed treatment path of group $g$. In staggered adoption designs, Assumption \ref{A11} complements Assumption \ref{A7} by imposing coherent restrictions in terms of the unobserved actual switching indicator \(S^{\ast}_{i,g,t}\). Specifically, it rules out the possibility that a true switch, i.e., \(S^*_{i,g,t} = 1\), occurs within a run of treated or untreated periods, except at the first treated period or the last untreated period, respectively. The last part of the assumption parallels Assumption \ref{A9} and further rules out such a switch in the last untreated period among the never treated.

Assumption \ref{A13} imposes homogeneity on the switching process. The first part links the latent treatment status $D^{*}_{i,g,t-1}$ to the switching indicator $S^{*}_{i,g,t-1}$ and requires that, among groups observed to switch between periods $t-1$ and $t$, all units share the same probability $\lambda_{t-1}$ of having actually switched in the earlier period.\footnote{Recall that for notational simplicity, we restrict attention to misclassification or anticipation occurring only in the period immediately preceding the observed switch. The framework can be extended to allow for multiple pre-switch periods with distinct switching probabilities, at the cost of more involved notation.} The second part ensures that the total switching probability is fully allocated between the just-before-switching period and the observed switching period.

Finally, Assumption \ref{A14} restricts treatment effect heterogeneity. It requires that expected unit-level treatment effects do not depend on a unit’s actual treatment status or switching status, although they may still vary across individuals, groups, or time. Thus, the assumption limits only how treatment effect heterogeneity may correlate with latent treatment assignment and switching, rather than ruling out heterogeneity altogether.


To recover $\delta^{S^{\ast}}_{0}$ , we also require extensions of Assumptions \ref{A3}-\ref{A5} to potential outcomes under treatment:

\begin{assumption}[Independent Groups] \label{A15} The group-level vectors are mutually independent across \(g\):
$$\left(Y_{g,t}(0),Y_{g,t}(1),D_{g,t},(D_{i,g,t}^{\ast})_{i\in \{1,\cdots,N_{g,t}\}},\left(S_{i,g,t}^{\ast}\right)_{i\in \{1,\cdots,N_{g,t}\}}\right)_{1\leq t\leq T}$$ 
\end{assumption}

\begin{assumption}[Strong Exogeneity] \label{A16}
For each $(d,g, t) \in\{0,1\}\times \{1, \ldots, G\} \times\{2, \ldots, T\}$: $$\mathrm{E}\left[{Y}_{g, t}(d)-{Y}_{g, t-1}(d) \mid \mathbf{{D}}_g\right]=\mathrm{E}\left[{Y}_{g, t}(d)-{Y}_{g, t-1}(d)\right].$$\end{assumption}

\begin{assumption}[Parallel Trends] \label{A17}
For each  $(d,g,k, t) \in\{1, \ldots, G\} \times\{1, \ldots, G\} \times\{2, \ldots, T\}$: 
$$\mathrm{E}\left[{Y}_{g, t}(d)-{Y}_{g, t-1}(d)\right]=\mathrm{E}\left[{Y}_{k, t}(d)-{Y}_{k, t-1}(d)\right].$$
\end{assumption}

Assumption \ref{A15} parallels Assumption 3 from \citet{CH2020} and imposes an independence assumption across groups (but not homogeneity). It replaces Assumption \ref{A3} and extends it (through the independence) to outcomes in the presence of treatment $Y_{i,g,t}(1)$.   Similarly, Assumptions \ref{A16} and \ref{A17} also represent extensions of Assumptions  \ref{A4} and \ref{A5} to potential outcomes under treatment. They are needed to identify the (unobserved) share of true switchers in a given switching-to-treatment cell, which in turn relies on a comparison with groups that have already switched to treatment in earlier periods. In fact, since standard staggered adoption settings without misspecification do not require such assumptions, these conditions may be viewed as the ``price to pay'' for not observing the actual switching status. We note however that more general DiD set-ups that also allow for ``reverse switchers'', i.e. units switching status back from treatment to no treatment, also rely on Assumptions \ref{A15}-\ref{A17} for identification \citep[cf.][]{CH2020,F2023}.

Finally, since the exact period of switching-to-treatment is unknown at the individual level and	 we may only identify the fraction of switchers in a given period in an asymptotic framework, we focus on an estimator that is consistent for  $\delta^{S^{\ast}}$ in this section. To establish consistency, we also require a couple of technical conditions concerning limiting quantities:

\begin{assumption}
	[Existence of Comparison Groups - 2] \label{A18}
For each  $g \in\{1, \ldots, G\}$ and $t\in\{2,\ldots,T\}$ such that 
\[
\Pr\left(D_{g,t} = 1, D_{g,t-1} = 0\right) > 0,
\] 
implies that $\lim_{G\rightarrow\infty} \mathrm{E}\left[\frac{N_{1,0,t}}{G}\right] >
0$ and $\lim_{G\rightarrow\infty} \mathrm{E}\left[\frac{N_{0,0,t}}{G}\right] >
0$. Moreover, it also implies that $\lim_{G\rightarrow\infty} \mathrm{E}\left[\frac{N_{j,0,t}^{[1]}}{G}\right] >
0$ and $\lim_{G\rightarrow\infty} \mathrm{E}\left[\frac{N_{j,0,t-1}^{[1]}}{G}\right] >
0$ as well as $\lim_{G\rightarrow\infty} \mathrm{E}\left[\frac{N_{1,1,t-1}^{[1]}}{G}\right] >
0$. Finally, for at least one $t\in\{2,\ldots,T\}$, it holds that $\Pr\left(D_{g,t} = 1, D_{g,t-1} = 0\right) > 0$.
\end{assumption}

\begin{assumption}[Existence of Moments] \label{A19} $\sup_{(g,t)}N_{g,t} <+\infty$  and $\sup_{(d,i,g,t)}\mathrm{E}\left[Y_{i,g,t}(d)^4\right]<+\infty$. Moreover, let $\lim_{G\rightarrow \infty}\mathrm{E}[\mathbf{V}]$ and $\mathbf{\Sigma}_{\mathbf{V}}$ defined in (\ref{VVECTOR}) and (\ref{ASYVARTS}) of Appendix \ref{AppendixA} exist.
\end{assumption}

Assumption \ref{A18} is an asymptotic analogue of Assumption \ref{A8} and ensures the existence of all comparison groups required to construct $\widehat{\mathrm{DID}}^{S^{\ast}}$.  Assumption \ref{A19} requires that the fourth moments of the potential outcomes \(Y_{i,g,t}(d)\) are uniformly bounded across all indexes. It also assumes that the cell sizes \(N_{g,t}\) are uniformly bounded. Finally, the assumption ensures the existence of the limiting expectation vector \(\lim_{G \to \infty} \mathrm{E}[\mathbf{V}]\) and the asymptotic covariance \(\boldsymbol{\Sigma}_{\mathbf{V}}\), both of which are required for establishing the asymptotic distribution of the estimator.

Considering an asymptotic framework in which the number of groups $G$ tends to infinity while the number of time periods $T$ remains fixed, we obtain the following result.
\begin{proposition}[DID for True Switchers]\label{PROP7}
Suppose that Assumptions \ref{A1}-\ref{A2}, \ref{A6}-\ref{A7} and \ref{A9}-\ref{A19} hold. Moreover, it holds that $\lim_{G\rightarrow \infty}\Pr\left(\left\vert\mathrm{E}\left[\widehat{\operatorname{DID}}_{t-1}^{**}\right]+\mathrm{E}\left[\widehat{\operatorname{DID}}_{t-1}^{***}\right]\right\vert >0\right)>0$ for all $t\in\{2,\ldots,T\}$. Then:
\[
\sqrt{G}\left(\widehat{\mathrm{DID}}^{S^{\ast}}- \delta^{S^{\ast}}\right)\stackrel{d}{\rightarrow}\mathcal{N}(0,\sigma_{\delta^{S^{\ast}}}^{2})
\]
with $\sigma_{\delta^{S^{\ast}}_{0}}^{2}:= \mathbf{J}_{f}	\left(\mathbf{V}^{\infty}\right)\boldsymbol{\Sigma}_{\mathbf{V}}\mathbf{J}_{f}	\left(\mathbf{V}^{\infty}\right)^{\prime}$, where $\mathbf{V}$ is defined in (\ref{VVECTOR}) and $\mathbf{V}^{\infty}=\lim_{G\rightarrow \infty}\mathrm{E}\left[\mathbf{V}\right]$.
\normalsize\end{proposition}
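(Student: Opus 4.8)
The plan is to write $\widehat{\mathrm{DID}}^{S^{\ast}}$ as a smooth function of a vector of group-level sample averages and then combine a multivariate central limit theorem with the delta method. First I would collect into the single random vector $\mathbf{V}$ of (\ref{VVECTOR}) all the building blocks that appear in (\ref{DIDESTIMATORDAG}): for each relevant $t$ and each treatment-history cell (observed switching, comparison, already-treated, etc.), both the outcome-weighted sums of the form $\frac{1}{G}\sum_g \mathbb{I}\{\cdot\}\, N_{g,t}(Y_{g,t}-Y_{g,t-1})$ and the matching cell-count averages $\frac{1}{G}\sum_g \mathbb{I}\{\cdot\}\, N_{g,t}$. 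Each entry of $\mathbf{V}$ is then an average of $G$ terms that are independent across $g$ by Assumption \ref{A15}. Every component $\widehat{\operatorname{DID}}_t$, $\widehat{\operatorname{DID}}_t^{*}$, $\widehat{\operatorname{DID}}_t^{**}$, $\widehat{\operatorname{DID}}_t^{***}$, the weights $\widehat{\lambda}_{t-1}$, the normalizer $\widehat{N}_{S^{\ast}}/G$, and hence $\widehat{\mathrm{DID}}^{S^{\ast}}$ itself, are ratios and products of entries of $\mathbf{V}$, so that $\widehat{\mathrm{DID}}^{S^{\ast}}=f(\mathbf{V})$ for an $f$ assembled explicitly from the formulas in Section \ref{sec:IdentTrueSwitch}.

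Second, I would establish the joint asymptotic normality $\sqrt{G}\bigl(\mathbf{V}-\mathrm{E}[\mathbf{V}]\bigr)\xrightarrow{d}\mathcal{N}(0,\boldsymbol{\Sigma}_{\mathbf{V}})$. Because the summands are independent but not identically distributed across $g$, this is a Lindeberg--Feller (or Lyapunov) CLT for triangular arrays; the required moment control follows from Assumption \ref{A19}, which bounds the fourth moments of the potential outcomes and the cell sizes uniformly, thereby supplying a uniform $2+\eta$ moment bound and ruling out any single group dominating the variance. Assumption \ref{A19} also guarantees the existence of $\mathbf{V}^{\infty}=\lim_{G\to\infty}\mathrm{E}[\mathbf{V}]$ and of $\boldsymbol{\Sigma}_{\mathbf{V}}$.

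Third, I would verify that $f$ is continuously differentiable in a neighborhood of $\mathbf{V}^{\infty}$ and apply the delta method. Differentiability can fail only at points where one of the denominators appearing in $f$ vanishes. The cell-count denominators (e.g.\ $N_{1,0,t}$, $N_{0,0,t}$, $N_{1,1,t-1}^{[1]}$, and $\widehat{N}_{S^{\ast}}$) have strictly positive limits by Assumption \ref{A18}, while the denominator of $\widehat{\lambda}_{t-1}$, namely $\widehat{\operatorname{DID}}_{t-1}^{**}+\widehat{\operatorname{DID}}_{t-1}^{***}$, is kept away from zero in the limit precisely by the extra condition $\lim_{G\to\infty}\Pr(|\mathrm{E}[\widehat{\operatorname{DID}}_{t-1}^{**}]+\mathrm{E}[\widehat{\operatorname{DID}}_{t-1}^{***}]|>0)>0$ imposed in the statement. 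The delta method then yields $\sqrt{G}\bigl(f(\mathbf{V})-f(\mathbf{V}^{\infty})\bigr)\xrightarrow{d}\mathcal{N}\bigl(0,\mathbf{J}_f(\mathbf{V}^{\infty})\boldsymbol{\Sigma}_{\mathbf{V}}\mathbf{J}_f(\mathbf{V}^{\infty})'\bigr)$, matching the claimed variance.

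Finally, the centering must be matched to $\delta^{S^{\ast}}$, i.e.\ I would show $f(\mathbf{V}^{\infty})=\delta^{S^{\ast}}$, and I expect this to be the main obstacle. It requires tracking how the estimated weight $\widehat{\lambda}_{t-1}$ consistently recovers the true switching share $\lambda_{t-1}$ of Assumption \ref{A13}, and how $\widehat{\mathrm{DID}}^{S^{\dagger}}_{t-1}$ and $\widehat{\mathrm{DID}}^{S}_t$ consistently estimate the unscaled period-$(t-1)$ and period-$t$ effects of the units who truly switch in those periods. The argument leans on Proposition \ref{PROPESTIMANDTRUE} together with the switch-pattern and homogeneity assumptions (\ref{A11}, \ref{A13}, \ref{A14}) and the extensions of parallel trends and strong exogeneity to the treated potential outcome (\ref{A16}, \ref{A17}), which are what make the comparison against already-treated groups inside $\widehat{\operatorname{DID}}_{t-1}^{***}$ identify the relevant share. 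A secondary care point is that $\delta^{S^{\ast}}$ is an expectation of a ratio involving the random count $N_{S^{\ast}}$, whereas $f(\mathbf{V}^{\infty})$ is a ratio of limits; I would therefore confirm that the gap between the finite-$G$ estimand and its limit is $o(G^{-1/2})$, so that recentering the normal limit at $\delta^{S^{\ast}}$ is legitimate.
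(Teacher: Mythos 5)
Your proposal follows essentially the same route as the paper's proof: write $\widehat{\mathrm{DID}}^{S^{\ast}}=f(\mathbf{V})$, establish joint asymptotic normality of $\sqrt{G}\left(\mathbf{V}-\mathrm{E}\left[\mathbf{V}\right]\right)$ via a Lyapunov CLT under Assumptions \ref{A15} and \ref{A19}, apply the delta method at $\mathbf{V}^{\infty}$, and then match the centering by showing that $f\left(\mathrm{E}\left[\mathbf{V}\right]\right)$ equals the ratio of the expectations of the numerator and denominator of $\delta^{S^{\ast}}$ (using $g_{t}\left(\mathrm{E}\left[\mathbf{V}\right]\right)=\lambda_{t-1}$ under Assumptions \ref{A11}, \ref{A13}, \ref{A14}, \ref{A16}, \ref{A17}) while the remaining discrepancy from $\delta^{S^{\ast}}$ is $o(G^{-1/2})$. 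The one step you flag but do not execute -- the $o(G^{-1/2})$ bound on the expectation-of-ratio versus ratio-of-expectations gap -- is precisely the paper's Step 2.2, which it proves via a truncation event $I_{G}$, Hoeffding's inequality applied to the bounded independent group-level sums defining the switcher count, and Cauchy--Schwarz.
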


Proposition \ref{PROP7} establishes the limiting distribution of $\widehat{\mathrm{DID}}^{S^{\ast}}$.  Note that the condition:
\begin{equation}\label{EQPOSPROB}
\lim_{G\rightarrow \infty}\Pr\left(\left\vert\mathrm{E}\left[\widehat{\operatorname{DID}}_{t-1}^{**}\right]+\mathrm{E}\left[\widehat{\operatorname{DID}}_{t-1}^{***}\right]\right\vert >0\right)>0
\end{equation}
is a technical condition that ensures a well-defined limit. It requires the existence of a non-zero treatment effect prior to treatment, but can be weakened by introducing a suitable trimming function into the estimator $\widehat{\mathrm{DID}}^{S^{\ast}}$. Specifically, denote by $\mathcal{T}_{DID,G}$ the set of periods such that
\[
\left\vert \mathrm{E}\left[\widehat{\operatorname{DID}}_{t-1}^{**}\right]+\mathrm{E}\left[\widehat{\operatorname{DID}}_{t-1}^{***}\right]\right\vert >c_{G}
\] 
for some $c_{G}\rightarrow 0$ as $G\rightarrow \infty$. Then, the estimator in (\ref{DIDESTIMATORDAG}) can be modified as:
\begin{align}\label{EQESTMOD}
\widehat{\mathrm{DID}}^{S^{\ast}}&=  \frac{1}{\widehat{N}_{S^{\ast}}}\sum_{t\geq 2}\mathbb{I}\left\{t\in \mathcal{T}_{DID,G}\right\}\left(\widehat{\lambda}_{t-1}\widehat{\mathrm{DID}}^{S^{\dagger}}_{t-1}+\left(1-\widehat{\lambda}_{t-1}\right)\widehat{\mathrm{DID}}^{S}_t\right).
\end{align}
Under suitable rate conditions, the modified estimator in (\ref{EQESTMOD}) is a consistent estimator for the ATE of true switchers for all periods in $\mathcal{T}_{DID,G}$. Alternatively, a continuous, differentiable function could also be used for trimming \citep[e.g.,][]{G2016}.  Finally, note that the testing procedure outlined in the next section allows to test for the presence of (non-zero) anticipation or misclassification effects, and may thus also inform the researcher about the suitability of (\ref{EQPOSPROB}).

\section{Specification Tests}\label{sec:testing}

As illustrated in the previous sections, not accounting for misclassification or anticipation in the $\text{DID}$ estimator  from Equation (\ref{DIDESTIMATOR}) leads to a bias not only with respect to $\Delta$ in the case of homogeneous treatment effects, but also with respect to both $\delta_{S}$ and $\delta_{S^{\ast}}$, respectively. As a result, a modification of $\text{DID}$ is required. However, in  the absence of further information, it is not clear how to modify the aforementioned estimator when the researcher does not know how many periods prior to the observed switches are affected. In addition, even if knowledge about affected periods was available, inspection of the proof of Proposition \ref{PROPESTIMAND} suggests that it is generally difficult to disentangle the bias arising from misclassification or anticipation on the one hand and from a failure of  parallel trends (PT) in Assumption \ref{A5} on the other. This is, however, important as the implications are different. In particular, while anticipation or misclassification in pre-switch periods can be addressed through a modified DID estimator using earlier periods for comparison (subject to data availability), a failure of strong exogeneity and thus PT generally invalidates  the DID approach for point identification and estimation altogether.\footnote{Of course, it may still be possible to adopt a partial identification approach as in  \citet{Rambachan2023} using pre-treatment evidence to bound post-treatment outcome differences.} 

To address these challenges, we propose two specification tests based on moment equalities to detect violations of PT in pre-treatment periods on the one hand, and misspecification due to anticipation effects or misclassifcation on the other.  In particular, the first test compares outcome averages from  switching groups with  not-yet switched groups in periods prior to a threshold date assumed to be known and not affected by misclassification.\footnote{For instance, in the context of the study of \citet{Bindler2018}, who analyzed the effects of the (staggered) abolition of capital punishment across different offense categories on jury severity, one may use periods that preceed the enactment of these law changes.} The second test is conceptually identical, but instead focuses on comparisons from periods potentially affected by misclassification. Heuristically, a researcher may therefore use the first test to gather evidence for pre-treatment violations of PT, and then conduct the second test for misclassification  comparing the same groups using periods that could (potentially) be misclassified. Moreover, as both test rely on moment comparisons and can exploit all possible combinations from the time periods considered, they are more likely to detect violations of PT or misclassification in pre-treatment periods than conventional event-study based specification tests which just check for non-zero effects relative to a given reference period (see Remark 6 below).

Formally, recall that the parallel trends assumption naturally  extends to other pre-treatment periods. That is, if for each $(g,k,t)\in\{1,\ldots,G\}\times \{1,\ldots,G\}\times 	\{2,\ldots,T\}$:
$$\mathrm{E}\left[{Y}_{g, t}(0)-{Y}_{g, t-1}(0)\right]=\mathrm{E}\left[{Y}_{k, t}(0)-{Y}_{k, t-1}(0)\right],$$
then it also holds that:
$$\mathrm{E}\left[{Y}_{g, t-1}(0)-{Y}_{g, t-l}(0)\right]=\mathrm{E}\left[{Y}_{k, t-1}(0)-{Y}_{k, t-l}(0)\right]$$
for each $(g,k,t)\in\{1,\ldots,G\}\times \{1,\ldots,G\}\times 	\{(l+1),\ldots,T\}$ with $l\in\{2,\ldots, T-1\}$.  We may use this insight to define the following empirical moment comparisons:
\begin{align}
&\widehat{\mathrm{\tau}}_{0,0}(t,l)\notag\\
=&\frac{1}{N_{1,0,t}}\sum_{g: D_{g, t}=1, D_{g, t-1}=0} N_{g,t}\left(Y_{g, t-2}-Y_{g, t-l}\right)-\frac{1}{N_{0,0,t}}\sum_{g: D_{g, t}=0, D_{g, t-1}=0} N_{g,t}\left(Y_{g, t-2}-Y_{g, t-l}\right).\label{tau00}
\end{align}
and
\begin{align}
&\widehat{\mathrm{\tau}}_{1,0}(t,l)\notag\\
=&\frac{1}{N_{1,0,t}}\sum_{g: D_{g, t}=1, D_{g, t-1}=0} N_{g,t}\left(Y_{g, t-1}-Y_{g, t-l}\right)-\frac{1}{N_{0,0,t}}\sum_{g: D_{g, t}=0, D_{g, t-1}=0}N_{g,t} \left(Y_{g, t-1}-Y_{g, t-l}\right).\label{tau10}
\end{align}
While (\ref{tau00}) is a pre-treatment statistic that compares  outcomes from  switching groups with those of not-yet switched groups for periods that pre-date potential misclassification in $t-1$, namely $t-2$ and $t-l$ with $l>2$, (\ref{tau10}) is a pre-treatment statistic that compares those outcomes using period $t-1$ as reference period. To understand  the heuristic, note that because of the assumption that misspecification only occurs in $t-1$, under Assumptions \ref{A1}-\ref{A4}, \ref{A6}-\ref{A8} we have a potential violation of PT as the only ``bias source'':
\begin{align*}
\mathrm{E}\left[\widehat{\mathrm{\tau}}_{0,0}(t,l)\right]=& \mathrm{E} \left[\frac{1}{N_{1,0,t}}\sum_{g: D_{g, t}=1, D_{g, t-1}=0} N_{g,t}\left(Y_{g, t-2}-Y_{g, t-l}\right)-\frac{1}{N_{0,0,t}}\sum_{g: D_{g, t}=0, D_{g, t-1}=0} N_{g,t}\left(Y_{g, t-2}-Y_{g, t-l}\right)  \right]\\
=&\mathrm{E} \left[\frac{1}{{N}_{1,0,t}}\sum_{g: D_{g, t}=1, D_{g, t-1}=0} N_{g,t} \psi_{0,t-2,l} -\frac{1}{N_{0,0,t}}\sum_{g: D_{g, t}=0, D_{g, t-1}=0} N_{g,t} \psi_{0,t-2,l}  \right],
\end{align*}
with
\[
\psi_{0,t-1,l}=\mathrm{E}\left[Y_{g,t-1}(0)-Y_{g,t-l}(0)\right].
\] 
On the other hand,  the expectation of $\widehat{\mathrm{\tau}}_{1,0}^{(l)}(t)$ (provided it exists) is equal to:
\begin{align*}
\mathrm{E}\left[\widehat{\mathrm{\tau}}_{1,0}(1,l)\right]=&\mathrm{E}\left[\frac{1}{N_{1,0,t}} \sum_{g:  D_{g, t}=1, D_{g, t-1}=0}\sum_{i=1}^{N_{g,t}} \frac{N_{g,t}}{N_{g,t-1}}D_{i,g, t-1}^*\left(Y_{i, g, t-1}(1)-Y_{i, g, t-1}(0)\right)\right]\\
&+ \mathrm{E}\left[\frac{1}{N_{1,0,t}}\sum_{g: D_{g, t}=1, D_{g, t-1}=0} N_{g,t} \psi_{0,t-1,l} -\frac{1}{{N}_{0,0,t}}\sum_{g: D_{g, t}=0, D_{g, t-1}=0} N_{g,t} \psi_{0,t-1,l}  \right].
\end{align*}
This suggests basing the tests for pre-treatment PT and misclassification on moment equalities. Thus, define:
\[
\mathrm{\tau}_{0,0}(t,l)\equiv \lim_{G\rightarrow \infty}\mathrm{E}\left[\widehat{\mathrm{\tau}}_{0,0}(t,l)\right]
\]
and accordingly:
\[
\mathrm{\tau}_{1,0}(t,l)\equiv \lim_{G\rightarrow \infty}\mathrm{E}\left[\widehat{\mathrm{\tau}}_{1,0}(t,l)\right] 
\]
for every $t\in\{(l+1),\ldots,T\}$ and $l\in\{2,\ldots, T-1\}$. Specifically, to test for violations of PT, we want to test that:
\[
H_{0}^{PT}: \mathrm{\tau}_{0,0}(t,l)=0
\]
for all $t\in\{(l+1),\ldots,T\}$ and $l\in\{2,\ldots,T-1\}$, versus:
\[
H_{A}^{PT}: \mathrm{\tau}_{0,0}(t,l)\neq 0
\]
for some $t\in\{(l+1),\ldots,T\}$ and $l\in\{2,\ldots,T-1\}$. Similarly, the second null hypothesis about the presence of misclassification or anticipation is given by:
\[
H_{0}^{MC}: \mathrm{\tau}_{1,0}(t,l)=0
\]
for all $t\in\{(l+1),\ldots,T\}$ and $l\in\{2,\ldots,T-1\}$, versus:
\[
H_{A}^{MC}: \mathrm{\tau}_{1,0}(t,l)\neq 0
\]
for some $t\in\{(l+1),\ldots,T\}$ and $l\in\{2,\ldots,T-1\}$. As mentioned previously, the above tests may be implemented as moment equality tests. We outline the test statistic for $H_{0}^{MC}$ against $H_{A}^{MC}$ in what follows, but note that the statistic for $H_{0}^{PT}$ against $H_{A}^{PT}$ is analogous.  Thus, depending on the preferences of the researcher, one may  use a Carm\'{e}r-von-Mises type statistic that is geared towards detecting many small violations of the null hypothesis:
\begin{equation}\label{TEST10S}
T_{1,0}^{s}=\sum_{(t,l)\in\mathcal{C}_{S}}\left(\sqrt{G}\widehat{\mathrm{\tau}}_{1,0}(t,l)\right)^2,
\end{equation}
or a Kolmogorov-Smirnov type max statistic:
\begin{equation}\label{TEST10M}
T_{1,0}^{m}=\max_{(t,l)\in\mathcal{C}_{S}}\left\vert \sqrt{G}\widehat{\mathrm{\tau}}_{1,0}(t,l) \right\vert,
\end{equation}
where $\mathcal{C}_{S}$ is an index set $\mathcal{C}_{S}=\{(l,t):\; t\in\{(l+1),\ldots,T\};\; l\in\{2,\ldots,T-1\}\}$ with cardinality $\mathcal{C}_{S}$. 

\medskip

\noindent \textbf{Remark 4}: The tests above are designed to cover all pre-treatment periods provided there is only a finite number of them. However, in practice, as researchers may not want to use evidence from far away pre-treatment periods, we can restrict the number of pre-treatment periods used in the corresponding moment equality tests. 
\medskip

\noindent \textbf{Remark 5}: If the researcher believes that also other pre-treatment periods are affected by misspecification, one may set a threshold, say $\kappa$, beyond which the researcher has certainty that no misclassification or anticipation is present, so that the PT only test can be carried out accordingly.
\medskip

\noindent \textbf{Remark 6}: Note that the moment equality testing framework may be used more broadly as a flexible and potentially more powerful alternative to standard pre-trend tests in event-study type analyses. That is, while standard tests only evaluate the significance of a (or some) pre-trend DiD parameter(s) relative to a single, fixed pre-treatment reference period, the  statistics in (\ref{TEST10S}) and (\ref{TEST10M}) may be modified to include also comparisons of other pre-treatment period combinations.  For instance, if there are three pre-treatment periods $(t-3)$, $(t-2)$, and $(t-1)$ available in the data, the moment based test may be used to compare not only outcomes from period $(t-1)$ to periods $(t-2)$ and $(t-3)$, but also outcomes from $(t-2)$ to $(t-3)$.  In practice, this may render the moment based test more powerful to detect specific  violations of misclassification or PT in pre-treatment periods than what would be possible if only a fixed reference period was used (see also Section \ref{sec:empirical}).

\medskip

We make the following assumptions:

\begin{assumption}[Existence of Moments] \label{A2A} Let $\lim_{G\rightarrow \infty}\mathrm{E}[\mathbf{U}_{t,l}]$ and $\mathbf{\Sigma}_{\mathbf{U}}$ defined in (\ref{ASYVAR}) of Appendix \ref{AppendixA} exist. Moreover, it holds that $\sup_{(d,g,t)}\mathrm{E}\left[Y_{g,t}(d)^{8+\delta}\right]<+\infty$ for some $\delta>0$.
\end{assumption}

\begin{assumption}[Stable Groups] \label{A3A} For all
$(g, t)\in\{1,\ldots,G\} \times\{2,\ldots,T\}$, we have that 
\[
\Pr\left(D_{g,t} = 1, D_{g,t-1} = 0\right) > 0
\] 
implies $lim_{G\rightarrow\infty} \mathrm{E}\left[\frac{N_{1,0,t}}{G}\right] >
0$ and $lim_{G\rightarrow\infty} \mathrm{E}\left[\frac{N_{0,0,t}}{G}\right] >
0$. Moreover, $lim_{G\rightarrow\infty} \mathrm{E}\left[\frac{N_{1,0,t}}{G}\right] >
0$ for at least one $t\in\{(l+1),\ldots,T\}$ and $l\in\{2,\ldots,T-1\}$. 
\end{assumption}

Since the limiting distribution of our tests under the null hypothesis is non-standard, we use the nonparametric (block) bootstrap to derive critical values. To do so, let
\[
\mathbf{{Y}}_g:=\left({Y}_{g, t}\right)_{t\in \{1, \ldots, T\}}
\]
denote the time series vector of sample averages from each group $g$. Also, recall that:
\[
\mathbf{{D}}_g:=\left({D}_{g, t}\right)_{t\in \{1, \ldots, T\}}
\]
To this end, denote by $\mathbf{Y}_{g}^{b}$ the bootstrapped observations. In addition, we write $o^{b}_{\Pr}(1)$, in probability when for any $\delta>0$, $\Prb\left(|T^{b}|>\delta\right)=o_{\Pr}(1)$ for any bootstrap statistic $T^{b}$. Also, we write $T^{b}\stackrel{d^{b}}{\rightarrow}D$ in probability, if, conditional on the sample, $T^{b}$ weakly converges to $D$ with probability going to one. Finally, let $\mathrm{E}^{b}\left[\cdot\right]$, $\text{Var}^{b}\left[\cdot\right]$,  and $\text{Cov}^{b}\left[\cdot\right]$ denote the expectation, variance, and covariance,  respectively,  under the bootstrap probability measure $\Prb$, conditional on the original sample (defined on a given probability space $(\Omega, \mathcal{F}, \Pr)$).\footnote{The probability law governing the resample is given by the random variables $I_{g}$, $g=1,\ldots,G$,  where each $I_{g}$ is an i.i.d. discrete uniform random variable on the interval $[1,G]$.}  We draw $B$ bootstrap samples $\{\mathbf{Y}_{g}^{b},\mathbf{D}_{g}^{b}\}_{g=1}^{G}$. The bootstrap statistics are given by:
\[
T_{1,0}^{s, b}=\sum_{(t,l)\in\mathcal{C}_{S}}\left(\sqrt{G}(\widehat{\mathrm{\tau}}_{1,0}^{b}(t,l)-\widehat{\mathrm{\tau}}_{1,0}(t,l))\right)^2,
\]
or
\[
T_{1,0}^{m, b}=\max_{(t,l)\in\mathcal{C}_{S}}\left\vert \sqrt{G}(\widehat{\mathrm{\tau}}_{1,0}^{b}(t,l) -\widehat{\mathrm{\tau}}_{1,0}(t,l))  \right\vert . 
\]
Here, $\widehat{\mathrm{\tau}}_{1,0}^{b}(t,l)$ denotes the bootstrap analog of the test statistic:
\[
\widehat{\mathrm{\tau}}_{1,0}^{b}(t,l)=\frac{1}{N_{1,0,t}^{b}}\sum_{g: D^{b}_{g, t}=1, D^{b}_{g, t-1}=0} N_{g,t}\left(Y_{g, t-1}^{b}-Y_{g, t-l}^{b}\right)-\frac{1}{N^{b}_{0,0,t}}\sum_{g: D^{b}_{g, t}=0, D^{b}_{g, t-1}=0} N_{g,t}\left(Y_{g, t-1}^{b}-Y_{g, t-l}^{b}\right).
\]
with:
\[
N^{b}_{1,0,t}=\sum_{g\in G} \mathbb{I}\{ D^{b}_{g, t}=1, D^{b}_{g, t-1}=0\}N_{g,t}
\]
and
\[
{N}^{b}_{0,0,t}=\sum_{g\in G} \mathbb{I}\{ D^{b}_{g, t}=0, D^{b}_{g, t-1}=0\}N_{g,t}
\]
The critical values are then given by the $(1-\gamma)$, $\gamma\in(0,1)$, quantiles of
the empirical bootstrap distributions of $T_{1,0}^{s, b}$ and $T_{1,0}^{m, b}$, respectively, over $B$ draws, say
$c^{m}_{B,G}(1-\gamma)$ and $c^{s}_{B,G}(1-\gamma)$. In what follows, we focus on the test of $H_{0}^{MC}$, as the arguments for the test involving $T_{0,0}^{s}$ and $T_{0,0}^{m}$ are identical. Thus, define the $\vert \mathcal{C}_{S}\vert \times 1$ vector:
\begin{equation}\label{VECHAT}
\widehat{\boldsymbol{\tau}}_{1,0}=\left(\begin{array}{c} \widehat{\tau}_{1,0}(t,l)\\ \vdots\\ \widehat{\tau}_{1,0}(j,k) \end{array}\right)
\end{equation}
and similarly:
\begin{equation}\label{VEC}
\boldsymbol{\tau}_{1,0}=\left(\begin{array}{c} \tau_{1,0}(t,l)\\ \vdots\\ \tau_{1,0}(j,k) \end{array}\right)
\end{equation}
Note that by Assumptions \ref{A2A} and \ref{A3A}, we have with probability approaching 1 that there exists at least one $(t,l)$ for which $\lim_{G\rightarrow \infty}\mathrm{E}\left[P_{1,0}(t,l)\right]>0$ and so $\min\{N_{1,0,t},{N}_{0,0,t}\}>0$ for sufficiently large $G$. On the other hand, observe that for any element $(t,l)\in\mathcal{C}_{S}$ for which  $\lim_{G\rightarrow \infty}\mathrm{E}\left[P_{1,0}(t,l)\right]=0$, the corresponding element does not contribute to the statistic $T_{1,0}^{s}$ for sufficiently large $G$, and does also not contribute to the asymptotic variance of $\sqrt{G}\left(\widehat{\boldsymbol{\tau}}_{1,0}-\mathbf{0}\right)$. We therefore proceed by analyzing the case where all elements of $\sqrt{G}\left(\widehat{\boldsymbol{\tau}}_{1,0}-\mathbf{0}\right)$ contribute to the asymptotic distribution, i.e. the case where $\lim_{G\rightarrow\infty} \mathrm{E}\left[\frac{N_{1,0,t}}{G}\right] >
0$ and $\lim_{G\rightarrow\infty} \mathrm{E}\left[\frac{N_{0,0,t}}{G}\right] >
0$ hold for all $(t,l)\in\mathcal{C}_{S}$  in Assumption \ref{A3A}.  We have the following statement:

\begin{proposition}\label{PROPTEST} Assume that Assumptions \ref{A1}-\ref{A2}, \ref{A6}-\ref{A7}, and \ref{A15}, \ref{A2A}, \ref{A3A} hold for all $(t,l)\in\mathcal{C}_{S}$, and that $\boldsymbol{\Sigma}_{\mathbf{U}}$ from (\ref{ASYVAR}) is positive definite, and $\mathbf{J}_{f}	\left(\mathbf{U}^{\infty}\right)$ defined in (\ref{Jacobian}) has full column rank.  Then:

(i) Under $H_{0}^{MC}$, it holds that:
\[
\lim_{G,B\rightarrow \infty} \Pr\left(T_{1,0}^{s}>c^{s}_{B,G}(1-\alpha)\right)=\alpha
\]
and
\[
\lim_{G,B\rightarrow \infty} \Pr\left(T_{1,0}^{m}>c^{m}_{B,G}(1-\alpha)\right)=\alpha.
\]

(ii) Under $H_{A}^{MC}$, it holds that:
\[
\lim_{G,B\rightarrow \infty} \Pr\left(T_{1,0}^{s}>c^{s}_{B,G}(1-\alpha)\right)=1
\]
and 
\[
\lim_{G,B\rightarrow \infty} \Pr\left(T_{1,0}^{m}>c^{m}_{B,G}(1-\alpha)\right)=1.
\]

\end{proposition}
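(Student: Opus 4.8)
The plan is to reduce both test statistics to continuous functionals of an asymptotically Gaussian vector, to show that the nonparametric bootstrap reproduces the same Gaussian law conditional on the sample, and then to read off size and power from the continuity of the limiting laws together with the recentering built into the bootstrap statistics. First I would write $\widehat{\boldsymbol{\tau}}_{1,0}$ from (\ref{VECHAT}) as a smooth function $f$ of group-level sample averages. Each component $\widehat{\tau}_{1,0}^{(l)}(t)$ is a difference of two ratios in which the numerators $\frac{1}{G}\sum_g N_{g,t}\mathbb{I}\{\cdot\}(Y_{g,t-1}-Y_{g,t-l})$ and the denominators $N_{1,0,t}/G$, $N_{0,0,t}/G$ are averages of the across-$g$ independent summands collected in $\mathbf{U}_{t,l}$. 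Independence across groups (Assumption \ref{A15}), the $8+\delta$ moment bound (Assumption \ref{A2A}), and the nondegeneracy of the cell probabilities (Assumption \ref{A3A}) let me invoke a CLT for the stacked average $\bar{\mathbf{U}}$, giving $\sqrt{G}(\bar{\mathbf{U}}-\mathrm{E}[\bar{\mathbf{U}}])\stackrel{d}{\rightarrow}\mathcal{N}(\mathbf{0},\boldsymbol{\Sigma}_{\mathbf{U}})$. Since the limiting denominators are bounded away from zero by Assumption \ref{A3A}, $f$ is continuously differentiable at $\mathbf{U}^{\infty}=\lim_{G\to\infty}\mathrm{E}[\mathbf{U}_{t,l}]$, and the delta method yields $\sqrt{G}(\widehat{\boldsymbol{\tau}}_{1,0}-\boldsymbol{\tau}_{1,0})\stackrel{d}{\rightarrow}\mathcal{N}(\mathbf{0},\boldsymbol{\Omega})$ with $\boldsymbol{\Omega}=\mathbf{J}_f(\mathbf{U}^{\infty})\boldsymbol{\Sigma}_{\mathbf{U}}\mathbf{J}_f(\mathbf{U}^{\infty})'$; positive definiteness of $\boldsymbol{\Sigma}_{\mathbf{U}}$ and full column rank of $\mathbf{J}_f(\mathbf{U}^{\infty})$ make $\boldsymbol{\Omega}$ positive definite.

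Next I would establish bootstrap consistency. Because the resampling draws entire groups with replacement through the i.i.d. indices $I_g$, the bootstrapped averages $\bar{\mathbf{U}}^b$ are sums of summands that are i.i.d. conditional on the sample, and a conditional Lindeberg CLT (again using the $8+\delta$ moments) gives $\sqrt{G}(\bar{\mathbf{U}}^b-\bar{\mathbf{U}})\stackrel{d^b}{\rightarrow}\mathcal{N}(\mathbf{0},\boldsymbol{\Sigma}_{\mathbf{U}})$ in probability. Applying the bootstrap delta method at the consistent sample denominators then gives $\sqrt{G}(\widehat{\boldsymbol{\tau}}_{1,0}^b-\widehat{\boldsymbol{\tau}}_{1,0})\stackrel{d^b}{\rightarrow}\mathcal{N}(\mathbf{0},\boldsymbol{\Omega})$ in probability, and the continuous mapping theorem transfers this to the quadratic form $\|\cdot\|^2$ and the sup-norm $\|\cdot\|_\infty$. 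Consequently the bootstrap laws of $T_{1,0}^{s,b}$ and $T_{1,0}^{m,b}$ converge conditionally, in probability, to the laws of $\|Z\|^2$ and $\|Z\|_\infty$ for $Z\sim\mathcal{N}(\mathbf{0},\boldsymbol{\Omega})$, and I would deduce that the critical values $c^{s}_{B,G}(1-\alpha)$ and $c^{m}_{B,G}(1-\alpha)$ converge to the corresponding population quantiles as first $B\to\infty$ and then $G\to\infty$.

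For part (i), under $H_0^{MC}$ we have $\boldsymbol{\tau}_{1,0}=\mathbf{0}$, so the uncentered statistics themselves satisfy $T_{1,0}^s\stackrel{d}{\rightarrow}\|Z\|^2$ and $T_{1,0}^m\stackrel{d}{\rightarrow}\|Z\|_\infty$, matching exactly the bootstrap limits. Since $\boldsymbol{\Omega}$ is positive definite, both limit laws are absolutely continuous on $(0,\infty)$ and strictly increasing at their $(1-\alpha)$ quantiles, so comparing the convergent statistics against the convergent critical values yields asymptotic rejection probability exactly $\alpha$. For part (ii), under $H_A^{MC}$ some coordinate has $\tau_{1,0}^{(l)}(t)\neq 0$, whence $|\sqrt{G}\widehat{\tau}_{1,0}^{(l)}(t)|=\sqrt{G}|\tau_{1,0}^{(l)}(t)|+O_p(1)\to\infty$, forcing $T_{1,0}^s\to\infty$ and $T_{1,0}^m\to\infty$ in probability; meanwhile the bootstrap statistics recenter at $\widehat{\boldsymbol{\tau}}_{1,0}$, so $c^{s}_{B,G}(1-\alpha)$ and $c^{m}_{B,G}(1-\alpha)$ remain $O_p(1)$, delivering power tending to one.

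The hard part will be the bootstrap consistency step: verifying the conditional CLT and the accompanying delta method with enough uniformity that the recentered bootstrap law converges to the correct non-degenerate Gaussian \emph{in probability} rather than merely weakly, which is what pins down the recentering used in $T_{1,0}^{s,b}$ and $T_{1,0}^{m,b}$ and underlies the robustness of the critical values under the alternative. A secondary but essential technical point is confirming that the nonstandard limit laws $\|Z\|^2$ and $\|Z\|_\infty$ are continuous and strictly increasing at their $(1-\alpha)$ quantiles, since quantile convergence of the bootstrap critical values---and hence exact asymptotic size in part (i)---hinges on this, and it is precisely the positive definiteness of $\boldsymbol{\Sigma}_{\mathbf{U}}$ and the full-rank condition on $\mathbf{J}_f(\mathbf{U}^{\infty})$ that secure it.
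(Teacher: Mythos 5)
Your overall strategy coincides with the paper's: a Lyapunov CLT for the stacked group-level averages $\mathbf{U}$, the (uniform) delta method for the ratio map, a conditional Lyapunov CLT for the group-resampled bootstrap recentered at $\widehat{\boldsymbol{\tau}}_{1,0}$, and, under the alternative, divergence of the statistic against bootstrap critical values that remain $O_{\Pr}(1)$ thanks to the recentering. The bootstrap-consistency step you flag as ``the hard part'' is carried out in the paper exactly along the lines you anticipate: the bootstrap summands form a row-wise independent triangular array, the Lyapunov condition is verified via Rosenthal's inequality under the $8+\delta$ moment bound of Assumption \ref{A2A}, the bootstrap covariances across pairs $(t,l)$ and $(j,k)$ are shown to equal the sample covariances, and uniformity follows from the Cram\'{e}r--Wold device together with Polya's theorem.

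There is, however, one genuine gap. You assert that the delta method yields $\sqrt{G}\left(\widehat{\boldsymbol{\tau}}_{1,0}-\boldsymbol{\tau}_{1,0}\right)\stackrel{d}{\rightarrow}\mathcal{N}\left(\mathbf{0},\boldsymbol{\Omega}\right)$, i.e.\ you center directly at the limit $\boldsymbol{\tau}_{1,0}$. But the delta method only centers the statistic at the finite-$G$ quantity $\boldsymbol{f}\left(\mathrm{E}\left[\mathbf{U}\right]\right)$ (a ratio of finite-$G$ expectations), whereas $\boldsymbol{\tau}_{1,0}$ is defined as the $G\rightarrow\infty$ limit of $\mathrm{E}\left[\widehat{\boldsymbol{\tau}}_{1,0}\right]$, and $H_{0}^{MC}$ constrains only that limit. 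Your size claim therefore additionally requires the drift condition $\sqrt{G}\left(\boldsymbol{f}\left(\mathrm{E}\left[\mathbf{U}\right]\right)-\mathbf{0}\right)\rightarrow 0$ under the null: the finite-$G$ estimand must vanish faster than $G^{-1/2}$. Nothing in your argument delivers this, and without it the statistic can carry a drifting mean under $H_{0}^{MC}$, so that the asymptotic rejection probability need not equal $\alpha$. The paper closes precisely this hole: after the delta-method step it shows $\sqrt{G}\left(\boldsymbol{f}_{N}\cdot\boldsymbol{f}_{D}-\boldsymbol{\tau}_{1,0}\right)=o_{\Pr}(1)$ by replicating the Step-2 arguments from the proof of Proposition \ref{PROP7} (the comparison of the ratio of expectations with the expectation of the ratio, controlled via Cauchy--Schwarz and Hoeffding bounds). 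Adding that step --- or strengthening the null to hold at every finite $G$ rather than only in the limit --- would make your argument complete; everything else in your proposal matches the paper's proof.
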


Proposition \ref{PROPTEST} establishes the asymptotic size control of the moment based tests under $H_{0}^{MC}$, and their consistency against fixed alternatives.  As remarked previously, identical results also hold for the test of $H_{0}^{PT}$ against $H_{A}^{PT}$. Moreover, note that the asymptotic analysis of these nonparametric moment tests relies on less and substantially weaker assumptions than the analysis of the bias and the estimators, respectively. Specifically, apart from the condition about the timing of misclassification (Assumption \ref{A7}) to make the tests operational, we do not impose any restrictions on the form of misclassification (e.g., homogeneity in Assumption \ref{A13}), strong exogeneity (Assumption \ref{A4}), or PT (Assumption \ref{A5}). The tests therefore allow to assess misspecification under a substantially weakened set of assumptions.  

Finally, even though both tests may of course be carried out separately, we can also combine them to construct a decision rule that provides guidance for practicioners. In particular, there are three cases we would like to distinguish as they bear different implications for the empirical analysis (see above): a violation of PT in pre-treatment periods, possibly with misclassification ($H_{TV}$), PT in pre-treatment periods holding without misclassification ($H_{PTN}$), and  PT  in pre-treatment periods holding with misclassification ($H_{PTM}$). We have that:
\[
H_{TV}=H_{A}^{PT}
\]
\[
H_{PTN}=H_{0}^{PT}\cap H_{0}^{MC}
\]
\[
H_{PTM}=H_{0}^{PT}\cap H_{A}^{MC}
\]
The decision rule is as follows:\medskip

\noindent \textbf{1.} Test $H_{0}^{PT}$ vs. $H_{A}^{PT}$ at level $\alpha$. If $H_{0}^{PT}$ is rejected at level $\alpha$, stop and decide in favor of $H_{TV}$. If $H_{0}^{PT}$ is not rejected, proceed with step 2.\medskip

\noindent \textbf{2.} Test $H_{0}^{MC}$ vs. $H_{A}^{MC}$ at level $\gamma$. If $H_{0}^{MC}$ is not rejected at level $\gamma$, decide in favor of $H_{PTN}$ and use standard estimator. If $H_{0}^{MC}$ is rejected, decide in favor of $H_{PTM}$ and modify estimator.
\medskip

Even though the formal analysis of this decision rule is beyond the scope of the paper, the heuristic is as follows: if there is pre-treatment evidence suggesting a likely violation of PT, DID should generally not be used for point identification and estimation. On the other
hand, if there is pre-treatment evidence supporting PT, we may actually proceed with DID and seek to design the estimator in such a way to avoid misspecification bias. Put differently, we aim to ``maximize'' our confidence in the PT assumption in a first testing step by looking at pre-treatment evidence in favor of PT, before turning to a misspecification test  in a  second step to obtain guidance on how to manipulate the estimator to avoid bias. In fact, building on pre-treatment evidence for PT is in line with recent work by \citet{Rambachan2023} who exploit pre-treatment PT violations as a way to infer about possible post-treatment violations. Moreover, the decision rule is also motivated by findings in \citet{CH2024} who show that conducting pre-treatment specification tests generally leads to, under the null of correct specification,  conservative, but not too liberal post-treatment inference. Applied to our context, we therefore conjecture that post-treatment inference carried out conditional on passing the above pre-treatment specification tests may also maintain asymptotic size control under the null of correct specification.

\section{Simulations}\label{sec:MC}

In order to assess the finite sample performance of the estimators and tests introduced in the previous sections, we conduct a Monte Carlo study. We start by comparing the two proposed estimators $\widehat{\mathrm{DID}}^{S}$ and $\widehat{\mathrm{DID}}^{S^*}$ to an estimator that does not account for misspecification in the treatment indicator. In a second step, we then analyze the rejection rates of the specification tests under various Data Generating Processes (DGP).

Since our asymptotic theory operates at the group level, we generate group level outcomes  instead of modeling individual outcomes and taking group averages. We use 2,000 Monte Carlo  replications and let the total number of groups be $G\in \{100,\ 300,\ 600\}$ across 15 time periods  ($T=15$). The first treatment period for each group is randomly drawn between time periods $t=2$ and $t=15$. A fraction of 5\% of the groups remain untreated throughout the panel. The DGP is chosen to be: 
\[
  Y_{g,t}(D_{g,t}^* ) = 10 - 0.4 \, t + 0.1 \, g + \Delta_t \, \mathbf{1}\{D_{g,t}^*\geq 1\} + e_{g,t},\;  e_{g,t} \sim \mathcal{N} (0,1),
 \]
where $\Delta_{t}$ is the treatment effect, while $e_{g,t}$ denotes the idiosyncratic error term. We simulate both time constant and time-varying treatment effects. Specifically, the constant treatment effect  $\Delta$ is set to 4 and the time varying treatment effects are generated by
\[
 \Delta_t = \Delta \, \left(0.2 + (1.8-0.2) \, \frac{(t-1)}{(T-1)}\right),
\] 
so that the mean treatment effect increases from 0.8 to 7.2 over time.  

While the DGP is based on the true treatment status $D_{g,t}^*$, misclassification is induced via $D_{g,t}$, the recorded group-period treatment status. In line with the simplifying assumption from Sections \ref{sec:IdentTrueSwitch} and \ref{sec:testing}, $D_{g,t}$ equals $D_{g,t}^*$ for all time periods but the period before treatment, in which 50\% of the groups are assigned $D_{g,t}=0$, while $D_{g,t}^*=1$. Groups that enter treatment in the last time period are not misclassified.

Table \ref{tab:estimator_simulation_errors} and Table \ref{tab:estimator_simulation_RMSE} display the simulation results for the estimators presented in Section \ref{sec:IdentTrueSwitch} in comparison to the estimator of \citet{CH2020} that does not account for misspecification in the treatment indicator. As $\widehat{\mathrm{DID}}^{S}$ and $\widehat{\mathrm{DID}}^{S^*}$ recover different target parameters, Table \ref{tab:estimator_simulation_errors} contains the mean bias with respect to $\bar{\Delta}_S*\equiv 1/13 \sum_{t=2}^{14} \Delta_{t}=4$ and $\bar{\Delta}_S\equiv 1/12 \sum_{t=3}^{14} \Delta_{t}=4.2287$, respectively.\footnote{Note that in population the benchmark ATE $\bar{\Delta}_S$ coincides with the ATE of true switchers since misclassification and switching probabilities are constant over time in the DGP.} Under constant treatment effects, both parameters are equivalent, i.e. $\Delta= \bar{\Delta}_S*=\bar{\Delta}_S$. Turning to the results, the mean biases in the third and fourth columns of Table \ref{tab:estimator_simulation_errors} are not close to zero reflecting that $\widehat{\mathrm{DID}}$ cannot match either of the ATEs in the presence of misclassification. This is due to some actually treated groups (individuals) entering the comparison group that is supposed to be non treated. If treatment effects change over time, the inclusion of already treated groups in the reported pre-treatment periods is also an issue. As the fifth column shows, the estimator for observed switchers $\widehat{\mathrm{DID}}^{S}$ captures the average treatment effect between time periods 3 to 14. The mean bias of the estimator for true switchers $\widehat{\mathrm{DID}}^{S^*}$ is close to zero in all cases and decreases as the total number of groups increases. Since this estimator captures anticipatory treatment effects, treatment effects in the second time period are also included.  

Table \ref{tab:estimator_simulation_RMSE} depicts the Root Mean Squared Error (RMSE). For the $\widehat{\mathrm{DID}}$ estimator, the RMSE is larger under constant treatment effects than under time varying treatment effects (see column 3 and 4), which reflects the same pattern seen for the mean bias. For $G=100$ and $G=300$ the RMSE of the estimator for true switchers $\widehat{\mathrm{DID}}^{S^*}$ is larger than the RMSE of the estimator for observed switchers $\widehat{\mathrm{DID}}^{S}$. The estimator for true switchers $\widehat{\mathrm{DID}}^{S^*}$ produces a larger RMSE under time-varying treatment effects than under constant treatment effects. This is to be expected since $\widehat{\mathrm{DID}}^{S^*}$ requires the estimation of unobserved switching probabilities.

\begin{table}[H]
\centering
\begin{tabular}{llllll} 
\toprule
\toprule
\multirow{2}{*}{\textbf{\# Groups}} & \multirow{2}{*}{\textbf{DGP TE}} & \multicolumn{4}{c}{\textbf{Mean Bias}} \\
    & & $\bf{\widehat{\mathrm{DID}}}$ & $\bf{\widehat{\mathrm{DID}}}$ & $\bf{\widehat{\mathrm{DID}}^{S}}$ & $\bf{\widehat{\mathrm{DID}}^{S^*}}$  \\
                        
\midrule
\multirow{2}{*}{G\ =\ 100} & Constant TE  & -2.3324  & -2.3324 & -0.0040 & -0.0296\\ 
                           & Time-vary. TE & -2.0799 & -2.3086 & -0.0015 & -0.0373\\  
                        
\midrule
\multirow{2}{*}{G\ =\ 300} & Constant TE &  -2.3402  & -2.3402 & 0.0010 & -0.0038\\  
                           & Time-vary. TE &-2.0895  & -2.3181 & 0.0007 & -0.0165\\ 
                       
\midrule
\multirow{2}{*}{G\ =\ 600} & Constant TE   & -2.3406 & -2.3406 & 0.0011 & -0.0007\\
                           & Time-vary. TE & -2.0895 & -2.3182 & 0.0024 & -0.0062\\ 
                        
\bottomrule
\bottomrule
\end{tabular}
\caption{\small Results of 2,000 Monte Carlo replications on the Mean Bias for 100, 300 and 600 groups are presented. Columns 3 and 4 refer to the \citet{CH2020} estimator $\widehat{\mathrm{DID}}$ in comparison to $\bar{\Delta}_S*$ and $\bar{\Delta}_S$, respectively. Column 5 refers to the DID estimator of observed switchers $\widehat{\mathrm{DID}}^{S}$ in comparison to $\bar{\Delta}_S$, and column 6 to the DID estimator of true switchers $\widehat{\mathrm{DID}}^{S^*}$ in comparison to $\bar{\Delta}_S*$. Note that $\bar{\Delta}_S$ and $\bar{\Delta}_{S^*}$ are identical under constant treatment effects.}
\label{tab:estimator_simulation_errors}
\end{table}

\begin{table}[H]
\centering
\begin{tabular}{llllll} 
\toprule
\toprule
\multirow{2}{*}{\textbf{\# Groups}} & \multirow{2}{*}{\textbf{DGP TE}} & \multicolumn{4}{c}{\textbf{RMSE}} \\
    & & $\bf{\widehat{\mathrm{DID}}}$ & $\bf{\widehat{\mathrm{DID}}}$ & $\bf{\widehat{\mathrm{DID}}^{S}}$ & $\bf{\widehat{\mathrm{DID}}^{S^*}}$  \\
                        
\midrule
\multirow{2}{*}{G\ =\ 100} & Constant TE & 2.3527 & 2.3527 & 0.1908 & 0.2904 \\
                           & Time-vary. TE & 2.1060 & 2.3321 & 0.2570 & 0.7409 \\ 
                        
\midrule
\multirow{2}{*}{G\ =\ 300} & Constant TE & 2.3468   & 2.3468 & 0.1087 & 0.1067 \\  
                           & Time-vary. TE & 2.0979 & 2.3257 & 0.1503 & 0.1956 \\
                       
\midrule
\multirow{2}{*}{G\ =\ 600} & Constant TE &  2.3441  & 2.3441 & 0.0766 & 0.0743 \\ 
                           & Time-vary. TE & 2.0938 & 2.3221 & 0.1060 & 0.1112 \\  

\bottomrule
\bottomrule
\end{tabular}
\caption{ \small Results of 2,000 Monte Carlo replications on the RMSE for 100, 300 and 600 groups are presented.   Columns 3 and 4 refer to the \citet{CH2020} estimator $\widehat{\mathrm{DID}}$ in comparison to $\bar{\Delta}_S*$ and $\bar{\Delta}_S$, respectively. Column 5 refers to the DID estimator of observed switchers $\widehat{\mathrm{DID}}^{S}$ in comparison to $\bar{\Delta}_S$, and column 6 to the DID estimator of true switchers $\widehat{\mathrm{DID}}^{S^*}$ in comparison to $\bar{\Delta}_S*$. Note that $\bar{\Delta}_S$ and $\bar{\Delta}_{S^*}$ are identical under constant treatment effects.}
\label{tab:estimator_simulation_RMSE}
\end{table}

Next, we turn to the specification tests from Section \ref{sec:testing}. We make a few minor adjustments to the simulation framework presented above. The number of time periods is decreased to $T=10$ since the asymptotics for $T^m_{1,0}$ and $T^s_{1,0}$  rely on a small to moderate number of elements $\hat{\tau}_{1,0}(t,l)$ relative to the sample size $G$. Specifically, $\hat{\tau}_{0,0}(t,l)$ and $\hat{\tau}_{1,0}(t,l)$ are defined for $t\in\{(l+1),\ldots,T\}$ and $l\in\{2,\ldots,T-1\}$. This results in a cardinality of $\vert \mathcal{C}_{S} \vert=28$, i.e. 28 elements enter the statistics $T^m_{1,0}$ and $T^s_{1,0}$. The specification tests are designed to detect a violation of parallel trends in Assumption \ref{A5} and misclassification. Therefore, the data generating process entails 4 scenarios now:\medskip

\noindent \textbf{1.} Parallel Trends in pre-treatment periods and misclassification (PTM)\medskip

\noindent \textbf{2.} Trend violation  in pre-treatment periods  and misclassification (TVM)\medskip

\noindent \textbf{3.} Parallel Trends  in pre-treatment periods  and no misclassification (PTN)\medskip

\noindent \textbf{4.} Trend violation  in pre-treatment periods  and no misclassification (TVN)\medskip

While misclassification is simulated as before, a violation of the PT assumption  in pre-treatment periods  is simulated by adding a normalized time and group dependent trend: $Y_{g,t}^{TV}(D_{g,t}^*)=Y_{g,t}(D_{g,t}^*)+ \frac{t\cdot g}{G}$. For simplicity, the treatment effect is kept constant over time now. The bootstrap is carried out as described in Section \ref{sec:testing}, and we use the 95-th percentile of the bootstrap distribution as critical value. To decrease the computing time we make use of the warp bootstrap method \citep[][]{Giacomini2013} and only draw a single bootstrap subsample within each Monte Carlo replication, resulting in 2,000 bootstrap samples. 

\noindent Table \ref{tab:tests_simulation} summarizes the rejection rates of the 4 tests for the different DGPs and group sizes. Column 3 is based on PT and misclassification (PTM), meaning that the rejection rates for the PT test in pre-treatment periods  should converge to 0.05, while for the MC test we ought to observe a convergence towards one. Examining the results, for $G=100$, we see that the PT and MC tests underreject relative to the nominal level, even though the underrejections become smaller as $G$ grows.  That said, even for $G= 600$,  the PT tests still slightly underreject in the presence of parallel trends (both with and without misclassification). Power results are generally favorable even for sample sizes as small as $G=100$. Note that the MC test rejects in the absence of misclassification and in the presence of parallel trends, since both patterns are not distinguishable when parallel trends are violated.

\begin{table}[ht!]
\centering
\begin{tabular}{llllll} 
\toprule
\toprule
$\textbf{\# Groups}$ & $\textbf{Tests}$ & $\textbf{H}_\text{PTM}$ & $\textbf{H}_\text{TVM}$ & $\textbf{H}_\text{PTN}$ & $\textbf{H}_\text{TVN}$ \\
\midrule
\multirow{4}{*}{G\ =\ 100} & Sum PT & 0.0300  & 0.9235 & 0.0295 & 0.8345\\
                        & Max PT & 0.0155 & 0.9215 & 0.0145 & 0.8015 \\
                        & Sum MC & 0.9225 & 0.8945 & 0.0360 & 0.9145 \\
                        & Max MC & 0.6655& 0.6010 & 0.0210 & 0.8975 \\
\midrule
\multirow{4}{*}{G\ =\ 300} & Sum PT & 0.0355 & 1 & 0.0420 & 1 \\
                        & Max PT & 0.0300 & 1 & 0.0475 & 1\\
                        & Sum MC &  0.9960 & 1 & 0.0320 & 1\\
                        & Max MC &  0.9795 & 1 & 0.0250 & 1\\
\midrule
\multirow{4}{*}{G\ =\ 600} & Sum PT & 0.0410 & 1 & 0.0450 & 1 \\
                        & Max PT & 0.0375 & 1 & 0.0300 & 1 \\
                        & Sum MC & 0.9985 & 1 & 0.0445 & 1 \\
                        & Max MC &  0.9935 & 1 & 0.0400 & 1\\
\bottomrule
\bottomrule
\end{tabular}
\caption{ \small Rejection rates of the four tests for different group numbers and for the four scenarios: PTM (parallel trends and misclassification), TVM (trend violation and misclassification), PTN (parallel trends and no misclassification) and TVN (trend violation and no misclassification). }
\label{tab:tests_simulation}
\end{table}

\section{Empirical Illustration}\label{sec:empirical}

To illustrate the proposed estimators for true and observed switchers as well as the testing procedure, we revisit the application of \citet{BPRSS2024}. The authors analyze how the introduction of a computer based testing system designed to curb cheating in high stakes national school examinations affected exam scores in Indonesia. As reported in their paper, cheating was a widespread problem in the nation-wide examinations at the end of junior and senior secondary school. Therefore, the Indonesian government gradually implemented computer based testing (CBT) instead of paper based exams, which made cheating highly unlikely and led to a drop in exam scores. \citet{BPRSS2024} point to anticipatory behavior by schools aware of their future treatment status, as treatment follows voluntary sign-up for CBT. That is, in order to soften the drop in exam scores, schools might try to counteract cheating practices prior to their actual treatment. The empirical findings of \citet{BPRSS2024} support the presence of such anticipatory behavior one year before the CBT implementation. 

For the empirical analysis, we follow \citet{BPRSS2024} and use a balanced panel data set on exam results between 2010 and 2019. CBT was introduced between 2015-2019, but the 2015 and 2016 cohorts have been excluded, since their characteristics are different from the comparison groups. The number of observations is thus 38,621 schools for 10 years, see the original paper for further details on the data.

\begin{table}[ht!]
    \centering
    \begin{tabular}{l|c|c|c|c|c|c}
    \toprule
    \toprule
    \multirow{2}{*}{\textbf{Cohort}} & \multicolumn{5}{c}{\textbf{Estimators}}&\\
    & Original & C\&S No Ant.& C\&H No Ant. & $\widehat{\mathrm{DID}}^{S}_t$ & $\widehat{\mathrm{DID}}^{S^{\ast}}_t$ & $\widehat{\lambda}_{t-1}$ \\
    \midrule
    Cohort 2017  &  $-5.21^{\ast}$ & $-3.98^{\ast}$ & $-3.01^{\ast}$ & $-4.87^{\ast}$ & / & / \\
    Cohort 2018  &  $-7.45^{\ast}$ & $-5.61^{\ast}$ & $-4.48^{\ast}$ & $-7.16^{\ast}$ & $-7.83^{\ast}$ & $0.158$ \\
    Cohort 2019  &  $-5.69^{\ast}$ & $-3.78^{\ast}$ & $-3.78^{\ast}$ & $-5.69^{\ast}$ & $-5.91^{\ast}$ & $0.297$ \\
    Combined     &  $-6.30^{\ast}$ & $-4.59^{\ast}$& $-3.87^{\ast}$ & $-6.08^{\ast}$ & $-7.00^{\ast}$ & / \\
    \bottomrule
    \bottomrule
    \end{tabular}
    \caption{Comparison of estimators for ATEs in the cohorts 2017, 2018, 2019 and a weighted combination of the cohorts. The star reflects significance at the 5\% level.}
    \label{tab:Application_estimation}
\end{table}

Table \ref{tab:Application_estimation} reports the treatment effect estimates from the original paper of \citet[][Figure 5 and Table A3]{BPRSS2024}, two estimators not accounting for anticipation and our two proposed estimators, as well as the estimated anticipation probabilities. \citet{BPRSS2024} use the  staggered adoption estimator of \citet{CSA2021} where never treated schools serve as the control group and one year of anticipation is allowed for (Original). For comparison, we also present the estimator of \citet{CSA2021} without accounting for anticipation (C\&S No Ant.) and the \citet{CH2020} estimator that also does not allow for anticipation (C\&H No Ant.). We report both the estimates of the cohort specific effects and of the average effects weighted by the number of groups in each cohort. All estimated ATEs are statistically significant at the 5\% level.\footnote{We construct critical values for $H_{0}:\; ATE=0$ against its negation using the bootstrap as described in Section \ref{sec:testing} with $B=999$.} 

The estimates from the ATE estimator of observed switchers ($\widehat{\mathrm{DID}}^{S}_t$) and from the ATE estimator of True Switchers ($\widehat{\mathrm{DID}}^{S^{\ast}}_t$) are substantially larger than those obtained from estimators that do not account for anticipation, underscoring the importance of taking anticipation into account. Specifically, we see for the 2018 cohort a drop in mean exam scores by 7.50 and 7.83 points, respectively (scores range from 0 to 100). In comparison, not accounting for anticipation yields smaller estimated declines of 5.61 (C\&S No Ant.) or 4.48 points (C\&H No Ant.). The estimates based on \citet{CH2020} are even smaller than those from \citet{CSA2021} without accounting for anticipation, since the bias not only results from the treated groups, but also from the comparison groups. Additionally, we estimate the anticipation probability ($\widehat{\lambda}_{t-1}$) to be 15.8\% for the 2018 cohort, and 29.7\% for the 2019 cohort complementing the picture of non-negligible anticipatory behavior among schools.

The estimates from the ATE estimator of observed switchers ($\widehat{\mathrm{DID}}^{S}_t$) are very similar to the ATE estimates in the original study. Conceptually, this is to be expected since the estimator of \citet{CSA2021} should also recover the ATE of observed switchers as the bias from anticipation in the treatment group is offset by dropping the first pre-treatment time period, and the bias from anticipation in the comparison group is avoided by using never treated groups as a comparison. The small numerical differences between our estimator and the estimator of \citet{CSA2021} are likely to stem from the larger comparison group used in our estimator $\widehat{\mathrm{DID}}^{S}_t$. 

The cohort specific estimates for the ATE of true switchers from $\widehat{\mathrm{DID}}^{S^{\ast}}_t$ on the other hand are somewhat larger in absolute terms than in the original study, possibly capturing not only the effect of CBT, but also the effect of anti-cheating measures implemented prior to the start of CBT.\footnote{Note that the estimate for $\widehat{\mathrm{DID}}^{S^{\ast}}_{2017}$ is missing, since an already treated comparison group is lacking.} Moreover, since the size of the estimated treatment effects decline in magnitude with the post-treatment time periods as reported in \citet{BPRSS2024}, it is reasonable that $\widehat{\mathrm{DID}}^{S^{\ast}}_t$ are larger in magnitude as they reflect the (average) treatment effect at the time when schools actually switched behavior. Note that the combined estimate for the ATE of true switchers should be interpreted with caution, since it is a weighted average of two cohorts only which display larger treatment effects than the first cohort. 

Next and for illustration purposes, the testing framework from Section \ref{sec:testing} is applied to test for the presence of parallel pre-trends and anticipatory behavior. We start by carrying out the tests using never treated schools as control groups similar to \citet{BPRSS2024}. Moreover, tests are conducted separately for the three treatment cohorts 2017, 2018 and 2019, where we use 499 bootstrap draws to obtain (5\%) critical values. Table \ref{tab:Application_testing} displays the test results for the null hypotheses of parallel pre-trends, $H_{0}^{PT}$, and of no misclassification (no anticipation), $H_{0}^{MC}$, for each cohort using one comparison time period.\footnote{For instance, the PT test for the 2017 cohort compares the outcomes of schools that are treated in 2017 to schools that are never treated by differencing between their 2015 and 2014 outcomes. The 2016 outcome is discarded since it could be affected by anticipatory behavior. The misclassification test then compares the same schools and their outcomes in 2015 and 2016.} 

In line with \citet{BPRSS2024}, $H_{0}^{PT}$ is rejected at the 5\% level for the 2017 cohort and not rejected for the 2018 cohort. However, while there is no evidence for a violation of parallel pre-trends in \citet{BPRSS2024} for the 2019 cohort, our test rejects the null here $H_{0}^{PT}$ suggesting a possible violation also in that cohort. Similarly, 
the misclassification tests based on both the sum and the max statistic reject $H_{0}^{MC}$ in the pre-treatment time period for all three cohorts, while the results in \citet{BPRSS2024} only indicate anticipatory effects for the 2018 and 2019 cohorts only. All our test results remain unchanged when using 1\% or 10\% significance levels instead. 

Finally, unreported results also show that constructing the test statistics  using not yet treated (rather than just never treated) schools as comparison groups lead to the same rejection decisions with one exception: The PT test for the 2018 cohort rejects at the 5\% level, but does not reject at the 1\% level. This suggests that the PT assumption might be fragile, at least in periods just prior to the anticipation period. Additional evidence for this interpretation comes from PT tests comparing each cohort to never-treated schools and using three comparison periods instead of one\footnote{For instance, the PT test for the 2017 cohort compares the outcomes of schools that are treated in 2017 to schools that are never treated by differencing between their 2015 and 2014 outcomes, 2015 and 2013 outcomes, and 2015 and 2012 outcomes.}. The results yield rejection decisions across all three cohorts, suggesting increased power when multiple moments are incorporated.

\begin{table}[ht!]
    \centering
    \begin{tabular}{l l |c c|c c }
    \toprule
    \toprule
    \multirow{2}{*}{\textbf{Cohort}} & & \multicolumn{2}{c|}{$\mathbf{H_0^{PT}}$}  & \multicolumn{2}{c}{$\mathbf{H_0^{MC}}$}   \\
    & & Sum PT & Max PT & Sum MC & Max MC \\
    \midrule
    \multirow{2}{*}{Cohort 2017}  & Statistic  & 389503.5 & 624.1 & 59139.8 & 243.2 \\
                                  & Critical Val. & 3966.5 & 63.0 & 4079.1 & 63.9 \\
                                  \midrule
    \multirow{2}{*}{Cohort 2018}  & Statistic & 1737.1 & 41.7 & 131859.8 & 363.1 \\
                                  & Critical Val. & 3553.6 & 59.6 & 2680.8 & 51.8 \\
                                  \midrule
    \multirow{2}{*}{Cohort 2019}  & Statistic & 9536.7 & 97.7 & 140575.9 & 374.9 \\
                                  & Critical Val. & 2712.9 & 52.1 & 3134.7 & 56.0 \\
    \bottomrule
    \bottomrule
    \end{tabular}
    \caption{Tests for parallel pre-trends ($H_0^{PT}$) and no misclassification ($H_0^{MC}$) for cohorts 2017, 2018, and 2019 using the sum and max statistics.}
    \label{tab:Application_testing}
\end{table}

\section{Conclusion}\label{sec:conclusion}

This paper investigates the consequences of treatment misclassification and anticipation in staggered adoption Difference-in-Differences designs, showing that these forms of misspecification can induce  bias  in commonly used estimators such as the TWFE estimator and other estimators suitable for treatment effect heterogeneity. We characterize the structure of these biases, and propose bias-corrected estimators that recover well-defined causal parameters under misclassification and anticipation, namely the ATE for observed and actual switching-to-treatment units.

Additionally, we introduce two specification tests based on moment equalities to detect violations of pre-treatment PT and of the timing and extent of misclassification. This provides a diagnostic framework for empirical researchers to assess the validity of DiD designs under staggered adoption. 

To assess the finite sample performance of our estimators and the moment equality tests, we conduct a small scale Monte Carlo study. Our simulation results illustrate the favorable finite-sample performance of both the proposed estimators and the tests, confirming their robustness under misspecification.

We illustrate the methods revisiting the introduction of a CBT system designed to curb cheating behavior in high stakes national school examinations on exam scores in Indonesia. The results indicate that a sizable fraction of schools ($\approx 15-30\%$ depending on the cohort) has likely already anticipated measures against cheating in the year prior to the official CBT implementation. Moreover, our estimates for the ATE of observed and actual switching-to-treatment units suggest slightly more pronounced drops in mean exam scores when schools react to cheating for the first time relative to when they have already been exposed to these measures.  Our specification tests largely support the evidence for anticipation, though they also hint at a violation of parallel pre-trends in periods just prior to those anticipation periods.

\section{Appendix}

\subsection{Appendix: Additional Quantities}\label{AppendixA}

To study the asymptotic behavior of the estimator $\widehat{\mathrm{DID}}_{t}^{S^{\ast}}$ in (\ref{DIDESTIMATORDAG}), we require some additional notation. To this end, paralleling the notation in \citet{CH2020}, for $(d,d',t)\in\{0,1\}\times \{0,1\}\times \{2,\cdots,T\}$, let
\begin{align*}
	P_{d,d',t}&:=\frac{1}{G} \sum_{g \in G} \mathbb{I}\{D_{g,t} = d, D_{g,t-1} = d'\} N_{g,t},\\
	Q_{d,d',t}&:=\frac{1}{G} \sum_{g \in G} \mathbb{I}\{D_{g,t} = d, D_{g,t-1} = d'\} N_{g,t}\left(Y_{g,t} - Y_{g,t-1}\right).
\end{align*}
We also define:
\begin{align*}
	P_{d,d',t-1}^{[1]}&:=\frac{1}{G} \sum_{g \in G} \mathbb{I}\{D_{g,t} = d, D_{g,t-1} = d'\} N_{g,t-1},\\
	Q_{d,d',t}^{*}&:=\frac{1}{G} \sum_{g \in G} \mathbb{I}\{D_{g,t} = d, D_{g,t-1} = d'\} N_{g,t}\left(Y_{g,t-1} - Y_{g,t-2}\right),\\
	Q_{d,d',t}^{**}&:=\frac{1}{G} \sum_{g \in G} \mathbb{I}\{D_{g,t+1} = d, D_{g,t} = d'\} N_{g,t}\left(Y_{g,t} - Y_{g,t-1}\right),\\
	Q_{d,d',t-1}^{***}&:=\frac{1}{G} \sum_{g \in G} \mathbb{I}\{D_{g,t} = d, D_{g,t-1} = d'\} N_{g,t-1}\left(Y_{g,t} - Y_{g,t-1}\right),
\end{align*}
where we set \(Q_{d,d',2}^{*}=0\) and \(Q_{d,d',T}^{**}=0\).
We collect the following quantities into a vector, which will be relevant for later use:
\begin{align}\label{VVECTOR}
	 \mathbf{V}:=\left(\mathbf{V}_2',\cdots,\mathbf{V}_T'\right)',
\end{align}
where
\begin{align*}
	\mathbf{V}_t &:= \left( 
P_{1,0,t}, 
P_{0,0,t}, 
P^{[1]}_{1,0,t-1}, 
P^{[1]}_{0,0,t-1}, 
P^{[1]}_{1,1,t-1}, 
Q_{1,0,t}, 
Q_{0,0,t}, 
Q^{*}_{1,0,t}, 
Q^{*}_{0,0,t}, 
Q^{**}_{1,0,t}, 
Q^{**}_{0,0,t}, 
Q^{***}_{1,0,t-1}, 
Q^{***}_{1,1,t-1}, 
\right)'.
\end{align*}
and denote its asymptotic variance by:
\begin{equation}\label{ASYVARTS}
\mathbf{\Sigma}_{\mathbf{V}}:= \lim_{G\rightarrow \infty}\text{Var}\left[\sqrt{G} \ \mathbf{V}\right],
\end{equation}

For the specification test, we define a similar set of quantities. Specifically, let
\[
Q_{0,0}(t,l)\equiv \frac{1}{G}\sum_{g\in G} \mathbb{I}\{D_{g, t}=0, D_{g, t-1}=0\} N_{g,t}\left(Y_{g, t-1}-Y_{g, t-l}\right),
\]
\[
P_{0,0}(t,l)\equiv \frac{1}{G}\sum_{g\in G} \mathbb{I}\{D_{g, t}=0, D_{g, t-1}=0\} N_{g,t}.
\]
Accordingly, define:
\[
Q_{1,0}(t,l)\equiv \frac{1}{G}\sum_{g\in G} \mathbb{I}\{D_{g, t}=1, D_{g, t-1}=0\} N_{g,t}\left(Y_{g, t-1}-Y_{g, t-l}\right),
\]
as well as
\[
P_{1,0}(t,l)\equiv \frac{1}{G}\sum_{g\in G} \mathbb{I}\{D_{g, t}=1, D_{g, t-1}=0\} N_{g,t}.
\]
Next, we let:
\[
\mathbf{U}(t,l)=\left(\begin{array}{c}P_{0,0}(t,l)\\Q_{0,0}(t,l)\\P_{1,0}(t,l)\\Q_{1,0}(t,l) \end{array}\right),
\]
while $\mathbf{U}$ denotes the vector of stacked $\mathbf{U}(t,l)$ for all combinations $(t,l)$ in $\mathcal{C}_{S}$, which is of dimension $\vert \mathcal{C}_{S}\vert \times 1$. We also define the asymptotic variance:
\begin{equation}\label{ASYVAR}
\mathbf{\Sigma}_{\mathbf{U}}\equiv \lim_{G\rightarrow \infty}\text{Var}\left[\sqrt{G} \ \mathbf{U}\right],
\end{equation}
a $(4\cdot\vert \mathcal{C}_{S}\vert) \times(4\cdot \vert \mathcal{C}_{S}\vert)$ matrix given by:
\[
\boldsymbol{\Sigma}_{\mathbf{U}}\equiv \left(\begin{array}{ccc}\boldsymbol{\Sigma}_{(t,l)}&\ldots & \boldsymbol{\Sigma}_{(t,l),(j,k)}\\ \vdots &\ddots & \vdots \\ \boldsymbol{\Sigma}_{(j,k),(t,l)} &\ldots &\boldsymbol{\Sigma}_{(j,k)} 
\end{array}\right)
\] 
with elements:\footnotesize
\begin{align*}
&\boldsymbol{\Sigma}_{(t,l)}\\
=&\left(\begin{array}{cccc} \text{Var}\left[P_{0,0}(t,l)\right] &\text{Cov}\left[P_{0,0}(t,l),Q_{0,0}(t,l)\right] &0 &0\\
\text{Cov}\left[P_{0,0}(t,l),Q_{0,0}(t,l)\right] &\text{Var}\left[Q_{0,0}(t,l)\right]  &0&0\\
0 &0& \text{Var}\left[P_{1,0}(t,l)\right] & \text{Cov}\left[P_{1,0}(t,l),Q_{1,0}(t,l)\right]\\
0&0&\text{Cov}\left[P_{1,0}(t,l),Q_{1,0}(t,l)\right] &  \text{Var}\left[Q_{1,0}(t,l)\right]
 \end{array}\right)
\end{align*}\normalsize

\subsection{Appendix: Proofs}\label{AppendixB}

\begin{proof}[Proof of Proposition \ref{PROPCHARACTERIZATION}]
First, for Assumption \ref{A3}, we have:
\begin{align*}
&\quad \mathrm{E}\left[Y_{g, t}(0) \mid \mathbf{D}\right]-\mathrm{E}\left[Y_{g, t}(0) \mid \mathbf{D}_g\right]\\
	&=\mathrm{E}\left[\mathrm{E}\left[Y_{g, t}(0) \mid \mathbf{D}^*,\mathbf{D}\right]\mid \mathbf{D}\right]-\mathrm{E}\left[\mathrm{E}\left[Y_{g, t}(0)\mid \mathbf{D}_g^*,\mathbf{D}_g\right]\mid \mathbf{D}_g\right]\\
	&=\mathrm{E}\left[\mathrm{E}\left[Y_{g, t}(0) \mid \mathbf{D}^*\right]\mid \mathbf{D}\right]-\mathrm{E}\left[\mathrm{E}\left[Y_{g, t}(0)\mid \mathbf{D}_g^*\right]\mid \mathbf{D}_g\right]\\
	&=\mathrm{E}\left[\mathrm{E}\left[Y_{g, t}(0) \mid \mathbf{D}_g^*\right]\mid \mathbf{D}\right]-\mathrm{E}\left[\mathrm{E}\left[Y_{g, t}(0)\mid \mathbf{D}_g^*\right]\mid \mathbf{D}_g\right]\\
	&=\mathrm{E}\left[\mathrm{E}\left[Y_{g, t}(0) \mid \mathbf{D}_g^*\right]\mid \mathbf{D}_g\right]-\mathrm{E}\left[\mathrm{E}\left[Y_{g, t}(0)\mid \mathbf{D}_g^*\right]\mid \mathbf{D}_g\right]\\
	&=0,
\end{align*}
where the second equality follows from condition (\ref{EQ:PROP1d}), the third from (\ref{EQ:PROP1a}), and the fourth from (\ref{EQ:PROP1e}).

Next, for Assumption \ref{A4}, we similarly get:
\begin{align*}
&\quad \mathrm{E}\left[{Y}_{g, t}(0)-{Y}_{g, t-1}(0) \mid \mathbf{D}_g\right]-\mathrm{E}\left[{Y}_{g, t}(0)-{Y}_{g, t-1}(0)\right]\\
	&=\mathrm{E}\left[\mathrm{E}\left[{Y}_{g, t}(0)-{Y}_{g, t-1}(0) \mid \mathbf{D}_g^*,\mathbf{D}_g\right]\mid \mathbf{D}_g\right]-\mathrm{E}\left[{Y}_{g, t}(0)-{Y}_{g, t-1}(0)\right]\\
	&=\mathrm{E}\left[\mathrm{E}\left[{Y}_{g, t}(0)-{Y}_{g, t-1}(0) \mid \mathbf{D}_g^*\right]\mid \mathbf{D}_g\right]-\mathrm{E}\left[{Y}_{g, t}(0)-{Y}_{g, t-1}(0)\right]\\
	&=\mathrm{E}\left[{Y}_{g, t}(0)-{Y}_{g, t-1}(0)\right]-\mathrm{E}\left[{Y}_{g, t}(0)-{Y}_{g, t-1}(0)\right]\\
	&=0,
\end{align*}
where the second equality follows from (\ref{EQ:PROP1c}) and the third from (\ref{EQ:PROP1b}).
\end{proof}

\begin{proof}[Proof of Proposition \ref{PROPTWFE}]

By Assumption \ref{A2} and the Frisch--Waugh--Lovell theorem, we have:
\[
  \mathrm{E}\!\left[\widehat{\beta}_{fe}\mid\mathbf D\right] =
  \frac{\sum_{(g,t)} N_{g,t}\varepsilon_{g,t}
        \mathrm{E}\!\left[Y_{g,t}\mid\mathbf D\right]}
       {\sum_{(g,t)} N_{g,t}D_{g,t}\varepsilon_{g,t}},
\]
where $\varepsilon_{g,t}$ is the estimated residual from the unit-level regression of $D_{g,t}$ on
group and time fixed effects:
\[D_{g,t}=\alpha+\gamma_g+\lambda_t+\varepsilon_{g,t}.\]
As remarked in the proof of Theorem 1 in \citet{CH2020}, it holds by construction that:
\begin{equation}\label{PROJECTION1}
\sum_{t=1}^{T} N_{g,t}\varepsilon_{g,t}=0
\end{equation}
for all $g\in\{1,\ldots,G\}$ and
\begin{equation}\label{PROJECTION2}
\sum_{g=1}^{G} N_{g,t}\varepsilon_{g,t}=0
\end{equation}
for all $t\in\{1,\ldots,T\}$. Focusing on the numerator, we apply (\ref{PROJECTION1}) and (\ref{PROJECTION2}) to obatin:
\begin{align*}
\sum_{(g,t)} N_{g,t}\varepsilon_{g,t}\mathrm{E}\!\left[Y_{g,t}\mid\mathbf D\right]
  &= \sum_{(g,t)} N_{g,t}\varepsilon_{g,t}
     \Bigl(\mathrm{E}[Y_{g,t}\mid\mathbf D]
          -\mathrm{E}[Y_{g,1}\mid\mathbf D]
          -\mathrm{E}[Y_{1,t}\mid\mathbf D]
          +\mathrm{E}[Y_{1,1}\mid\mathbf D]\Bigr).
\end{align*}
For each individual term, by \(Y_{i,g,t}=Y_{i,g,t}(0)+D_{i,g,t}^{*}\Delta\)  and Assumption \ref{A3}, we have:
\begin{align*}
\mathrm{E}\left[Y_{g,t}\mid \mathbf{D}\right]
&=\mathrm{E}\left[\frac{1}{N_{g,t}}\sum_{i=1}^{N_{g,t}}Y_{i,g,t}(0)+ D_{i,g,t}^{\ast}\Delta\mid \mathbf{D}\right]=\mathrm{E}\left[Y_{g,t}(0)\mid \mathbf{D}_{g}\right]+ \Delta \left(\frac{1}{N_{g,t}}\sum_{i=1}^{N_{g,t}} \mathrm{Pr}\left(D_{i,g,t}^{\ast}=1\mid \mathbf{D}\right)\right).
\end{align*} 
for all $(g,t)$. Thus, by Assumptions \ref{A4} and \ref{A5}, we have:
\begin{align*}
&\quad \ \sum_{(g,t)}N_{g,t}\varepsilon_{g,t}\left(\mathrm{E}\left[Y_{g,t}\mid \mathbf{D}\right]-\mathrm{E}\left[Y_{g,1}\mid \mathbf{D}\right]-\mathrm{E}\left[Y_{1,t}\mid \mathbf{D}\right]+\mathrm{E}\left[Y_{1,1}\mid \mathbf{D}\right]\right)\\
&=\sum_{(g,t)}N_{g,t}\varepsilon_{g,t}\left(\mathrm{E}\left[Y_{g,t}(0)\right]-\mathrm{E}\left[Y_{g,1}(0)\right]-\mathrm{E}\left[Y_{1,t}(0)\right]+\mathrm{E}\left[Y_{1,1}(0)\right]\right)\\
&\quad +\Delta\sum_{(g,t)}N_{g,t}\varepsilon_{g,t}\left(\frac{1}{N_{g,t}}\sum_{i=1}^{N_{g,t}} \mathrm{Pr}\left(D_{i,g,t}^{\ast}=1\mid \mathbf{D}\right) - \frac{1}{N_{g,1}}\sum_{i=1}^{N_{g,1}} \mathrm{Pr}\left(D_{i,g,1}^{\ast}=1\mid \mathbf{D}\right)\right.\\
&\left.\qquad\qquad\qquad\qquad\quad  -\frac{1}{N_{1,t}}\sum_{i=1}^{N_{1,t}} \mathrm{Pr}\left(D_{i,1,t}^{\ast}=1\mid \mathbf{D}\right) +\frac{1}{N_{1,1}}\sum_{i=1}^{N_{1,1}} \mathrm{Pr}\left(D_{i,1,1}^{\ast}=1\mid \mathbf{D}\right)\right)\\
=&\Delta\sum_{(g,t)}N_{g,t}\varepsilon_{g,t}\left(\frac{1}{N_{g,t}}\sum_{i=1}^{N_{g,t}} \mathrm{Pr}\left(D_{i,g,t}^{\ast}=1\mid \mathbf{D}\right) \right),
\end{align*}
where the last line follows again from (\ref{PROJECTION1}) and (\ref{PROJECTION2}). Thus, we obtain:
\begin{align*}
	\mathrm{E}\left[\widehat{\beta}_{fe}\mid \mathbf{D}\right]=\Delta\frac{\sum_{(g,t)}N_{g,t}\varepsilon_{g,t}\left(\frac{1}{N_{g,t}}\sum_{i=1}^{N_{g,t}} \mathrm{Pr}\left(D_{i,g,t}^{\ast}=1\mid \mathbf{D}\right) \right)}{\sum_{(g,t)}N_{g,t}D_{g,t}\varepsilon_{g,t}},
\end{align*}
and the result of the proposition immediately follows.

\end{proof}

\begin{proof}[Proof of Proposition \ref{PROPESTIMAND}] 
First, note that
\begin{align*}
\mathrm{E}\left[\widehat{\operatorname{DID}}\right]&=\mathrm{E}\left[\sum_{t=2}^T \frac{N_{1,0,t}}{N_{S}} \mathrm{E}\left[\widehat{\operatorname{DID}}_{ t} \mid \mathbf{D}\right]\right].
\end{align*}
 Besides, by Assumptions \ref{A1}-\ref{A5}:
 \begin{align}\label{EQ:PROP3PROOF}
 \begin{aligned}
 \mathrm{E}\left[Y_{g, t}-Y_{g, t-1}\mid \mathbf{{D}}\right]
	&=\frac{1}{N_{g,t}}\sum_{i=1}^{N_{g,t}}\mathrm{E}\left[{D}_{i,g,t}^*\left({Y}_{i,g, t}(1)-{Y}_{i,g, t}(0)\right)+{Y}_{i,g, t}(0)\mid \mathbf{{D}}\right]\\
	&\quad -\frac{1}{N_{g,t-1}}\sum_{i=1}^{N_{g,t-1}}\mathrm{E}\left[{D}_{i,g,t-1}^*\left({Y}_{i,g, t-1}(1)-{Y}_{i,g, t-1}(0)\right)+{Y}_{i,g, t-1}(0)\mid \mathbf{{D}}\right]\\
	&=\frac{1}{N_{g,t}}\sum_{i=1}^{N_{g,t}}\mathrm{E}\left[{D}_{i,g,t}^*\left({Y}_{i,g, t}(1)-{Y}_{i,g, t}(0)\right)\mid \mathbf{{D}}\right]\\
	&\quad -\frac{1}{N_{g,t-1}}\sum_{i=1}^{N_{g,t-1}}\mathrm{E}\left[{D}_{i,g,t-1}^*\left({Y}_{i,g, t-1}(1)-{Y}_{i,g, t-1}(0)\right)\mid \mathbf{{D}}\right]\\
	&\quad +\mathrm{E}\left[{Y}_{1, t}(0)-{Y}_{1, t-1}(0)\right].
 \end{aligned}
\end{align}
Let $t\geq 2$.  Suppose that there exists some switching-to-treatment group $g\in\{1,\cdots,G\}$ such that $D_{g, t-1}=0$, $D_{g, t}=1$. By Assumption \ref{A8}, there exists also some comparison group  $g^{\prime} \in\{1, \ldots, G\}$ with $g^{\prime} \neq g$ such that $D_{g^{\prime}, t-1}=D_{g^{\prime}, t}=0.$ In this case, 
\allowdisplaybreaks
\begin{align*}
&\quad \ N_{1,0,t}\mathrm{E}\left[\widehat{\operatorname{DID}}_{ t} \mid \mathbf{D}\right]\\
&= \sum_{g: D_{g, t}=1, D_{g, t-1}=0} N_{g,t}\mathrm{E}\left[Y_{g, t}-Y_{g, t-1}\mid \mathbf{D}\right] - \frac{N_{1,0,t}}{N_{0,0,t}}\sum_{g: D_{g, t}=0, D_{g, t-1}=0} N_{g,t}\mathrm{E}\left[Y_{g, t}-Y_{g, t-1} \mid \mathbf{D}\right]\\
&= \sum_{g: D_{g, t}=1, D_{g, t-1}=0}  \sum_{i=1}^{N_{g,t}} \mathrm{E}\left[{D}_{i,g,t}^*\left(Y_{i, g, t}(1)-Y_{i, g, t}(0)\right)\mid \mathbf{D}\right]\\
&\quad -\sum_{g: D_{g, t}=1, D_{g, t-1}=0}  \frac{N_{g,t}}{N_{g,t-1}}\sum_{i=1}^{N_{g,t-1}}\mathrm{E}\left[{D}_{i,g,t-1}^*\left({Y}_{i,g, t-1}(1)-{Y}_{i,g, t-1}(0)\right)\mid \mathbf{{D}}\right]\\
&\quad +\sum_{g: D_{g, t}=1, D_{g, t-1}=0} N_{g,t}\mathrm{E}\left[{Y}_{1, t}(0)-{Y}_{1, t-1}(0)\right]\\
&\quad -\frac{N_{1,0,t}}{N_{0,0,t}}\sum_{g: D_{g, t}=0, D_{g, t-1}=0}  \sum_{i=1}^{N_{g,t}} \mathrm{E}\left[{D}_{i,g,t}^*\left(Y_{i, g, t}(1)-Y_{i, g, t}(0)\right)\mid \mathbf{D}\right]\\
&\quad +\frac{N_{1,0,t}}{N_{0,0,t}} \sum_{g: D_{g, t}=0, D_{g, t-1}=0}  \frac{N_{g,t}}{N_{g,t-1}}\sum_{i=1}^{N_{g,t-1}}\mathrm{E}\left[{D}_{i,g,t-1}^*\left({Y}_{i,g, t-1}(1)-{Y}_{i,g, t-1}(0)\right)\mid \mathbf{{D}}\right]\\
&\quad -\frac{N_{1,0,t}}{N_{0,0,t}} \sum_{g: D_{g, t}=0, D_{g, t-1}=0} N_{g,t}\mathrm{E}\left[{Y}_{1, t}(0)-{Y}_{1, t-1}(0)\right]\\
&=\sum_{g: D_{g, t}=1, D_{g, t-1}=0}  \sum_{i=1}^{N_{g,t}} \mathrm{E}\left[Y_{i, g, t}(1)-Y_{i, g, t}(0)\mid \mathbf{D}\right]\\
&\quad -\sum_{g: D_{g, t}=1, D_{g, t-1}=0}  \frac{N_{g,t}}{N_{g,t-1}}\sum_{i=1}^{N_{g,t-1}}\mathrm{E}\left[{D}_{i,g,t-1}^*\left({Y}_{i,g, t-1}(1)-{Y}_{i,g, t-1}(0)\right)\mid \mathbf{{D}}\right]\\
&\quad -\frac{N_{1,0,t}}{N_{0,0,t}}\sum_{g: D_{g, t}=0, D_{g, t-1}=0}  \sum_{i=1}^{N_{g,t}} \mathrm{E}\left[{D}_{i,g,t}^*\left(Y_{i, g, t}(1)-Y_{i, g, t}(0)\right)\mid \mathbf{D}\right],
\end{align*}
where the second equality follows from the formula for $\mathrm{E}\left[Y_{g, t}-Y_{g, t-1}\mid \mathbf{{D}}\right]$, 
and the last from Assumption \ref{A7}. If there does not exist any switching-to-treatment group $g\in\{1,\cdots,G\}$ such that $D_{g, t-1}=0$, $D_{g, t}=1$, the above formula still holds, as $\widehat{\operatorname{DID}}_{ t}=0$, $\mathbb{I}\left\{D_{g, t}=1, D_{g, t-1}=0\right\}=0$ and $N_{1,0,t}=0$ in this case. Thus,
\allowdisplaybreaks
\begin{align*}
 \mathrm{E}\left[\widehat{\mathrm{DID}}\right] &=  \mathrm{E}\left[\mathrm{E}\left[\widehat{\mathrm{DID}} \mid \mathbf{D}\right]\right]\\
 &=\mathrm{E}\left[\sum_{t\geq 2} \frac{N_{1,0,t}}{N_{S}} \mathrm{E}\left[\widehat{\operatorname{DID}}_{ t} \mid \mathbf{D}\right]\right]\\
 &= \mathrm{E}\left[\sum_{t\geq 2}\frac{1}{N_{S}} \sum_{g: D_{g, t}=1, D_{g, t-1}=0}\sum_{i=1}^{N_{g,t}}( Y_{i, g, t}(1)-Y_{i, g, t}(0))\right]\\
 &\quad - \mathrm{E}\left[\sum_{t\geq 2}\frac{1}{N_{S}} \sum_{g: D_{g, t}=1, D_{g, t-1}=0}\sum_{i=1}^{N_{g,t}}\frac{N_{g,t}}{N_{g,t-1}}{D}_{i,g,t-1}^*\left(Y_{i, g, t-1}(1)-Y_{i, g, t-1}(0)\right)\right]\\
 &\quad - \mathrm{E}\left[\sum_{t\geq 2}\frac{N_{1,0,t}}{N_{S}N_{0,0,t}} \sum_{g: D_{g, t}=0, D_{g, t-1}=0}\sum_{i=1}^{N_{g,t}} {D}_{i,g,t}^*\left(Y_{i, g, t}(1)-Y_{i, g, t}(0)\right)\right]\\
 &= \underbrace{\mathrm{E}\left[\frac{1}{N_{S}} \sum_{(g,t):t\geq 2, D_{g, t}=1, D_{g, t-1}=0}\sum_{i=1}^{N_{g,t}} (Y_{i, g, t}(1)-Y_{i, g, t}(0))\right]}_{\delta^S}\\
 &\quad - \mathrm{E}\left[\frac{1}{N_{S}}\sum_{(g,t):t\geq 2, D_{g, t}=1, D_{g, t-1}=0}\sum_{i=1}^{N_{g,t}}\frac{N_{g,t}}{N_{g,t-1}}{D}_{i,g,t-1}^*\left(Y_{i, g, t-1}(1)-Y_{i, g, t-1}(0)\right)\right]\\
 &\quad - \mathrm{E}\left[\frac{1}{N_{S}}\sum_{(g,t):t\geq 2, D_{g, t}=0, D_{g, t-1}=0}\sum_{i=1}^{N_{g,t}}\frac{N_{1,0,t}}{N_{0,0,t}} {D}_{i,g,t}^*\left(Y_{i, g, t}(1)-Y_{i, g, t}(0)\right)\right]
\end{align*}
\end{proof} 

\begin{proof}[Proof of Proposition \ref{PROPESTIMANDTRUE}] 
Recall from the previous proof that under Assumption \ref{A1}-\ref{A8}: 
\begin{align*}
 \mathrm{E}\left[\widehat{\mathrm{DID}}\right] =&  \mathrm{E}\left[\mathrm{E}\left[\widehat{\mathrm{DID}} \mid \mathbf{D}\right]\right]\\
  =&  \mathrm{E}\left[\frac{1}{N_{S}} \sum_{(g,t):t\geq 2, D_{g, t}=1, D_{g, t-1}=0}\sum_{i=1}^{N_{g,t}} (Y_{i, g, t}(1)-Y_{i, g, t}(0))\right]\\
 &\quad - \mathrm{E}\left[\frac{1}{N_{S}}\sum_{(g,t):t\geq 2, D_{g, t}=1, D_{g, t-1}=0}\sum_{i=1}^{N_{g,t}}\frac{N_{g,t}}{N_{g,t-1}}{D}_{i,g,t-1}^*\left(Y_{i, g, t-1}(1)-Y_{i, g, t-1}(0)\right)\right]\\
 &\quad - \mathrm{E}\left[\frac{1}{N_{S}}\sum_{(g,t):t\geq 2, D_{g, t}=0, D_{g, t-1}=0}\sum_{i=1}^{N_{g,t}}\frac{N_{1,0,t}}{N_{0,0,t}} {D}_{i,g,t}^*\left(Y_{i, g, t}(1)-Y_{i, g, t}(0)\right)\right]\\
\equiv &A_{N_{S}} + B_{N_{S}} + C_{N_{S}}
\end{align*}
with the obvious notation. We only decompose $A_{N_{S}}$ and $C_{N_{S}}$, while $B_{N_{S}}$ will contribute directly to the bias term in the proposition. Now, noting that $\mathbb{I}\left\{D_{g, t}=1, D_{g, t-1}=0\right\}= \mathbb{I}\left\{D_{g, t}=1, D_{g, t-1}=0\right\}D_{g, t}^*$ by Assumption \ref{A7}, and $D_{i,g,t}^{\ast}=S_{i,g,t}^{\ast}D_{i,g,t}^{\ast} + (1 - S_{i,g,t}^{\ast})D_{i,g,t}^{\ast}$, we have for $A_{N_{S}}$ that:
\begin{align*}
A_{N_{S}}&=\mathrm{E}\left[\frac{1}{N_{S}} \sum_{(g,t):t\geq 2, D_{g, t}=1, D_{g, t-1}=0}\sum_{i=1}^{N_{g,t}} D_{i,g,t}^{\ast}(Y_{i, g, t}(1)-Y_{i, g, t}(0))\right]\\
&=\mathrm{E}\left[\frac{1}{N_{S^{\ast}}}\sum_{(g,t):t\geq 2, D_{g, t}=1, D_{g, t-1}=0} \sum_{i=1}^{N_{g,t}} D^{\ast}_{i,g, t} S_{i,g,t}^{\ast}\left(Y_{i, g, t}(1)-Y_{i, g, t}(0)\right)	\right]\\
&\quad +\mathrm{E}\left[\left(\frac{1}{N_{S}}-\frac{1}{N_{S^{\ast}}}\right) \sum_{(g,t):t\geq 2, D_{g, t}=1, D_{g, t-1}=0} \sum_{i=1}^{N_{g,t}} D^{\ast}_{i,g, t} S_{i,g,t}^{\ast}\left(Y_{i, g, t}(1)-Y_{i, g, t}(0)\right)	\right]\\
  &\quad + \mathrm{E}\left[\frac{1}{N_{S}} \sum_{(g,t):t\geq 2, D_{g, t}=1, D_{g, t-1}=0} \sum_{i=1}^{N_{g,t}} D^{\ast}_{i,g, t}  \left(1-S_{i,g,t}^{\ast}\right)\left(Y_{i, g, t}(1)-Y_{i, g, t}(0)\right)\right].
\end{align*}
On the other hand, the term $C_{N_{S}}$ can be decomposed as follows:
\allowdisplaybreaks
\begin{align*}
C_{N_{S}}&=\mathrm{E}\left[\frac{1}{N_{S}}\sum_{(g,t):t\geq 2, D_{g, t}=0, D_{g, t-1}=0}\sum_{i=1}^{N_{g,t}} {D}_{i,g,t}^*\left(Y_{i, g, t}(1)-Y_{i, g, t}(0)\right)\right]\\
&\quad -\mathrm{E}\left[\frac{1}{N_{S}}\sum_{(g,t):t\geq 2, D_{g, t}=0, D_{g, t-1}=0}\sum_{i=1}^{N_{g,t}}\left(\frac{N_{1,0,t}}{N_{0,0,t}}-1\right) {D}_{i,g,t}^*\left(Y_{i, g, t}(1)-Y_{i, g, t}(0)\right)\right]\\
&\equiv C_{N_{S}}(1) + C_{N_{S}}(2), 
\end{align*}
where $C_{N_{S}}(1)$ can be decomposed further as:
\begin{align*}
C_{N_{S}}(1)
&=\mathrm{E}\left[\frac{1}{N_{S^{\ast}}}\sum_{(g,t):t\geq 2, D_{g, t}=0, D_{g, t-1}=0}\sum_{i=1}^{N_{g,t}} {D}_{i,g,t}^*S_{i,g,t}^{\ast}\left(Y_{i, g, t}(1)-Y_{i, g, t}(0)\right)\right]\\
&\quad +\mathrm{E}\left[\left(\frac{1}{N_{S}}-\frac{1}{N_{S^{\ast}}}\right)\sum_{(g,t):t\geq 2, D_{g, t}=0, D_{g, t-1}=0}\sum_{i=1}^{N_{g,t}} {D}_{i,g,t}^*S_{i,g,t}^{\ast}\left(Y_{i, g, t}(1)-Y_{i, g, t}(0)\right)\right]\\
&\quad +\mathrm{E}\left[\frac{1}{N_{S}}\sum_{(g,t):t\geq 2, D_{g, t}=0, D_{g, t-1}=0}\sum_{i=1}^{N_{g,t}} {D}_{i,g,t}^*\left(1-S_{i,g,t}^{\ast}\right)\left(Y_{i, g, t}(1)-Y_{i, g, t}(0)\right)\right].
\end{align*}
Note in fact that the first term on the RHS of the last equation is equal to:
\[
\mathrm{E}\left[\frac{1}{N_{S^{\ast}}}\sum_{(g,t):t\geq 2, D_{g, t}=1, D_{g, t-1}=0}\sum_{i=1}^{N_{g,t-1}} {D}_{i,g,t-1}^*S_{i,g,t-1}^{\ast}\left(Y_{i, g, t-1}(1)-Y_{i, g, t-1}(0)\right)\right]
\]
by the assumptions on the switching probabilities in Proposition \ref{PROPESTIMANDTRUE}. Finally, we have that:
\begin{align*}
		\delta^{S^{*}}
	&=\mathrm{E}\left[\frac{1}{N_{S^*}}\sum_{(i, g,t):  t\geq 2, D_{i,g,t}^{\ast}=1, S_{i,g, t}^* =1} \left({Y}_{i, g, t}(1)-{Y}_{i, g, t}(0)\right)\right]\\
	&=\mathrm{E}\left[\frac{1}{N_{S^*}}\sum_{(i, g,t):  t\geq 2, D_{i,g,t}^{\ast}=1, S_{i,g, t}^* =1,D_{g, t}=1, D_{g, t-1}=0} \left({Y}_{i, g, t}(1)-{Y}_{i, g, t}(0)\right)\right]\\
	&\quad +\mathrm{E}\left[\frac{1}{N_{S^*}}\sum_{(i, g,t):  t\geq 2, D_{i,g,t-1}^{\ast}=1, S_{i,g, t-1}^* =1,D_{g, t}=1, D_{g, t-1}=0} \left({Y}_{i, g, t-1}(1)-{Y}_{i, g, t-1}(0)\right)\right]\\
	&=\mathrm{E}\left[\frac{1}{N_{S^*}}\sum_{(g,t):  t\geq 2, D_{g, t}=1, D_{g, t-1}=0} \sum_{i=1}^{N_{g,t}} D_{i,g,t}^{\ast} S_{i,g, t}^*\left({Y}_{i, g, t}(1)-{Y}_{i, g, t}(0)\right)\right]\\
	&\quad +\mathrm{E}\left[\frac{1}{N_{S^*}}\sum_{(g,t):  t\geq 2, D_{g, t}=1, D_{g, t-1}=0} \sum_{i=1}^{N_{g,t-1}}D_{i,g,t-1}^{\ast} S_{i,g, t-1}^*\left({Y}_{i, g, t-1}(1)-{Y}_{i, g, t-1}(0)\right)\right],
	\end{align*}
where the second equality follows from the assumptions on the switching probabilities. Thus, collecting all terms, we obtain:
\allowdisplaybreaks
\begin{align*}
 \mathrm{E}\left[\widehat{\mathrm{DID}}\right] &=  \mathrm{E}\left[\mathrm{E}\left[\widehat{\mathrm{DID}} \mid \mathbf{D}\right]\right]\\
  &=\delta^{S^{\ast}} \\
&\quad +\mathrm{E}\left[\left(\frac{1}{N_{S}}-\frac{1}{N_{S^{\ast}}}\right) \sum_{(g,t):t\geq 2, D_{g, t}=1, D_{g, t-1}=0} \sum_{i=1}^{N_{g,t}} D^{\ast}_{i,g, t} S_{i,g,t}^{\ast}\left(Y_{i, g, t}(1)-Y_{i, g, t}(0)\right)	\right]\\
  &\quad + \mathrm{E}\left[\frac{1}{N_{S}} \sum_{(g,t):t\geq 2, D_{g, t}=1, D_{g, t-1}=0} \sum_{i=1}^{N_{g,t}} D^{\ast}_{i,g, t}  \left(1-S_{i,g,t}^{\ast}\right)\left(Y_{i, g, t}(1)-Y_{i, g, t}(0)\right)\right]\\
 &\quad - \mathrm{E}\left[\frac{1}{N_{S}}\sum_{(g,t):t\geq 2, D_{g, t}=1, D_{g, t-1}=0}\sum_{i=1}^{N_{g,t}}\frac{N_{g,t}}{N_{g,t-1}}{D}_{i,g,t-1}^*\left(Y_{i, g, t-1}(1)-Y_{i, g, t-1}(0)\right)\right]\\
&\quad +\mathrm{E}\left[\left(\frac{1}{N_{S}}-\frac{1}{N_{S^{\ast}}}\right)\sum_{(g,t):t\geq 2, D_{g, t}=0, D_{g, t-1}=0}\sum_{i=1}^{N_{g,t}} {D}_{i,g,t}^*S_{i,g,t}^{\ast}\left(Y_{i, g, t}(1)-Y_{i, g, t}(0)\right)\right]\\
&\quad +\mathrm{E}\left[\frac{1}{N_{S}}\sum_{(g,t):t\geq 2, D_{g, t}=0, D_{g, t-1}=0}\sum_{i=1}^{N_{g,t}} {D}_{i,g,t}^*\left(1-S_{i,g,t}^{\ast}\right)\left(Y_{i, g, t}(1)-Y_{i, g, t}(0)\right)\right]\\
&\quad -\mathrm{E}\left[\frac{1}{N_{S}}\sum_{(g,t):t\geq 2, D_{g, t}=0, D_{g, t-1}=0}\sum_{i=1}^{N_{g,t}}\left(\frac{N_{1,0,t}}{N_{0,0,t}}-1\right) {D}_{i,g,t}^*\left(Y_{i, g, t}(1)-Y_{i, g, t}(0)\right)\right].
\end{align*}
\end{proof}

\begin{proof}[Proof of Proposition \ref{PROP6}] First, we consider  $\widehat{\operatorname{DID}}_{ t}^{**}$.   For  $2\leq t\leq T-1$, suppose that there exists a switching-to-treatment group $g\in\{1,\cdots,G\}$ such that $D_{g, t}=0$, $D_{g, t+1}=1$, then,  by Assumption \ref{A8}, there exists also some comparison group  $g^{\prime} \in\{1, \ldots, G\}$ with $g^{\prime} \neq g$ such that $D_{g^{\prime}, t}=D_{g^{\prime}, t+1}=0$. As a consequence of Assumption \ref{A6}, $D_{g^{\prime}, t-1}=0$, and Assumption \ref{A1} implies that all denominators are nonzero.    Recall that  $N_{1,0,t}^{[1]}:=\sum_{g:  D_{g, t+1}=1, D_{g, t}=0} N_{g,t}$ and $N_{0,0,t}^{[1]}:=\sum_{g:  D_{g, t+1}=0, D_{g, t}=0} N_{g,t}$. By (\ref{EQ:PROP3PROOF}) in the proof of Proposition \ref{PROPESTIMAND}:
\allowdisplaybreaks
\small
\begin{align*}
&\quad \ \frac{N_{1,0,t}^{[1]}N_{1,0,t}}{N_{0,0,t}}\mathrm{E}\left[\widehat{\operatorname{DID}}_{ t}^{**} \mid \mathbf{D}\right]\\
&=\frac{N_{1,0,t}}{N_{0,0,t}}\sum_{g: D_{g, t+1}=1, D_{g, t}=0} N_{g,t}\mathrm{E}\left[Y_{g, t}-Y_{g, t-1} \mid \mathbf{D}\right] - \frac{N_{1,0,t}^{[1]}N_{1,0,t}}{N_{0,0,t}^{[1]}N_{0,0,t}}\sum_{g: D_{g, t+1}=0, D_{g, t}=0} N_{g,t}\mathrm{E}\left[Y_{g, t}-Y_{g, t-1} \mid \mathbf{D}\right]\\
&= \frac{N_{1,0,t}}{N_{0,0,t}}\sum_{g: D_{g, t+1}=1, D_{g, t}=0}  \sum_{i=1}^{N_{g,t}} \mathrm{E}\left[{D}_{i,g,t}^*\left(Y_{i, g, t}(1)-Y_{i, g, t}(0)\right)\mid \mathbf{D}\right]\\
&\quad -\frac{N_{1,0,t}}{N_{0,0,t}} \sum_{g: D_{g, t+1}=1, D_{g, t}=0}  \frac{N_{g,t}}{N_{g,t-1}}\sum_{i=1}^{N_{g,t-1}}\mathrm{E}\left[{D}_{i,g,t-1}^*\left({Y}_{i,g, t-1}(1)-{Y}_{i,g, t-1}(0)\right)\mid \mathbf{{D}}\right]\\
&\quad +\frac{N_{1,0,t}}{N_{0,0,t}} \sum_{g: D_{g, t+1}=1, D_{g, t}=0} N_{g,t}\mathrm{E}\left[{Y}_{1, t}(0)-{Y}_{1, t-1}(0)\right]\\
&\quad -\frac{N_{1,0,t}^{[1]}N_{1,0,t}}{N_{0,0,t}^{[1]}N_{0,0,t}}\sum_{g: D_{g, t+1}=0, D_{g, t}=0}  \sum_{i=1}^{N_{g,t}} \mathrm{E}\left[{D}_{i,g,t}^*\left(Y_{i, g, t}(1)-Y_{i, g, t}(0)\right)\mid \mathbf{D}\right]\\
&\quad +\frac{N_{1,0,t}^{[1]}N_{1,0,t}}{N_{0,0,t}^{[1]}N_{0,0,t}} \sum_{g: D_{g, t+1}=0, D_{g, t}=0}  \frac{N_{g,t}}{N_{g,t-1}}\sum_{i=1}^{N_{g,t-1}}\mathrm{E}\left[{D}_{i,g,t-1}^*\left({Y}_{i,g, t-1}(1)-{Y}_{i,g, t-1}(0)\right)\mid \mathbf{{D}}\right]\\
&\quad -\frac{N_{1,0,t}^{[1]}N_{1,0,t}}{N_{0,0,t}^{[1]}N_{0,0,t}} \sum_{g: D_{g, t+1}=0, D_{g, t}=0} N_{g,t}\mathrm{E}\left[{Y}_{1, t}(0)-{Y}_{1, t-1}(0)\right]\\
&= \frac{N_{1,0,t}}{N_{0,0,t}}\sum_{g: D_{g, t+1}=1, D_{g, t}=0}  \sum_{i=1}^{N_{g,t}} \mathrm{E}\left[{D}_{i,g,t}^*\left(Y_{i, g, t}(1)-Y_{i, g, t}(0)\right)\mid \mathbf{D}\right]
\end{align*}
\normalsize
where the second equality follows from the formula for $\mathrm{E}\left[Y_{g, t}-Y_{g, t-1}\mid \mathbf{{D}}\right]$,  and the last from Assumptions \ref{A6} and  \ref{A7}.   If there exists no switching-to-treatment group $g\in\{1,\cdots,G\}$ such that $D_{g, t}=0$, $D_{g, t+1}=1$, the above formula still holds, as $\widehat{\operatorname{DID}}_{ t}^{**}=0$ and $\mathbb{I}\left\{D_{g, t+1}=1, D_{g, t}=0\right\}=0$. Note that in this case, \(N_{0,0,t}\) must be nonzero, unless there is no group observed to switch at any time at all (in which case we can set all \(\frac{1}{N_{0,0,t}}=0\)). This holds true because otherwise, by Assumption \ref{A6}, it must be \(D_{g,t}=1,D_{g,t-1}=0\)  or  \(D_{g,t}=1,D_{g,t-1}=1\)  for all \(g\in \{1,\cdots,G\}\), and one can show either case violates Assumption \ref{A8}. The last term can be further written as:
\begin{align*}
&\quad \frac{N_{1,0,t}}{N_{0,0,t}}\sum_{g: D_{g, t+1}=1, D_{g, t}=0, D_{g, t-1}=0}  \sum_{i=1}^{N_{g,t}} \mathrm{E}\left[{D}_{i,g,t}^*\left(Y_{i, g, t}(1)-Y_{i, g, t}(0)\right)\mid \mathbf{D}\right]\\
&\quad +\frac{N_{1,0,t}}{N_{0,0,t}}\sum_{g: D_{g, t+1}=0, D_{g, t}=0, D_{g, t-1}=0}  \sum_{i=1}^{N_{g,t}} \mathrm{E}\left[{D}_{i,g,t}^*\left(Y_{i, g, t}(1)-Y_{i, g, t}(0)\right)\mid \mathbf{D}\right]\\
&=	 \frac{N_{1,0,t}}{N_{0,0,t}}\sum_{g: D_{g, t}=0, D_{g, t-1}=0}  \sum_{i=1}^{N_{g,t}} \mathrm{E}\left[{D}_{i,g,t}^*\left(Y_{i, g, t}(1)-Y_{i, g, t}(0)\right)\mid \mathbf{D}\right]
\end{align*}
  When $t=T$,  the formula still continues to hold as $\widehat{\operatorname{DID}}_{T}^{**} =0$  and the last term becomes zero under Assumption \ref{A9}.

For $3\leq t\leq T$, similar steps yield:
\allowdisplaybreaks
\small
\begin{align*}
&\quad \ N_{1,0,t}\mathrm{E}\left[\widehat{\operatorname{DID}}_{ t}^{*} \mid \mathbf{D}\right]\\
&=\sum_{g: D_{g, t}=1, D_{g, t-1}=0} N_{g,t}\mathrm{E}\left[Y_{g, t-1}-Y_{g, t-2} \mid \mathbf{D}\right] \\
&\quad - \sum_{g: D_{g, t}=0, D_{g, t-1}=0} N_{g,t+1}\mathrm{E}\left[Y_{g, t-1}-Y_{g, t-2} \mid \mathbf{D}\right]\\
&= \sum_{g: D_{g, t}=1, D_{g, t-1}=0} \frac{N_{g,t}}{N_{g,t-1}} \sum_{i=1}^{N_{g,t-1}} \mathrm{E}\left[{D}_{i,g,t-1}^*\left(Y_{i, g, t-1}(1)-Y_{i, g, t-1}(0)\right)\mid \mathbf{D}\right]\\
&\quad - \sum_{g: D_{g, t}=1, D_{g, t-1}=0}  \sum_{i=1}^{N_{g,t-2}}\mathrm{E}\left[{D}_{i,g,t-2}^*\left({Y}_{i,g, t-2}(1)-{Y}_{i,g, t-2}(0)\right)\mid \mathbf{{D}}\right]\\
&\quad + \sum_{g: D_{g, t}=1, D_{g, t-1}=0} N_{g,t}\mathrm{E}\left[{Y}_{1, t-1}(0)-{Y}_{1, t-2}(0)\right]\\
&\quad -\sum_{g: D_{g, t}=0, D_{g, t-1}=0}  \frac{N_{g,t}}{N_{g,t-1}} \sum_{i=1}^{N_{g,t-1}} \mathrm{E}\left[{D}_{i,g,t-1}^*\left(Y_{i, g, t-1}(1)-Y_{i, g, t-1}(0)\right)\mid \mathbf{D}\right]\\
&\quad + \sum_{g: D_{g, t}=0, D_{g, t-1}=0}  \sum_{i=1}^{N_{g,t-2}}\mathrm{E}\left[{D}_{i,g,t-2}^*\left({Y}_{i,g, t-2}(1)-{Y}_{i,g, t-2}(0)\right)\mid \mathbf{{D}}\right]\\
&\quad - \sum_{g: D_{g, t}=0, D_{g, t-1}=0} N_{g,t}\mathrm{E}\left[{Y}_{1, t-1}(0)-{Y}_{1, t-2}(0)\right]\\
&= \sum_{g: D_{g, t}=1, D_{g, t-1}=0}  \frac{N_{g,t}}{N_{g,t-1}} \sum_{i=1}^{N_{g,t-1}} \mathrm{E}\left[{D}_{i,g,t-1}^*\left(Y_{i, g, t-1}(1)-Y_{i, g, t-1}(0)\right)\mid \mathbf{D}\right]
\end{align*}
\normalsize
When $t=2$,  the formula still continues to hold as $\widehat{\operatorname{DID}}_{2}^{*} =0$  and the last term becomes zero under Assumption \ref{A10}.

Finally,  combining the above, we obtain:
\begin{align*}
	&\quad \ \mathrm{E}\left[\sum_{t\geq 2} \frac{1}{N_{S}} \left(N_{1,0,t}\widehat{\operatorname{DID}}_{ t}+N_{1,0,t}\widehat{\operatorname{DID}}_{ t}^{*}+\frac{N_{1,0,t}^{[1]}N_{1,0,t}}{N_{0,0,t}}\widehat{\operatorname{DID}}_{ t}^{**}\right) \right]\\
 &= \mathrm{E}\left[\sum_{t\geq 2}\frac{1}{N_{S}} \sum_{g: D_{g, t}=1, D_{g, t-1}=0}  \sum_{i=1}^{N_{g,t}} Y_{i, g, t}(1)-Y_{i, g, t}(0)\right]\\
 &\quad - \mathrm{E}\left[\sum_{t\geq 2}\frac{1}{N_{S}} \sum_{g: D_{g, t}=1, D_{g, t-1}=0}  \sum_{i=1}^{N_{g,t-1}} \frac{N_{g,t}}{N_{g,t-1}}{D}_{i,g,t-1}^*\left(Y_{i, g, t-1}(1)-Y_{i, g, t-1}(0)\right)\right]\\
 &\quad - \mathrm{E}\left[\sum_{t\geq 2}\frac{N_{1,0,t}}{N_{S}N_{0,0,t}} \sum_{g: D_{g, t}=0, D_{g, t-1}=0}  \sum_{i=1}^{N_{g,t}} {D}_{i,g,t}^*\left(Y_{i, g, t}(1)-Y_{i, g, t}(0)\right)\right]\\
 &\quad + \mathrm{E}\left[\sum_{t\geq 2}\frac{1}{N_{S}} \sum_{g: D_{g, t}=1, D_{g, t-1}=0}  \sum_{i=1}^{N_{g,t-1}} \frac{N_{g,t}}{N_{g,t-1}}{D}_{i,g,t-1}^*\left(Y_{i, g, t-1}(1)-Y_{i, g, t-1}(0)\right)\right]\\
 &\quad +\mathrm{E}\left[\sum_{t\geq 2} \frac{N_{1,0,t}}{N_{S}N_{0,0,t}}\sum_{g: D_{g, t}=0, D_{g, t-1}=0}  \sum_{i=1}^{N_{g,t}} {D}_{i,g,t}^*\left(Y_{i, g, t}(1)-Y_{i, g, t}(0)\right)\right]\\ 
 &= \mathrm{E}\left[\frac{1}{N_{S}} \sum_{(i,g,t):t\geq 2, D_{g, t}=1, D_{g, t-1}=0} Y_{i, g, t}(1)-Y_{i, g, t}(0)\right].
\end{align*}
\end{proof}

\begin{proof}[Proof of Proposition \ref{PROP7}]

 Recall the definition of 
\(\mathbf{V}=\left(\mathbf{V}_2',\cdots,\mathbf{V}_T'\right)'\) in (\ref{VVECTOR}) ,
where
\begin{align*}
	\mathbf{V}_t &= \left( 
P_{1,0,t}, 
P_{0,0,t}, 
P^{[1]}_{1,0,t-1}, 
P^{[1]}_{0,0,t-1}, 
P^{[1]}_{1,1,t-1}, 
Q_{1,0,t}, 
Q_{0,0,t}, 
Q^{*}_{1,0,t}, 
Q^{*}_{0,0,t}, 
Q^{**}_{1,0,t}, 
Q^{**}_{0,0,t}, 
Q^{***}_{1,0,t-1}, 
Q^{***}_{1,1,t-1}, 
\right)'.
\end{align*}
Thus, by some algebra, for all combinations of  $P_{d,d^{\prime},t}$, $P_{d,d^{\prime},t-1}^{[1]}$ and \(\widehat{\lambda}_{t-1}\), with $(d,d^{\prime},t)\in\{0,1\}\times \{0,1\}\times \{2,\cdots,T\}$ such that the denominator is strictly positive, we can write
\small\begin{align*}
&\quad \ \widehat{\mathrm{DID}}^{S^{\ast}}
\\&=  \frac{\sum_{t\geq 2}\left(\widehat{\lambda}_{t-1}\left(N_{1,0,t-1}^{[1]}\widehat{\operatorname{DID}}_{t-1}^{**}+N_{1,0,t-1}^{[1]}\widehat{\operatorname{DID}}_{t-1}^{***}\right)+\left(1-\widehat{\lambda}_{t-1}\right)\left(N_{1,0,t}\widehat{\operatorname{DID}}_{ t}+N_{1,0,t}\widehat{\operatorname{DID}}_{ t}^{*}+\frac{N_{1,0,t}^{[1]}N_{1,0,t}}{N_{0,0,t}}\widehat{\operatorname{DID}}_{ t}^{**}\right)\right)}{\sum_{t\geq 2}\sum_{D_{g, t}=1, D_{g, t-1}=0}\left(\widehat{\lambda}_{t-1}N_{g,t-1} + (1-\widehat{\lambda}_{t-1})N_{g,t}\right)}\\
&=  \frac{\sum_{t\geq 2}\widehat{\lambda}_{t-1}\left(P_{1,0,t-1}^{[1]}\left(\frac{Q_{1,0,t-1}^{**}}{P_{1,0,t-1}^{[1]}} - \frac{Q_{0,0,t-1}^{**}}{P_{0,0,t-1}^{[1]}}\right)+P_{1,0,t-1}^{[1]}\left(\frac{Q_{1,0,t-1}^{***}}{P_{1,0,t-1}^{[1]}} -\frac{Q_{1,1,t-1}^{***}}{P_{1,1,t-1}^{[1]}}\right)\right)}{\sum_{t\geq 2}\left(\widehat{\lambda}_{t-1}P^{[1]}_{1,0,t-1} + (1-\widehat{\lambda}_{t-1})P_{1,0,t}\right)}\\
&\quad + \frac{\sum_{t\geq 2}\left(1-\widehat{\lambda}_{t-1}\right)\left(P_{1,0,t}\left(\frac{Q_{1,0,t}}{P_{1,0,t}} - \frac{Q_{0,0,t}}{P_{0,0,t}}\right)+P_{1,0,t}\left(\frac{Q_{1,0,t}^{*}}{P_{1,0,t}} - \frac{Q_{0,0,t}^{*}}{P_{0,0,t}}\right)+\frac{P_{1,0,t}^{[1]}P_{1,0,t}}{P_{0,0,t}}\left(\frac{Q_{1,0,t}^{**}}{P_{1,0,t}^{[1]}} - \frac{Q_{0,0,t}^{**}}{P_{0,0,t}^{[1]}}\right)\right)}{\sum_{t\geq 2}\left(\widehat{\lambda}_{t-1}P^{[1]}_{1,0,t-1} +  (1-\widehat{\lambda}_{t-1})P_{1,0,t}\right)}\\
&=  f_N\left(\mathbf{V}\right) \times f_D\left(\mathbf{V}\right)=:f\left(\mathbf{V}\right),
\end{align*}\normalsize
where  
\footnotesize \begin{align*}
 f_{N}\left(\mathbf{V}\right)
&:=  \sum_{t\geq 2}\left(g_{t}(\boldsymbol{V}) \cdot \left(\left( Q_{1,0,t-1}^{**} - \frac{P_{1,0,t-1}^{[1]}}{P_{0,0,t-1}^{[1]}}Q_{0,0,t-1}^{**}\right)+ \left( Q_{1,0,t-1}^{***} - \frac{P_{1,0,t-1}^{[1]}}{P_{1,1,t-1}^{[1]}}Q_{1,1,t-1}^{***}\right)\right)\right.\\
&\qquad \qquad   \left.+ \left(1-g_{t}(\boldsymbol{V}) \right) \cdot 
\left(\left(Q_{1,0,t} -\frac{ P_{1,0,t}}{P_{0,0,t}}Q_{0,0,t}\right)+\left(Q_{1,0,t}^{*} -\frac{ P_{1,0,t}}{P_{0,0,t}}Q_{0,0,t}^{*}\right)+\left(\frac{P_{1,0,t}}{P_{0,0,t}}Q_{1,0,t}^{**} -\frac{P_{1,0,t}^{[1]}P_{1,0,t}}{P_{0,0,t}^{[1]}P_{0,0,t}}Q_{0,0,t}^{**}\right)\right)\right)
\end{align*}\normalsize
and
\[
f_{D}\left(\mathbf{V}\right):=\left(\sum_{t\geq 2}\left(g_{t}(\boldsymbol{V})P^{[1]}_{1,0,t-1} + (1-g_{t}(\boldsymbol{V}))P_{1,0,t}\right)\right)^{-1}
\]
with
\[
g_{t}(\boldsymbol{V}) := 
\frac{
\displaystyle \frac{Q_{1,0,t-1}^{**}}{P_{1,0,t-1}^{[1]}} - \frac{Q_{0,0,t-1}^{**}}{P_{0,0,t-1}^{[1]}}
}{
\displaystyle 
\left( \frac{Q_{1,0,t-1}^{**}}{P_{1,0,t-1}^{[1]}} - \frac{Q_{0,0,t-1}^{**}}{P_{0,0,t-1}^{[1]}} \right)
+ \left( \frac{Q_{1,0,t-1}^{***}}{P_{1,0,t-1}^{[1]}} - \frac{Q_{1,1,t-1}^{***}}{P_{1,1,t-1}^{[1]}} \right)
}=\widehat{\lambda}_{t-1}.
\]
Note that with the above definitions, we can write:
\[f\left(\mathrm{E}\left[\mathbf{V}\right]\right) =f_N\left(\mathrm{E}\left[\mathbf{V}\right]\right) \times f_D\left(\mathrm{E}\left[\mathbf{V}\right]\right),
\]
where
\small\begin{align*}
&\quad \ f_{N}\left(\mathrm{E}\left[\mathbf{V}\right]\right)\\
&= \sum_{t\geq 2}\left(g_{t}\left(\mathrm{E}\left[\mathbf{V}\right]\right) \cdot \left(\left( \mathrm{E}\left[Q_{1,0,t-1}^{**}\right] - \frac{\mathrm{E}\left[P_{1,0,t-1}^{[1]}\right]}{\mathrm{E}\left[P_{0,0,t-1}^{[1]}\right]} \mathrm{E}\left[Q_{0,0,t-1}^{**}\right] \right) + \left( \mathrm{E}\left[Q_{1,0,t-1}^{***}\right] - \frac{\mathrm{E}\left[P_{1,0,t-1}^{[1]}\right]}{\mathrm{E}\left[P_{1,1,t-1}^{[1]}\right]} \mathrm{E}\left[Q_{1,1,t-1}^{***}\right] \right)\right) \right.\\
&\qquad\qquad + \left(1 - g_{t}\left(\mathrm{E}\left[\mathbf{V}\right]\right)\right) \cdot \left( \left( \mathrm{E}\left[Q_{1,0,t}\right] - \frac{\mathrm{E}\left[P_{1,0,t}\right]}{\mathrm{E}\left[P_{0,0,t}\right]} \mathrm{E}\left[Q_{0,0,t}\right] \right) + \left( \mathrm{E}\left[Q_{1,0,t}^{*}\right] - \frac{\mathrm{E}\left[P_{1,0,t}\right]}{\mathrm{E}\left[P_{0,0,t}\right]} \mathrm{E}\left[Q_{0,0,t}^{*}\right] \right) \right. \\
&\qquad\qquad\qquad\qquad\qquad\qquad\quad  \left. \left. + \left( \frac{\mathrm{E}\left[P_{1,0,t}\right]}{\mathrm{E}\left[P_{0,0,t}\right]} \mathrm{E}\left[Q_{1,0,t}^{**}\right] - \frac{\mathrm{E}\left[P_{1,0,t}^{[1]}\right] \cdot \mathrm{E}\left[P_{1,0,t}\right]}{\mathrm{E}\left[P_{0,0,t}^{[1]}\right] \cdot \mathrm{E}\left[P_{0,0,t}\right]} \mathrm{E}\left[Q_{0,0,t}^{**}\right] \right) \right) \right)
\end{align*}\normalsize
and\[
f_{D}\left(\mathrm{E}\left[\mathbf{V}\right]\right)=\left(\sum_{t\geq 2}\left(g_{t}\left(\mathrm{E}\left[\mathbf{V}\right]\right)\mathrm{E}\left[P^{[1]}_{1,0,t-1}\right] + (1-g_{t}\left(\mathrm{E}\left[\mathbf{V}\right]\right))\mathrm{E}\left[P_{1,0,t}\right]\right)\right)^{-1}
\]
with\[
g_{t}\left(\mathrm{E}\left[\mathbf{V}\right]\right) = 
\frac{
\displaystyle \frac{\mathrm{E}\left[ Q_{1,0,t-1}^{**} \right]}{\mathrm{E}\left[ P_{1,0,t-1}^{[1]} \right]}
-
\frac{\mathrm{E}\left[ Q_{0,0,t-1}^{**} \right]}{\mathrm{E}\left[ P_{0,0,t-1}^{[1]} \right]}
}{
\displaystyle 
\left( 
\frac{\mathrm{E}\left[ Q_{1,0,t-1}^{**} \right]}{\mathrm{E}\left[ P_{1,0,t-1}^{[1]} \right]}
-
\frac{\mathrm{E}\left[ Q_{0,0,t-1}^{**} \right]}{\mathrm{E}\left[ P_{0,0,t-1}^{[1]} \right]}
\right)
+
\left(
\frac{\mathrm{E}\left[ Q_{1,0,t-1}^{***} \right]}{\mathrm{E}\left[ P_{1,0,t-1}^{[1]} \right]}
-
\frac{\mathrm{E}\left[ Q_{1,1,t-1}^{***} \right]}{\mathrm{E}\left[ P_{1,1,t-1}^{[1]} \right]}
\right)
}
\]
In the following, we will start by arguing that 
\[
\sqrt{G}\left(f\left(\mathbf{V}\right) - f\left(\mathrm{E}\left[\mathbf{V}\right]\right) \right)\stackrel{d}{\rightarrow}\mathcal{N}\left(0,\sigma^{2}_{\delta^{S^{\ast}}}\right).
\]
In a second step, we will show that
\[
\sqrt{G}\left( f\left(\mathrm{E}\left[\mathbf{V}\right]\right) -\delta^{S^{\ast}}\right)=o(1),
\] 
which together imply the desired result by Slutsky's theorem.
 
 \vspace{10pt}
 
\textbf{Step 1}: We begin by analyzing the components of $\mathbf{V}$ in (\ref{VVECTOR}), focusing on $Q_{1,0,t}=\frac{1}{G} \sum_{g =1}^G \mathbb{I}\{D_{g,t} = 1, D_{g,t-1} = 0\} N_{g,t} (Y_{g,t} - Y_{g,t-1})$. The arguments for the remaining terms follow analogously. For a given period $t\geq 2$, we denote \(Z_g :=
\mathbb I\{D_{g,t}=1,D_{g,t-1}=0\}
N_{g,t}\left(Y_{g,t}-Y_{g,t-1}\right)\), \(\mu_g:=\mathrm{E}\left[Z_g \right]\) and \(\sigma_g^2:=\mathrm{Var}\left[Z_g \right]\).  We observe:
\begin{align*}
\sup_{g}\mathrm{E}\left[Z_g^4\right]&\leq  \sup_{g}N_{g,t}^{4} \times \sup_{g}\mathrm{E}\left[(Y_{g,t}-Y_{g,t-1})^4\right]\\
&\leq  \sup_{g}N_{g,t}^{4} \times \left(8\sup_{g}\mathrm{E}\left[Y_{g,t}^4\right]+8\sup_{g}\mathrm{E}\left[Y_{g,t-1}^4\right]\right)\\
&\leq  \sup_{g}N_{g,t}^{4} \times 16\sup_{i,g}\mathrm{E}\left[Y_{i,g,t}^4\right]\\
&\leq C_1
\end{align*}
for some constant \(C_1>0\), where the second line follows from the elementary inequality \(|a+b|^p \leq 2^{p-1}\left(|a|^p+|b|^p\right)\) for all \(a,b\in \mathbb{R}\) and \(p>1\), the third  from the Jensen's inequality to the convex function \(x \mapsto x^4\), and the last from Assumption \ref{A19}.  Applying again the above elementary inequality and Jensen's inequality to  \(x \mapsto x^4\), this implies a uniform bound over \(g\):
\begin{align*}
 \mathrm{E}\left[\left|Z_g-\mu_g\right|^4\right] \leq 8 \mathrm{E}\left[Z_g^4\right]+8 \mathrm{E}\left[Z_g\right]^4\leq 16 \mathrm{E}\left[Z_g^4\right]\leq 16C_1.
\end{align*}
Finally, since \(\operatorname{Var}\left(\sqrt{G}Q_{1,0, t}\right)=\frac{1}{G}\sum_{g=1}^G\operatorname{Var}\left(Z_g\right)\) under Assumptions \ref{A15}, Assumption \ref{A19} implies that
\begin{align*}
\frac{S_G^2}{G}:=\frac{1}{G}\sum_{g=1}^G \sigma_g^2=\operatorname{Var}\left(\sqrt{G}Q_{1,0, t}\right)\xrightarrow{G\rightarrow\infty } C_2
\end{align*}
for some constant \(C_2>0\), so that \[  \frac{C_{2}}{2}\leq \frac{S_G^2}{G}\] for sufficiently large \(G\) by the definition of sequence limit.
 Hence, we verify Lyapunov's condition with \(\delta=2\):
\[
\frac{1}{S_G^{2+\delta}}
\sum_{g=1}^{G}\mathrm{E}\!\left[\lvert Z_g-\mu_g\rvert^{2+\delta}\right]\leq \frac{16 C_1 G}{\left(\frac{C_2}{2} G\right)^2}=\frac{64 C_1}{C_2^2} \frac{1}{G}\xrightarrow{G\rightarrow\infty }0.
\]
Lyapunov's CLT thus yields the scalar convergence
\begin{align*}
\sqrt{G}\left(Q_{1,0, t}-\mathrm{E}\left[Q_{1,0, t}\right]\right) \stackrel{d}{\rightarrow}N\left(0,C_2\right).
\end{align*}
Applying the same argument to every coordinate of \(\mathbf V\) and to any linear combination \(\mathbf{c}' \mathbf{V}\) (with \(\mathbf{c}'\)  fixed and of suitable dimension), and invoking the Cram\'{e}r--Wold device, we obtain
\[
\sqrt{G}\left(\mathbf{V}-\mathrm{E}\left[\mathbf{V}\right]\right)\stackrel{d}{\rightarrow} N(\mathbf{0},\boldsymbol{\Sigma}_{\mathbf{V}}),
\]   
where $\boldsymbol{\Sigma}_{\mathbf{V}}= \lim_{G\rightarrow \infty}\text{Var}\left[\sqrt{G} \ \mathbf{V}\right]$ as in (\ref{ASYVARTS}).

\vspace{5pt}

Now, by Assumption \ref{A19}, the limit
\(
\mathbf V^{\infty}:=\lim_{G\to\infty}\mathrm{E}[\mathbf V]
\)
exists. Besides, paralleling the existence argument in \citet{CH2020}, one can show that $f$ is differentiable in a neighborhood of $\mathbf{V}^{\infty}$. Thus, the uniform Delta method \citep[see, e.g.,][Theorem 3.8]{VanderVaart1998} yields:
\begin{align*}
\sqrt{G}\left( f\left(\mathbf{V}\right)-f\left(\mathrm{E}\left[\mathbf{V}\right]\right) \right)
&=\mathbf{J}_{f}	\left(\mathbf{V}^{\infty}\right) \times \sqrt{G}\left(\mathbf{V}-\mathrm{E}\left[\mathbf{V}\right]\right) + o_{\Pr}(1)\stackrel{d}{\rightarrow}\mathcal{N}\left(0,\sigma^{2}_{\delta^{S^{\ast}}}\right),
\end{align*}
where \(\sigma^{2}_{\delta^{S^{\ast}}}=\mathbf{J}_{f}\bigl(\mathbf{V}^{\infty}\bigr)\boldsymbol{\Sigma}_{\mathbf{V}}
\mathbf{J}_{f}\bigl(\mathbf{V}^{\infty}\bigr)^{\prime}
\).

\vspace{10pt}

\textbf{Step 2.1}:  Next, we continue to show that
\[
\sqrt{G}\left( f\left(\mathrm{E}\left[\mathbf{V}\right]\right) -\delta^{S^{\ast}}\right)=o(1).
\]
To this end, we define
\begin{align*}
	T^S:=\frac{1}{G}\sum_{(i, g,t):  t\geq 2, S_{i,g, t}^* =1} \left({Y}_{i, g, t}(1)-{Y}_{i, g, t}(0)\right)\quad \text{ and } \quad  
	P^S:=\frac{N_{S^*}}{G},
\end{align*}
so that \[ \delta^{S^{*}}=\mathrm{E}\left[\frac{T^S}{P^S}\right].\] In the following, we will  show 
  \(f_N\left(\mathrm{E}\left[\mathbf{V}\right]\right) = \mathrm{E}\left[T^S\right]\) and \(f_D\left(\mathrm{E}\left[\mathbf{V}\right]\right)=\left(\mathrm{E}\left[P^S\right]\right)^{-1}\), so that 
\begin{align*}
	f\left(\mathrm{E}\left[\mathbf{V}\right]\right) =\frac{\mathrm{E}\left[T^S\right]}{\mathrm{E}\left[P^S\right]},
\end{align*}
where
\small\begin{align*}
&\quad \ f_{N}\left(\mathrm{E}\left[\mathbf{V}\right]\right)\\
&= \sum_{t\geq 2}\left(g_{t}\left(\mathrm{E}\left[\mathbf{V}\right]\right) \cdot \left(\left( \mathrm{E}\left[Q_{1,0,t-1}^{**}\right] - \frac{\mathrm{E}\left[P_{1,0,t-1}^{[1]}\right]}{\mathrm{E}\left[P_{0,0,t-1}^{[1]}\right]} \mathrm{E}\left[Q_{0,0,t-1}^{**}\right] \right) + \left( \mathrm{E}\left[Q_{1,0,t-1}^{***}\right] - \frac{\mathrm{E}\left[P_{1,0,t-1}^{[1]}\right]}{\mathrm{E}\left[P_{1,1,t-1}^{[1]}\right]} \mathrm{E}\left[Q_{1,1,t-1}^{***}\right] \right)\right) \right.\\
&\qquad\qquad + \left(1 - g_{t}\left(\mathrm{E}\left[\mathbf{V}\right]\right)\right) \cdot \left( \left( \mathrm{E}\left[Q_{1,0,t}\right] - \frac{\mathrm{E}\left[P_{1,0,t}\right]}{\mathrm{E}\left[P_{0,0,t}\right]} \mathrm{E}\left[Q_{0,0,t}\right] \right) + \left( \mathrm{E}\left[Q_{1,0,t}^{*}\right] - \frac{\mathrm{E}\left[P_{1,0,t}\right]}{\mathrm{E}\left[P_{0,0,t}\right]} \mathrm{E}\left[Q_{0,0,t}^{*}\right] \right) \right. \\
&\qquad\qquad\qquad\qquad\qquad\qquad\quad  \left. \left. + \left( \frac{\mathrm{E}\left[P_{1,0,t}\right]}{\mathrm{E}\left[P_{0,0,t}\right]} \mathrm{E}\left[Q_{1,0,t}^{**}\right] - \frac{\mathrm{E}\left[P_{1,0,t}^{[1]}\right] \cdot \mathrm{E}\left[P_{1,0,t}\right]}{\mathrm{E}\left[P_{0,0,t}^{[1]}\right] \cdot \mathrm{E}\left[P_{0,0,t}\right]} \mathrm{E}\left[Q_{0,0,t}^{**}\right] \right) \right) \right)
\end{align*}\normalsize
and\[
f_{D}\left(\mathrm{E}\left[\mathbf{V}\right]\right)=\left(\sum_{t\geq 2}\left(g_{t}\left(\mathrm{E}\left[\mathbf{V}\right]\right)\mathrm{E}\left[P^{[1]}_{1,0,t-1}\right] + (1-g_{t}\left(\mathrm{E}\left[\mathbf{V}\right]\right))\mathrm{E}\left[P_{1,0,t}\right]\right)\right)^{-1}
\]
with\[
g_{t}\left(\mathrm{E}\left[\mathbf{V}\right]\right) = 
\frac{
\displaystyle \frac{\mathrm{E}\left[ Q_{1,0,t-1}^{**} \right]}{\mathrm{E}\left[ P_{1,0,t-1}^{[1]} \right]}
-
\frac{\mathrm{E}\left[ Q_{0,0,t-1}^{**} \right]}{\mathrm{E}\left[ P_{0,0,t-1}^{[1]} \right]}
}{
\displaystyle 
\left( 
\frac{\mathrm{E}\left[ Q_{1,0,t-1}^{**} \right]}{\mathrm{E}\left[ P_{1,0,t-1}^{[1]} \right]}
-
\frac{\mathrm{E}\left[ Q_{0,0,t-1}^{**} \right]}{\mathrm{E}\left[ P_{0,0,t-1}^{[1]} \right]}
\right)
+
\left(
\frac{\mathrm{E}\left[ Q_{1,0,t-1}^{***} \right]}{\mathrm{E}\left[ P_{1,0,t-1}^{[1]} \right]}
-
\frac{\mathrm{E}\left[ Q_{1,1,t-1}^{***} \right]}{\mathrm{E}\left[ P_{1,1,t-1}^{[1]} \right]}
\right)
}
\]
Then, since \(\lim_{G\rightarrow \infty} \mathrm{E}\left[P^S\right]>0\) by Assumption \ref{A18}, the problem reduces to showing
\[
\sqrt{G}\left( \mathrm{E}\left[T^S\right]-\mathrm{E}\left[P^S\right]\mathrm{E}\left[\frac{T^S}{P^S}\right]\right)=o(1).
\]
Examining the components in \(f\left(\mathrm{E}\left[\mathbf{V}\right]\right) \), we first observe,  by (\ref{EQ:PROP3PROOF}) in the proof of Proposition \ref{PROPESTIMAND} (which continue to hold under Assumptions \ref{A1}-\ref{A2} and \ref{A15}-\ref{A17}), that
\begin{align*}
	\mathrm{E}\left[Q_{1,0,t-1}^{**}\right] &= \mathrm{E}\left[ \frac{1}{G} \sum_{g=1}^G \mathbb{I}\{D_{g,t}=1, D_{g,t-1}=0\}  N_{g,t-1}  (Y_{g,t-1} - Y_{g,t-2}) \right]\\
&=  \frac{1}{G} \mathrm{E}\left[\sum_{g=1}^G \mathbb{I}\{D_{g,t}=1, D_{g,t-1}=0\}  \sum_{i=1}^{N_{g,t-1}}\mathrm{E}\left[{D}_{i,g,t-1}^*\left({Y}_{i,g, t-1}(1)-{Y}_{i,g, t-1}(0)\right)\mid \mathbf{{D}}_g\right]\right]\\
&\quad +\frac{1}{G}  \mathrm{E}\left[\sum_{g=1}^G\mathbb{I}\{D_{g,t}=1, D_{g,t-1}=0\} \frac{N_{g,t-1}}{N_{g,t-2}}\sum_{i=1}^{N_{g,t-2}}\mathrm{E}\left[{D}_{i,g,t-2}^*\left({Y}_{i,g, t-2}(1)-{Y}_{i,g, t-2}(0)\right)\mid \mathbf{{D}}_g\right]\right]\\
&\quad +\frac{1}{G} \mathrm{E}\left[\sum_{g=1}^G \mathbb{I}\{D_{g,t}=1, D_{g,t-1}=0\}  N_{g,t-1} \right]\times \mathrm{E}\left[{Y}_{1, t-1}(0)-{Y}_{1, t-2}(0)\right]\\
&=  \frac{1}{G}  \mathrm{E}\left[\sum_{g=1}^G\mathbb{I}\{D_{g,t}=1, D_{g,t-1}=0\}  \sum_{i=1}^{N_{g,t-1}}\mathrm{E}\left[{D}_{i,g,t-1}^*\left({Y}_{i,g, t-1}(1)-{Y}_{i,g, t-1}(0)\right)\mid \mathbf{{D}}_g\right]\right]\\
&\quad +\frac{1}{G}  \mathrm{E}\left[\sum_{g=1}^G\mathbb{I}\{D_{g,t}=1, D_{g,t-1}=0\}  N_{g,t-1} \right]\times \mathrm{E}\left[{Y}_{1, t-1}(0)-{Y}_{1, t-2}(0)\right],
\end{align*}
where the third equality follows from Assumption \ref{A7}. By similar arguments: 
\begin{align*}
	\mathrm{E}\left[Q_{0,0,t-1}^{**}\right] 
	&= \frac{1}{G} \mathrm{E}\left[ \sum_{g=1}^G\mathbb{I}\{D_{g,t}=0, D_{g,t-1}=0\}  N_{g,t-1} \right]\times \mathrm{E}\left[{Y}_{1, t-1}(0)-{Y}_{1, t-2}(0)\right],\\
	\mathrm{E}\left[P_{1,0,t-1}^{[1]}\right] 
	&= \frac{1}{G} \mathrm{E}\left[ \sum_{g=1}^G\mathbb{I}\{D_{g,t}=1, D_{g,t-1}=0\}  N_{g,t-1} \right],\\
	\mathrm{E}\left[P_{0,0,t-1}^{[1]}\right] 
	&= \frac{1}{G}  \mathrm{E}\left[\sum_{g=1}^G\mathbb{I}\{D_{g,t}=0, D_{g,t-1}=0\}  N_{g,t-1} \right].
\end{align*}
Combining the above, we obtain
\begin{align*}
	&\quad \ \mathrm{E}\left[Q_{1,0,t-1}^{**}\right] - \frac{\mathrm{E}\left[P_{1,0,t-1}^{[1]}\right]}{\mathrm{E}\left[P_{0,0,t-1}^{[1]}\right]}\mathrm{E}\left[Q_{0,0,t-1}^{**}\right]\\
&=  \frac{1}{G}  \mathrm{E}\left[\sum_{g=1}^G\mathbb{I}\{D_{g,t}=1, D_{g,t-1}=0\}  \sum_{i=1}^{N_{g,t-1}}\mathrm{E}\left[{D}_{i,g,t-1}^*\left({Y}_{i,g, t-1}(1)-{Y}_{i,g, t-1}(0)\right)\mid \mathbf{{D}}_g\right]\right]\\
&=  \frac{1}{G}  \mathrm{E}\left[\sum_{g=1}^G\mathbb{I}\{D_{g,t}=1, D_{g,t-1}=0\}  \sum_{i=1}^{N_{g,t-1}}\mathrm{E}\left[Y_{i, g, t-1}(1)-Y_{i, g, t-1}(0)\mid {D}_{i,g,t-1}^*=1,\mathbf{D}_g\right]\mathrm{Pr}\left[{D}_{i,g,t-1}^{\ast}=1\mid \mathbf{{D}}_g\right]\right]\\
	&=\lambda_{t-1}\times \frac{1}{G} \sum_{g=1}^G \mathrm{E}\left[\mathbb{I}\{D_{g,t}=1, D_{g,t-1}=0\}  \sum_{i=1}^{N_{g,t-1}}\left({Y}_{i,g, t-1}(1)-{Y}_{i,g, t-1}(0)\right)\right]
\end{align*}
where the last step follows from Assumptions \ref{A13} and \ref{A14}, and the fact that under Assumptions \ref{A6}: 
\begin{align*}
&\quad \ \mathbb{I}\{D_{g,t}=1, D_{g,t-1}=0\}\times \mathrm{Pr}\left[{D}_{i,g,t-1}^{\ast}=1\mid \mathbf{{D}}_g\right]\\
&=\mathbb{I}\{D_{g,t}=1, D_{g,t-1}=0\}\times \mathrm{Pr}\left[{D}_{i,g,t-1}^{\ast}=1\mid {D}_{g,t}=1,D_{g,t-1}=0\right]
\end{align*} 
almost surely. Under the same set of assumptions, one can also show, in parallel to (\ref{EQ:PROP3PROOF}), that:
\begin{align*}
\begin{aligned}
\mathrm{E}\left[Y_{g, t}-Y_{g, t-1} \mid \mathbf{D}\right] 
&= -\frac{1}{N_{g, t}} \sum_{i=1}^{N_{g, t}} \mathrm{E}\left[\left(1-D_{i, g, t}^*\right)\left(Y_{i, g, t}(1)-Y_{i, g, t}(0)\right) \mid \mathbf{D}\right] \\
& \quad +\frac{1}{N_{g, t-1}} \sum_{i=1}^{N_{g, t-1}} \mathrm{E}\left[\left(1-D_{i, g, t-1}^*\right)\left(Y_{i, g, t-1}(1)-Y_{i, g, t-1}(0)\right) \mid \mathbf{D}\right] \\
& \quad +\mathrm{E}\left[Y_{1, t}(1)-Y_{1, t-1}(1)\right],
\end{aligned}
\end{align*}
Using this equation, we obtain:
\begin{align*}
	\mathrm{E}\left[Q_{1,0,t-1}^{***}\right] &= \mathrm{E}\left[ \frac{1}{G} \sum_{g=1}^G \mathbb{I}\{D_{g,t}=1, D_{g,t-1}=0\}  N_{g,t-1}  (Y_{g,t} - Y_{g,t-1}) \right]\\
&= -\frac{1}{G} \mathrm{E}\left[\sum_{g=1}^G \mathbb{I}\{D_{g,t}=1, D_{g,t-1}=0\} \frac{N_{g, t-1}}{N_{g, t}}\sum_{i=1}^{N_{g,t}} \mathrm{E}\left[\left(1-D_{i, g, t}^*\right)\left(Y_{i, g, t}(1)-Y_{i, g, t}(0)\right) \mid \mathbf{D}_g\right] \right] \\
&\quad +\frac{1}{G} \mathrm{E}\left[\sum_{g=1}^G \mathbb{I}\{D_{g,t}=1, D_{g,t-1}=0\} \sum_{i=1}^{N_{g,t-1}} \mathrm{E}\left[\left(1-D_{i, g, t-1}^*\right)\left(Y_{i, g, t-1}(1)-Y_{i, g, t-1}(0)\right) \mid \mathbf{D}_g\right] \right] \\
&\quad + \frac{1}{G} \mathrm{E}\left[\sum_{g=1}^G \mathbb{I}\{D_{g,t}=1, D_{g,t-1}=0\} \cdot N_{g,t} \right] \cdot \mathrm{E}\left[Y_{1,t}(1) - Y_{1,t-1}(1)\right]\\
&=  \frac{1}{G}  \mathrm{E}\left[\sum_{g=1}^G\mathbb{I}\{D_{g,t}=1, D_{g,t-1}=0\}  \sum_{i=1}^{N_{g,t-1}}\mathrm{E}\left[\left(1-{D}_{i,g,t-1}^*\right)\left({Y}_{i,g, t-1}(1)-{Y}_{i,g, t-1}(0)\right)\mid \mathbf{{D}}_g\right]\right]\\
&\quad +\frac{1}{G}  \mathrm{E}\left[\sum_{g=1}^G\mathbb{I}\{D_{g,t}=1, D_{g,t-1}=0\}  N_{g,t-1} \right]\times \mathrm{E}\left[{Y}_{1, t-1}(0)-{Y}_{1, t-2}(0)\right],
\end{align*}
where the the third equality follows from Assumption \ref{A7}. Repeating the above arguments yield:
\begin{align*}
	&\quad \ \mathrm{E}\left[Q_{1,0,t-1}^{***}\right] - \frac{\mathrm{E}\left[P_{1,0,t-1}^{[1]}\right]}{\mathrm{E}\left[P_{1,1,t-1}^{[1]}\right]}\mathrm{E}\left[Q_{1,1,t-1}^{***}\right]	\\
&=  \frac{1}{G}  \mathrm{E}\left[\sum_{g=1}^G\mathbb{I}\{D_{g,t}=1, D_{g,t-1}=0\}  \sum_{i=1}^{N_{g,t-1}}\mathrm{E}\left[\left(1-{D}_{i,g,t-1}^*\right)\left({Y}_{i,g, t-1}(1)-{Y}_{i,g, t-1}(0)\right)\mid \mathbf{{D}}_g\right]\right]\\
&=  \frac{1}{G}  \mathrm{E}\left[\sum_{g=1}^G\mathbb{I}\{D_{g,t}=1, D_{g,t-1}=0\}  \sum_{i=1}^{N_{g,t-1}}\mathrm{E}\left[Y_{i, g, t-1}(1)-Y_{i, g, t-1}(0)\mid {D}_{i,g,t-1}^*=0,\mathbf{D}_g\right]\mathrm{Pr}\left[{D}_{i,g,t-1}^{\ast}=0\mid \mathbf{{D}}_g\right]\right]\\
	&=\left(1-\lambda_{t-1}\right)\times \frac{1}{G} \sum_{g=1}^G \mathrm{E}\left[\mathbb{I}\{D_{g,t}=1, D_{g,t-1}=0\}  \sum_{i=1}^{N_{g,t-1}}\left({Y}_{i,g, t-1}(1)-{Y}_{i,g, t-1}(0)\right)\right]
\end{align*}\normalsize
where the last line follows from Assumption \ref{A13}. Therefore, we obtain
\[
g_{t}\left(\mathrm{E}\left[\mathbf{V}\right]\right) = 
\frac{
\displaystyle \frac{\mathrm{E}\left[ Q_{1,0,t-1}^{**} \right]}{\mathrm{E}\left[ P_{1,0,t-1}^{[1]} \right]}
-
\frac{\mathrm{E}\left[ Q_{0,0,t-1}^{**} \right]}{\mathrm{E}\left[ P_{0,0,t-1}^{[1]} \right]}
}{
\displaystyle 
\left( 
\frac{\mathrm{E}\left[ Q_{1,0,t-1}^{**} \right]}{\mathrm{E}\left[ P_{1,0,t-1}^{[1]} \right]}
-
\frac{\mathrm{E}\left[ Q_{0,0,t-1}^{**} \right]}{\mathrm{E}\left[ P_{0,0,t-1}^{[1]} \right]}
\right)
+
\left(
\frac{\mathrm{E}\left[ Q_{1,0,t-1}^{***} \right]}{\mathrm{E}\left[ P_{1,0,t-1}^{[1]} \right]}
-
\frac{\mathrm{E}\left[ Q_{1,1,t-1}^{***} \right]}{\mathrm{E}\left[ P_{1,1,t-1}^{[1]} \right]}
\right)
}=\lambda_{t-1}.
\]
Finally, since:
\footnotesize\begin{align*}
&\widehat{\mathrm{DID}}_{t}^{S}\\
=& \left( \mathrm{E}\left[Q_{1,0,t}\right] - \frac{\mathrm{E}\left[P_{1,0,t}\right]}{\mathrm{E}\left[P_{0,0,t}\right]} \mathrm{E}\left[Q_{0,0,t}\right] \right) + \left( \mathrm{E}\left[Q_{1,0,t}^{*}\right] - \frac{\mathrm{E}\left[P_{1,0,t}\right]}{\mathrm{E}\left[P_{0,0,t}\right]} \mathrm{E}\left[Q_{0,0,t}^{*}\right] \right)    \\
&+ \left( \frac{\mathrm{E}\left[P_{1,0,t}\right]}{\mathrm{E}\left[P_{0,0,t}\right]} \mathrm{E}\left[Q_{1,0,t}^{**}\right] - \frac{\mathrm{E}\left[P_{1,0,t}^{[1]}\right] \cdot \mathrm{E}\left[P_{1,0,t}\right]}{\mathrm{E}\left[P_{0,0,t}^{[1]}\right] \cdot \mathrm{E}\left[P_{0,0,t}\right]} \mathrm{E}\left[Q_{0,0,t}^{**}\right] \right) 
\\
	=&\frac{1}{G} \sum_{g=1}^G\mathrm{E}\left[ \mathbb{I}\{D_{g, t}=1, D_{g, t-1}=0\}   \sum_{i=1}^{N_{g,t}}\left({Y}_{i,g, t}(1)-{Y}_{i,g, t}(0)\right)\right].
\end{align*}\normalsize
we also obtain:
\small\begin{align*}
&\quad \ f_{N}\left(\mathrm{E}\left[\mathbf{V}\right]\right)\\
&=\mathrm{E}\left[ \frac{1}{G} \sum_{t\geq 2}\sum_{g: D_{g, t}=1, D_{g, t-1}=0} \left(\lambda_{t-1}\times  \sum_{i=1}^{N_{g,t-1}}\left({Y}_{i,g, t-1}(1)-{Y}_{i,g, t-1}(0)\right)\right.\right.\left. + \left(1-\lambda_{t-1}\right) \times \left.   \sum_{i=1}^{N_{g,t}}\left({Y}_{i,g, t}(1)-{Y}_{i,g, t}(0)\right)\right)\right],
\end{align*}\normalsize
and
\begin{align*}
	f_{D}\left(\mathrm{E}\left[\mathbf{V}\right]\right)
	&=\left(\sum_{t \geq 2}\left(\lambda_{t-1} \mathrm{E}\left[P^{[1]}_{1,0,t-1}\right]+\left(1-\lambda_{t-1}\right) \mathrm{E}\left[P_{1,0,t}\right]\right)\right)^{-1}.
\end{align*}

On the other hand, note that by Bayes' rule, one can show that in staggered adoption designs, Assumptions \ref{A11} implies that 
 \[\mathrm{P}\!\left(D_{g,t}=1,D_{g,t-1}=0 \mid S^{\ast}_{i,g,t}=1\right)
+\mathrm{P}\!\left(D_{g,t+1}=1,D_{g,t}=0 \mid S^{\ast}_{i,g,t}=1\right)=1, \quad t\in \{2,\cdots,T-1\},
\]
and similarly: 
 \[\mathrm{P}\!\left(D_{g,T}=1,D_{g,T-1}=0 \mid S^{\ast}_{i,g,T}=1\right)=1.
\]
Thus,
\small\begin{align*}
	&\quad \ \mathrm{E}\left[P^S \right]
	=\mathrm{E}\left[\frac{N_{S^*}}{G}\right]	=\mathrm{E}\left[\frac{1}{G} \sum_{(i,g,t):t\ge2} \mathbb{I}\{S^{*}_{i,g,t}=1\}\right]\\
	&= \mathrm{E}\left[\frac{1}{G}\sum_{t=2}^{T-1} \sum_{g=1}^{G} \left( \mathbb{I}\{D_{g,t+1}=1,D_{g,t}=0\}+\mathbb{I}\{D_{g,t}=1,D_{g,t-1}=0\}\right) \sum_{i=1}^{N_{g,t}}\mathbb{I}\{S^{*}_{i,g,t}=1\}\right]\\
	&\quad +\mathrm{E}\left[\frac{1}{G} \sum_{g=1}^{G} \mathbb{I}\{D_{g,T}=1,D_{g,T-1}=0\}\sum_{i=1}^{N_{g,T}}\mathbb{I}\{S^{*}_{i,g,T}=1\}\right]\\
	&= \mathrm{E}\left[\frac{1}{G}\sum_{t=3}^T \sum_{g=1}^{G} \mathbb{I}\{D_{g,t}=1,D_{g,t-1}=0\} \sum_{i=1}^{N_{g,t-1}} \mathbb{I}\{S_{i,g,t-1}^*=1\}+\sum_{t=2}^T \sum_{g=1}^{G} \mathbb{I}\{D_{g,t}=1,D_{g,t-1}=0\} \sum_{i=1}^{N_{g,t}} \mathbb{I}\{S_{i,g,t}^*=1\}\right]\\
	&= \frac{1}{G} \sum_{t \geq 2} \sum_{g=1}^{G}\mathrm{E}\left[\mathbb{I}\left\{ D_{g,t} = 1,\ D_{g,t-1} = 0 \right\}\mathrm{E}\left[  \sum_{i=1}^{N_{g,t-1}} \mathbb{I}\left\{ S_{i,g,t-1}^{\ast} = 1\right\} +\sum_{i=1}^{N_{g,t}} \mathbb{I}\left\{ S_{i,g,t}^{\ast} = 1 \right\}\mid \mathbf{D}_g\right]\right]\\
	&=\frac{1}{G}\sum_{t \geq 2} \sum_{g=1}^{G} \mathrm{E}\left[\mathbb{I}\left\{ D_{g,t} = 1,\ D_{g,t-1} = 0 \right\} 
\left( \lambda_{t-1} N_{g,t-1} 
+ \left(1-\lambda_{t-1}\right)N_{g,t}\right)\right]\\
	&=\sum_{t \geq 2}\left(\lambda_{t-1} \mathrm{E}\left[P^{[1]}_{1,0,t-1}\right]+\left(1-\lambda_{t-1}\right) \mathrm{E}\left[P_{1,0,t}\right]\right)=\left(f_{D}\left(\mathrm{E}\left[\mathbf{V}\right]\right)\right)^{-1},
\end{align*}\normalsize
where the second equality follows from reindexing, and the third from Assumptions \ref{A10} and \ref{A11}, the fourth from \(\mathrm{Pr}\left( S^{\ast}_{g,1} = 0 \mid D_{g,2} = 0,\ D_{g,1} = 0 \right) = 1\), and the sixth  from Assumption \ref{A13}. Similarly, we obtain:
\footnotesize\begin{align*}
		&\quad \ \mathrm{E}\left[T^S\right]=\mathrm{E}\left[\frac{1}{G}\sum_{(i,g,t):t\ge2} \mathbb{I}\{S_{i,g,t}^*=1\} \left({Y}_{i, g, t}(1)-{Y}_{i, g, t}(0)\right)\right]\\
	&=\frac{1}{G} \sum_{t \geq 2} \sum_{g=1}^{G} \mathrm{E}\left[\mathbb{I}\left\{ D_{g,t} = 1,\ D_{g,t-1} = 0 \right\} \left( \mathrm{E}\left[\sum_{i=1}^{N_{g,t-1}} \mathbb{I}\left\{ S_{i,g,t-1}^{\ast} = 1\right\}\left({Y}_{i,g, t-1}(1)-{Y}_{i,g, t-1}(0)\right) \mid \mathbf{D}_g\right]\right.\right.\\
	&\qquad\qquad\qquad\qquad\qquad\qquad\qquad\qquad\qquad\  \left.\left.+ \mathrm{E}\left[\sum_{i=1}^{N_{g,t}} \mathbb{I}\left\{ S_{i,g,t}^{\ast} = 1 \right\}\left({Y}_{i,g, t}(1)-{Y}_{i,g, t}(0)\right)\mid \mathbf{D}_g\right]\right)\right]\\
	&=\frac{1}{G} \sum_{t \geq 2} \sum_{g=1}^{G} \mathrm{E}\left[\mathbb{I}\left\{ D_{g,t} = 1,\ D_{g,t-1} = 0 \right\} \left( \mathrm{E}\left[\sum_{i=1}^{N_{g,t-1}} \left({Y}_{i,g, t-1}(1)-{Y}_{i,g, t-1}(0)\right) \mid S_{i,g,t-1}^{\ast} = 1,\mathbf{D}_g\right]\times \Pr\left(S_{i,g,t-1}^{\ast} = 1\mid \mathbf{D}_g\right)\right.\right.\\
	&\qquad\qquad\qquad\qquad\qquad\qquad\qquad\qquad\qquad\  \left.\left.+ \mathrm{E}\left[\sum_{i=1}^{N_{g,t}} \left({Y}_{i,g, t}(1)-{Y}_{i,g, t}(0)\right)\mid S_{i,g,t}^{\ast} = 1,\mathbf{D}_g\right]\times \Pr\left(S_{i,g,t}^{\ast} = 1\mid \mathbf{D}_g\right)\right)\right]\\
&= \mathrm{E}\left[ \frac{1}{G}\sum_{t\geq 2}\sum_{g: D_{g, t}=1, D_{g, t-1}=0} \left(\lambda_{t-1}\times  \sum_{i=1}^{N_{g,t-1}}\left({Y}_{i,g, t-1}(1)-{Y}_{i,g, t-1}(0)\right)\right.\right.\left. + \left(1-\lambda_{t-1}\right) \times \left.   \sum_{i=1}^{N_{g,t}}\left({Y}_{i,g, t}(1)-{Y}_{i,g, t}(0)\right)\right)\right],
	\end{align*}\normalsize
where the last step follows from Assumptions \ref{A13} and \ref{A14}.

Thus, we conclude that
\begin{align*}
	f\left(\mathrm{E}\left[\mathbf{V}\right]\right) =\frac{\mathrm{E}\left[T^S\right]}{\mathrm{E}\left[P^S\right]},
\end{align*}
and  the problem reduces to showing
\(
\sqrt{G}\left( \mathrm{E}\left[T^S\right]-\mathrm{E}\left[P^S\right]\mathrm{E}\left[\frac{T^S}{P^S}\right]\right)=o(1).
\)
\vspace{10pt}

\textbf{Step 2.2}: Next, we decompose:
\begin{align}\label{EQ:PROP6STEP2}
\mathrm{E}\left[T^S\right] - \mathrm{E}\left[P^S\right] \cdot \mathrm{E}\left[\frac{T^S}{P^S}\right] 
&= \underbrace{\mathrm{E}\left[ \left( P^S - \mathrm{E}\left[P^S\right] \right) \cdot \frac{\widetilde{T}^S}{P^S} \cdot I_G \right]}_{=:A} 
+ \underbrace{\mathrm{E}\left[ \left( P^S - \mathrm{E}\left[P^S\right] \right) \cdot \frac{\widetilde{T}^S}{P^S} \cdot (1 - I_G) \right]}_{=:B},
\end{align}
where \[\widetilde{T}^S 
= \mathrm{E}\left[T^S \mid \left(S_{i,g,t}^{\ast}\right)_{(i,g,t)\in \{1,\cdots,N_{g,t}\}\times \{1,\cdots,G\}\times \{1,\cdots,T\}} \right] \quad \text{and} \quad 
I_G := \mathbb{I}\left\{ \left|P^S - \mathrm{E}\left[P^S\right]\right| < \varepsilon_G \right\}
\] for some \(\varepsilon_G > 0\) that will be specified below. 

\vspace{10pt}

We begin with the second term \(B\) on the right-hand side of (\ref{EQ:PROP6STEP2}). Applying the Cauchy--Schwarz inequality twice yields:
\begin{align*}
 \sqrt{G} \left|B\right|&= \sqrt{G} \left\vert \mathrm{E}\left[ \left( P^S - \mathrm{E}\left[P^S\right] \right) \cdot \frac{\widetilde{T}^S}{P^S} \cdot (1 - I_G) \right] \right\vert \\
&\leq \sqrt{G} \cdot \left( \mathrm{E}\left[ \left( P^S - \mathrm{E}\left[P^S\right] \right)^2 \right] \right)^{1/2} \cdot \left( \mathrm{E}\left[ \left( \frac{\widetilde{T}^S}{P^S} \right)^4 \right]^{1/2} \cdot \mathrm{E}\left[ (1 - I_G)^4 \right]^{1/2} \right)^{1/2} \\
&= \underbrace{\sqrt{G} \cdot \left( \mathrm{Var}\left[P^S\right] \right)^{1/2}}_{=:B_1} \cdot \underbrace{\left( \mathrm{E}\left[ \left( \frac{\widetilde{T}^S}{P^S} \right)^4 \right] \cdot \mathrm{Pr}(I_G = 0) \right)^{1/4}}_{=:B_2}.
\end{align*}
It is easy to check that the term \(B_1\) converges to a finite limit by Assumption \ref{A19}. It remains to show that the remaining term tends to zero. We observe that 
\begin{align*}
\widetilde{T}^S 
&= \mathrm{E}\left[T^S \mid \left(S_{i,g,t}^{\ast}\right) \right] 
= \frac{1}{G} \sum_{(i,g,t):t \geq 2} \mathbb{I}\{S_{i,g,t}^* = 1\} \cdot \mathrm{E}\left[ Y_{i,g,t}(1) - Y_{i,g,t}(0) \mid \left(S_{i,g,t}^{\ast}\right) \right], \\
P^S 
&= \frac{1}{G} \sum_{(i,g,t):t \geq 2} \mathbb{I}\{S_{i,g,t}^* = 1\}.
\end{align*}
Hence, for any fixed \(G\), applying Jensen's inequality  to the convex function \(x \mapsto x^4\) and the elementary inequality \(|a+b|^p \leq 2^{p-1}\left(|a|^p+|b|^p\right)\) for all \(a,b\in \mathbb{R}\) and \(p>1\) yields: 
\begin{align*}
\left( \frac{\widetilde{T}^S}{P^S} \right)^4 
&= \left( 
\frac{ \frac{1}{G} \sum_{(i,g,t):t \geq 2} \mathbb{I}\{S_{i,g,t}^* = 1\} \cdot \left| \mathrm{E}\left[ Y_{i,g,t}(1) - Y_{i,g,t}(0) \mid \left(S_{i,g,t}^{\ast}\right) \right] \right| }{ \frac{1}{G} \sum_{(i,g,t):t \geq 2} \mathbb{I}\{S_{i,g,t}^* = 1\} } 
\right)^4  \\
&\leq 
\left( 
\max_{(i,g,t):t \geq 2,1\leq g\leq G} \left| \mathrm{E}\left[ Y_{i,g,t}(1) - Y_{i,g,t}(0) \mid \left(S_{i,g,t}^{\ast}\right) \right] \right|
\right)^4 \\
&\leq 
\max_{(i,g,t):t \geq 2,1\leq g\leq G} \left( \mathrm{E}\left[ \left| Y_{i,g,t}(1) - Y_{i,g,t}(0) \right|^4 \mid \left(S_{i,g,t}^{\ast}\right) \right] \right) \\
&\leq 
16 \sup_{(i,g,t):t \geq 2,g\in \mathbb{N}\setminus \{0\}} \mathrm{E}\left[ Y_{i,g,t}(d)^4\mid \left(S_{i,g,t}^{\ast}\right) \right]\\
&\leq C_3,
\end{align*}
for some constant \(C_3>0\)  by Assumption \ref{A19}.
Moreover, since the above bound holds uniformly in \(G\), it holds that
\[
\sup_{G\in \mathbb{N}\setminus \{0\}} \mathrm{E}\left[\left( \frac{\widetilde{T}^S}{P^S} \right)^4\right] \leq  C_3.
\]
On the other hand, we observe that the random variables \(\left\{\sum_{(i,t):t \geq 2} \mathbb{I}\{S_{i,g,t}^* = 1\}\right\}\) are independent across \(g\) under Assumption \ref{A15}. Moreover, by Assumption \ref{A19}, for each \(g \in \{1,\dots,G\}\), there exists constant \(C_4 > 0\) such that 
\begin{align*}
0 \leq \sum_{(i,t):t \geq 2} \mathbb{I}\{S_{i,g,t}^* = 1\} \leq \sum_{t \geq 2} N_{g,t} \leq C_4.
\end{align*}
Hence, applying Hoeffding's inequality for the sum of bounded independent random variables yields:
\begin{align*}
\mathrm{Pr}(I_G = 0) 
&= \mathrm{Pr}\left( \left| P^S - \mathrm{E}\left[P^S\right] \right| \geq \varepsilon_G \right) \\
&= \mathrm{Pr}\left( \left| \sum_{g=1}^G \sum_{(i,t): t \geq 2} \mathbb{I}\{S_{i,g,t}^* = 1\} - \mathrm{E}\left[\sum_{g=1}^G \sum_{(i,t):t \geq 2} \mathbb{I}\{S_{i,g,t}^* = 1\} \right] \right| \geq G\varepsilon_G \right) \\
&\leq 2 \exp\left( -\frac{2G^2 \varepsilon_G^2}{\sum_{g=1}^G C_4^2} \right)
= 2 \exp\left( -\frac{2G \varepsilon_G^2}{C_4^2} \right).
\end{align*}
Now, choose \(\varepsilon_G := \sqrt{\frac{\log G}{G}}\).   Then,
\[
\mathrm{Pr}(I_G = 0) \leq 2 G^{-2/ C_4^2},
\]
which implies
\[
\lim_{G\rightarrow \infty}B_2=\lim_{G\rightarrow \infty}\left( \mathrm{E}\left[ \left( \frac{\widetilde{T}^S}{P^S} \right)^4 \right] \cdot \mathrm{Pr}(I_G = 0) \right)^{1/4} \leq \lim_{G\rightarrow \infty}\left( C_3 \cdot 2 G^{-2/ C_4^2} \right)^{1/4} = 0.
\]
Hence, 
\[ 	\sqrt{G}\cdot B=o(1).\]

\vspace{10pt}

Finally, consider the first term \(A\) on the right-hand side of (\ref{EQ:PROP6STEP2}). We decompose:
\begin{align*}
A&=  \mathrm{E}\left[ \left(P^S - \mathrm{E}\left[P^S\right]\right) \cdot \frac{\widetilde{T}^S}{P^S} \cdot I_G \right] \\
&= \underbrace{\mathrm{E}\left[ \left(\widetilde{T}^S - \mathrm{E}\left[\widetilde{T}^S\right]\right) \cdot \left(P^S - \mathrm{E}\left[P^S\right]\right) \cdot \frac{I_G}{P^S} \right]}_{=:A_1} 
+ \underbrace{\mathrm{E}\left[\widetilde{T}^S\right] \cdot \mathrm{E}\left[ \left(P^S - \mathrm{E}\left[P^S\right]\right) \cdot \frac{I_G}{P^S} \right]}_{=:A_2}.
\end{align*}
For the term \(A_1\), we apply Cauchy-Schwarz:
\begin{align*}
\left|A_1\right|&= \left| \mathrm{E}\left[ \left(\widetilde{T}^S - \mathrm{E}\left[\widetilde{T}^S\right]\right) \cdot \left(P^S - \mathrm{E}\left[P^S\right]\right) \cdot \frac{I_G}{P^S} \right] \right| \\
&\leq \left( \mathrm{Var}\left[\widetilde{T}^S\right] \right)^{1/2} \cdot 
\left( \mathrm{E}\left[ \left( \left(P^S - \mathrm{E}\left[P^S\right]\right) \cdot \frac{I_G}{P^S} \right)^2 \right] \right)^{1/2}.
\end{align*}
Observe that on the event \(\left\{I_G=1\right\}\), we have \(\left|P^S-\mathrm{E}\left[P^S\right]\right|<\varepsilon_G\) with   \(\varepsilon_G = \sqrt{\frac{\log G}{G}}=o(1)\). This implies:
\[P^S >\mathrm{E}\left[P^S\right]-\varepsilon_G,\]
where $\mathrm{E}\left[P^S\right]-\varepsilon_G>0$ for sufficiently large $G$. Consequently,
\footnotesize\[
\left(\mathrm{E}\left[ \left( \frac{P^S - \mathrm{E}\left[P^S\right]}{P^S} \cdot I_G \right)^2 \right]\right)^{1/2} \leq \left(\mathrm{E}\left[ \left( \frac{P^S - \mathrm{E}\left[P^S\right]}{P^S} \right)^2 \right]\right)^{1/2} \leq \left(\mathrm{E}\left[ \left(\frac{P^S - \mathrm{E}\left[P^S\right]}{\mathrm{E}\left[P^S\right] - \varepsilon_G}\right)^2 \right]\right)^{1/2}
 \leq \frac{\left(\mathrm{Var}\left[P^S\right]\right)^{1/2}}{\mathrm{E}\left[P^S\right] - \varepsilon_G}.
\]\normalsize
By Assumption \ref{A19}, the quantities \(\mathrm{E}\left[P^S\right]\) ,  \(\sqrt{G} \cdot \left( \mathrm{Var}\left[P^S\right] \right)^{1/2}\)
and \begin{align*}
	\sqrt{G} \cdot \left( \mathrm{Var}\left[\widetilde{T}^S\right] \right)^{1/2}
	&=\sqrt{G} \cdot \left(\mathrm{Var}\left[\mathrm{E}\left[T^S \mid \left(S_{i,g,t}^{\ast}\right) \right]\right]\right)^{1/2}\\
	&= \sqrt{G}\cdot  \left(\mathrm{Var}\left[T^S\right]-\mathrm{E}\left[\mathrm{Var}\left[T^S \mid \left(S_{i,g,t}^{\ast}\right) \right]\right]\right)^{1/2}\\
	&\leq \sqrt{G} \cdot  \left(\mathrm{Var}\left[T^S\right]\right)^{1/2}
\end{align*} converge to finite limits. Thus, for sufficiently large \(G\), 
\begin{align*}
	\sqrt{G}\cdot |A_1|\leq \sqrt{G} \cdot  \left( \mathrm{Var}\left[T^S\right] \right)^{1/2} \cdot \frac{\sqrt{G} \cdot  \left( \mathrm{Var}\left[P^S\right] \right)^{1/2} }{	\sqrt{G}\cdot  \left(\mathrm{E}\left[P^S\right] - \varepsilon_G\right)}\xrightarrow{G\rightarrow \infty}0.
\end{align*}

It remains to examine the term \(A_2\). We rewrite:
\begin{align*}
	A_2=\mathrm{E}\left[\widetilde{T}^S\right] \mathrm{E}\left[\left(P^S-\mathrm{E}\left[P^S\right]\right) \frac{I_G}{P^S}\right]=-\mathrm{E}\left[\widetilde{T}^S\right] \mathrm{E}\left[\left(P^S-\mathrm{E}\left[P^S\right]\right) \frac{1-I_G}{P^S}\right] .
\end{align*}
Recalling what we have derived above:
\[
\mathrm{Pr}(I_G = 0) \leq 2 G^{-2/ C_4^2}
\quad \text{and}\quad 
\left(\mathrm{E}\left[ \left( \frac{P^S - \mathrm{E}\left[P^S\right]}{P^S} \right)^2 \right]\right)^{1/2} \leq \frac{\left(\mathrm{Var}\left[P^S\right]\right)^{1/2}}{\mathrm{E}\left[P^S\right] - \varepsilon_G}
\]
for sufficiently large \(G\). Thus,  applying  Cauchy--Schwarz yields:
\begin{align*}
\left|A_2\right|&=  \left|\mathrm{E}\left[\widetilde{T}^S\right] \cdot \mathrm{E}\left[ \left(P^S - \mathrm{E}\left[P^S\right]\right) \cdot \frac{1-I_G}{P^S} \right]\right|\\
&\leq  \left|\mathrm{E}\left[\widetilde{T}^S\right]\right| \cdot\left( \mathrm{E}\left[ \left( \frac{P^S - \mathrm{E}\left[P^S\right]}{P^S} \right)^2 \right] \right)^{1/2}\cdot \left( \mathrm{E}\left[\left(1-I_G\right)^2 \right] \right)^{1/2}
 \\
&\leq  \left|\mathrm{E}\left[\widetilde{T}^S\right] \right|\cdot \frac{\left(\mathrm{Var}\left[P^S\right]\right)^{1/2}}{\mathrm{E}\left[P^S\right] - \varepsilon_G}\cdot 2 G^{-1/ C_4^2}.
\end{align*}
Moreover, \(\mathrm{E}\left[\widetilde{T}^S\right]=\mathrm{E}\left[T^S\right]\) by definition, and the quantities \(\mathrm{E}\left[P^S\right]\) ,  \(\sqrt{G} \cdot \left( \mathrm{Var}\left[P^S\right] \right)^{1/2}\) and
 \(\mathrm{E}\left[T^S\right]\) converges to a finite limit by Assumption \ref{A19}.  For sufficiently large \(G\), 
\begin{align*}
	\sqrt{G}\cdot |A_2|\leq    \left|\mathrm{E}\left[T^S\right]\right|\cdot \frac{\sqrt{G}\cdot \left(\mathrm{Var}\left[P^S\right]\right)^{1/2}}{\mathrm{E}\left[P^S\right] - \varepsilon_G}\cdot 2 G^{-1/ C_4^2}\xrightarrow{G\rightarrow \infty}0.
\end{align*}
This completes the proof.
\end{proof}
\begin{proof}[Proof of Proposition \ref{PROPTEST}]

(i) We only examine the asymptotic behavior of $T_{1,0}^{s}$ under $H_{0}^{MC}$. The arguments for $T_{1,0}^{m}$ are analogous. To examine the asymptotic behavior of $\sqrt{G}\left(\widehat{\boldsymbol{\tau}}_{1,0}-\mathbf{0}\right)$ under $H_{0}^{MC}$ defined in (\ref{VECHAT}), recall the definition of $\boldsymbol{U}$ and note that for some generic constants $C>0$ it holds that:
\begin{align*}
&\sup_{g,t}N_{g,t}^{4}\mathrm{E}\left[1\{D_{g,t}=1,D_{g,t-1}=0\}(Y_{g,t}-Y_{g,t-l})^4\right]\\
\leq & \sup_{g,t}N_{g,t}^{4} \times \sup_{g,t}\mathrm{E}\left[(Y_{g,t}-Y_{g,t-l})^4\right]\\
\leq & C\sup_{g,t}N_{g,t}^{4} \times \sup_{g,t}\mathrm{E}\left[Y_{g,t}^4\right]<+\infty,
\end{align*}
where the last inequality follows from Assumption \ref{A19}.  Thus, by another application of Assumption \ref{A19} and the Lypuanov's Central Limit Theorem:
\[
\sqrt{G}\left(\mathbf{U}-\mathrm{E}\left[\mathbf{U}\right]\right)\stackrel{d}{\rightarrow} N(\mathbf{0},\boldsymbol{\Sigma}_{\mathbf{U}}),
\]   
where $\boldsymbol{\Sigma}_{\mathbf{U}}$ is defined in (\ref{ASYVAR}). Next, define:
\[
\boldsymbol{f}\left(\begin{array}{c} p_{0,0}(t,l),q_{0,0}(t,l),p_{1,0}(t,l),q_{1,0}(t,l)\\ \ldots \\ p_{0,0}(j,k),q_{0,0}(j,k),p_{1,0}(j,k),q_{1,0}(j,k)\end{array}\right)\equiv \left(\begin{array}{c}\frac{q_{1,0}(t,l)- \frac{p_{1,0}(t,l)}{p_{0,0}(t,l)}q_{0,0}(t,l)}{p_{1,0}(t,l)}\\ \ldots\\\frac{q_{1,0}(j,k)- \frac{p_{1,0}(j,k)}{p_{0,0}(j,k)}q_{0,0}(j,k)}{p_{1,0}(j,k)}\end{array} \right)
\]
for all elements of $\mathcal{C}_{S}$ such that the denominator is strictly positive. Moreover, let
\[
\boldsymbol{f}\left(\mathrm{E}\left[\mathbf{U}\right]\right)=\boldsymbol{f}_{N}\cdot \boldsymbol{f}_{D},
\]
where
\[
\boldsymbol{f}_{N}=\left(\begin{array}{c} f_{N}(t,l) \\\ldots \\ f_{N}(j,k)\end{array}\right)
\]
with elements:
\begin{align*}
f_{N}(t,l)=&\frac{1}{G}\sum_{g\in G}N_{g,t}\left(\mathrm{E}\left[1\{D_{g,t}=1,D_{g,t-1}=0\}(Y_{g,t}-Y_{g,t-l})\right]\right.\\
&\left.-\frac{\mathrm{E}\left[P_{1,0}(t,l)\right]}{\mathrm{E}\left[P_{0,0}(t,l)\right]}\mathrm{E}\left[1\{D_{g,t}=1,D_{g,t-1}=0\}(Y_{g,t}-Y_{g,t-l})\right]\right)
\end{align*}
and
\[
\boldsymbol{f}_{D}=\left(\begin{array}{c} f_{D}(t,l) \\\ldots \\ f_{D}(j,k)\end{array}\right)=\left(\begin{array}{c}\frac{1}{\mathrm{E}\left[P_{1,0}(t,l)\right]}\\\ldots \\\frac{1}{\mathrm{E}\left[P_{1,0}(j,k)\right]}\end{array}\right)^{\prime}.
\]
First, note that under $H_{0}^{MC}$:
\begin{align*}
&\sqrt{G}\left(\widehat{\boldsymbol{\tau}}_{1,0}-\mathbf{0}\right)\\
=&\sqrt{G}\left(\widehat{\boldsymbol{\tau}}_{1,0}-\boldsymbol{\tau}_{1,0}\right)\\
=&\sqrt{G}\left(\widehat{\boldsymbol{\tau}}_{1,0}-\boldsymbol{f}_{N}\cdot \boldsymbol{f}_{D}\right) + \sqrt{G}\left(\boldsymbol{f}_{N}\cdot \boldsymbol{f}_{D}-\boldsymbol{\tau}_{1,0}\right),
\end{align*}
where $\boldsymbol{\tau}_{1,0}$ was defined in (\ref{VEC}). By Assumption \ref{A2A}, we have that $\mathrm{E}\left[\mathbf{U}(t,l)\right]$ converges pointwise for every $t$ and $l$ to $\mathbf{U}^{\infty}(t,l)$. Thus, also $\mathrm{E}\left[\mathbf{U}\right]$ converges to $\mathbf{U}^{\infty}$. Moreover, as $\boldsymbol{f}$ is differentiable in a neighborhood of $\mathbf{U}^{\infty}$, it follows again by the uniform Delta method \citep[e.g., Theorem 3.8][]{VanderVaart1998} that:
\[
\sqrt{G} \left(\widehat{\boldsymbol{\tau}}_{1,0} -\boldsymbol{f}_{N}\cdot \boldsymbol{f}_{D}\right)=\mathbf{J}_{f}	\left(\mathbf{U}^{\infty}\right) \times \sqrt{G}\left(\mathbf{U}-\mathrm{E}\left[\mathbf{U}\right]\right) + o_{\Pr}(1),
\]
where:
\begin{equation}\label{Jacobian}
\mathbf{J}_{f}	\left(\mathbf{U}^{\infty}\right)=\left(\begin{array}{ccccc}\mathbf{J}_{f}	\left(\mathbf{U}^{\infty}(t,l)\right)^{\prime} &\mathbf{0}_{4} &\ldots &\ldots &\mathbf{0}_{4}\\ \mathbf{0}_{4} & \mathbf{J}_{f}	\left(\mathbf{U}^{\infty}(t,m)\right)^{\prime} &\ddots &\ddots &\vdots\\
\vdots &\ddots &\ddots &\ddots &\vdots\\
\vdots &\ddots  &\ddots &  \mathbf{J}_{f}	\left(\mathbf{U}^{\infty}(m,k)\right)^{\prime}& \mathbf{0}_{4}\\
\mathbf{0}_{4}&\ldots &\ldots &\mathbf{0}_{4}&  \mathbf{J}_{f}	\left(\mathbf{U}^{\infty}(j,k)\right)^{\prime}\end{array}\right)
\end{equation}
is a $\vert \mathcal{C}_{S}\vert \times (4\cdot \vert \mathcal{C}_{S}\vert)$ matrix each element $\mathbf{J}_{f}	\left(\mathbf{U}^{\infty}(t,l)\right)$ is given by:
\[
\mathbf{J}_{f}	\left(\mathbf{U}^{\infty}(t,l)\right)=\lim_{G\rightarrow \infty}\left(\begin{array}{c}\frac{\mathrm{E}\left[Q_{0,0}(t,l)\right]}{\mathrm{E}\left[P_{0,0}(t,l)^2\right]}\\-\frac{1}{\mathrm{E}\left[P_{0,0}(t,l)\right]} \\-\frac{\mathrm{E}\left[Q_{1,0}(t,l)\right]}{\mathrm{E}\left[P_{1,0}(t,l)^2\right]}\\ \frac{1}{\mathrm{E}\left[P_{1,0}(t,l)\right]}\end{array}\right)
\]
and $\mathbf{0}_{4}$ denotes a $1\times 4$ vector of zeros. We thus have that:
\[
\sqrt{G} \left(\widehat{\boldsymbol{\tau}}_{1,0} -\boldsymbol{f}\left(\mathrm{E}\left[\mathbf{U}\right]\right)\right)\stackrel{d}{\rightarrow}N\left(\mathbf{0},\mathbf{J}_{f}	\left(\mathbf{U}^{\infty}\right)\boldsymbol{\Sigma}_{\mathbf{U}} \mathbf{J}_{f}	\left(\mathbf{U}^{\infty}\right)^{\prime}\right),
\]
where $\mathbf{\Sigma}_{\mathbf{U}}$ is positive definite and $\mathbf{J}_{f}	\left(\mathbf{U}^{\infty}\right)$ has full rank, so also $\mathbf{J}_{f}	\left(\mathbf{U}^{\infty}\right)\boldsymbol{\Sigma}_{\mathbf{U}} \mathbf{J}_{f}	\left(\mathbf{U}^{\infty}\right)^{\prime}$ is positive definite. In particular, using ``$\infty$'' also in the expectation operator to denote limiting expressions, e.g. $\mathrm{E}_{\infty}\left[Q_{0,0}(t,l)\right]:=\lim_{G\rightarrow \infty}\mathrm{E}\left[Q_{0,0}(t,l)\right]$, for a specific element $(t,l)$ from $\mathcal{C}_{S}$ we have that:
\[
\sqrt{G} \left(\widehat{\tau}_{1,0}(t,l) - \tau_{1,0}(t,l)\right)\stackrel{d}{\rightarrow}N\left(0,\sigma^{2}(t,l)\right),
\]
where:
\begin{equation}\label{EQASYVARELEMENT}
\sigma^{2}(t,l):=\Omega_{1,0}^{(1)}(t,l) + \Omega_{1,0}^{(2)}(t,l)+ 2\Omega_{1,0}^{(12)}(t,l)+\Omega_{0,0}^{(1)}(t,l) + \Omega_{0,0}^{(2)}(t,l)+ 2\Omega_{0,0}^{(12)}(t,l),
\end{equation}
with:
\[
\Omega_{1,0}^{(1)}(t,l)=\left(-\frac{\mathrm{E}_{\infty}\left[Q_{1,0}(t,l)\right]}{\mathrm{E}_{\infty}\left[P_{1,0}(t,l)^2\right]}\right)^2\text{Var}_{\infty}\left[P_{1,0}(t,l)\right],
\]
\[
\Omega_{1,0}^{(2)}(t,l)=\left(\frac{1}{\mathrm{E}_{\infty}\left[P_{1,0}(t,l)^2\right]}\right)^2\text{Var}_{\infty}\left[Q_{1,0}(t,l)\right],
\]
and
\[
\Omega_{1,0}^{(12)}(t,l)=\left(-\frac{\mathrm{E}_{\infty}\left[Q_{1,0}(t,l)\right]}{\mathrm{E}_{\infty}\left[P_{1,0}(t,l)^2\right]\mathrm{E}_{\infty}\left[P_{1,0}(t,l)\right]}\right)\text{Cov}_{\infty}\left[Q_{1,0}(t,l),P_{1,0}(t,l)\right]
\]
(the terms $\Omega_{0,0}^{(1)}(t,l)$, $\Omega_{0,0}^{(2)}(t,l)$, $\Omega_{0,0}^{(12)}(t,l)$ are defined analogously).
Replicating the arguments from the proof of Proposition \ref{PROP7}, we also obtain that: 
\[
\sqrt{G}\left(\boldsymbol{f}_{N}\cdot \boldsymbol{f}_{D}-\boldsymbol{\tau}_{1,0}\right)=\sqrt{G}\left(\boldsymbol{f}_{N}\cdot \boldsymbol{f}_{D}-\mathbf{0}\right)=o_{\Pr}(1).
\]
Next, we show the first order asymptotic validity of the bootstrap moment equalities under the null hypothesis. Specifically, we want to show that, for any $\nu>0$:
\[
\Pr\left(\sup_{\mathbf{x}\in\mathbb{R}^{\mathcal{C}_{S}} }\left\vert \Prb \left(\sqrt{G}(\widehat{\boldsymbol{\tau}}_{1,0}^{b}-\widehat{\boldsymbol{\tau}}_{1,0})\leq \mathbf{x}\right)-\Pr\left(\sqrt{G}(\widehat{\boldsymbol{\tau}}_{1,0}-\boldsymbol{\tau}_{1,0})\leq \mathbf{x}\right)\right\vert >\nu \right)\rightarrow 0,
\]
where the inequality is understood elementwise. We will do so by showing that for each element $(t,l)$ of $\mathcal{C}_{S}$ pointwise:
\begin{equation}\label{BValidityELEMENT}
\Pr\left(\sup_{x\in\mathbb{R} }\left\vert \Prb \left(\sqrt{G}(\widehat{\tau}_{1,0}^{b}(t,l)-\widehat{\tau}_{1,0}(t,l))\leq x\right)-\Pr\left(\sqrt{G}(\widehat{\tau}_{1,0}(t,l)-\tau_{1,0}(t,l))\leq x\right)\right\vert >\nu \right)\rightarrow 0,
\end{equation}
Since the cardinality of $\mathcal{C}_{S}$ is finite, the uniform result will follow by the Cram\'{e}r-Wold device once we establish that that the bootstrap covariances adequately mimic the actual covariances of the moment equalities. 

We start with the convergence in (\ref{BValidityELEMENT}). To show this result, we establish that for each element $(t,l)$ in $\mathcal{C}_{S}$:
\[
\sqrt{G}(\widehat{\tau}_{1,0}^{b}(t,l) -\widehat{\tau}_{1,0}(t,l))\stackrel{d^{b}}{\rightarrow}N\left(0,\sigma^{2}(t,l)\right),
\] 
in probability. Then, since:
\[
\sqrt{G} \left(\widehat{\tau}_{1,0}(t,l) - \tau_{1,0}(t,l)\right)\stackrel{d}{\rightarrow}N\left(0,\sigma^{2}(t,l)\right),
\]
the result will follow from the Triangle inequality and Polya's Theorem given the continuity of the normal distribution. As in the first part, we start by arguing that:
\[
\sqrt{G}\left(\mathbf{U}^{b}(t,l)-\mathrm{E}^{b}\left[\mathbf{U}^{b}(t,l)\right]\right)\stackrel{d^{b}}{\rightarrow} N(\mathbf{0},\boldsymbol{\Sigma}(t,l)),
\]   
  in probabilty, which we achieve by applying Lyapunov's CLT for triangular arrays as the bootstrap observations are row-wise independent. To establish the latter result, first note that as the bootstrap mean equals the sample mean:
\[
\mathrm{E}^{b}\left[Q_{j,0}^{b}(t,l)\right]=Q_{1,0}(t,l)
\]
and
\[
\mathrm{E}^{b}\left[P_{1,0}^{b}(t,l)\right]=P_{1,0}(t,l),
\]
so that:
\[
\mathrm{E}^{b}\left[Q_{j,0}^{b}(t,l)\right]-Q_{j,0}(t,l)=0
\]
as well as 
\[
\mathrm{E}^{b}\left[P_{j,0}^{b}(t,l)\right]-P_{j,0}(t,l)=0
\]
for $j\in\{0,1\}$. Secondly, focusing on $Q_{1,0}^{b}(t,l)$ for illustrative purposes, because the (block) bootstrap observations are i.i.d. across groups, conditional on the sample, we have that:
\begin{align*}
&\text{Var}^{b}\left[ Q_{1,0}^{b}(t,l)\right]\notag\\
=&\frac{1}{G}\sum_{g\in G}\left\{N_{g,t} \mathbb{I}\{ D_{g, t}=1, D_{g, t-1}=0\}\left\{ \left(Y_{g, t-1}-Y_{g, t-l}\right)\right.\right.\\
&\left.\left.- \left(\frac{1}{G}\sum_{g\in G}\mathbb{I}\{D_{g,t}=1,D_{g,t-1}=0\}N_{g,t}\left(Y_{g, t-1}-Y_{g, t-l}\right)\right)\right\} \right\}^2\\
\stackrel{\Pr}{\rightarrow}& \mathrm{Var}_{\infty}\left[Q_{1,0} (t,l)\right],
\end{align*}
and the same argument applies to the other expressions as well. Focusing again on $Q_{1,0}^{b}(t,l)$ for simplicity and denoting $Z_{g}^{b}:=\mathbb{I}\{D^{b}_{g, t}=1, D^{b}_{g, t-1}=0\} N_{g,t}\left(Y_{g, t-1}^{b}-Y_{g, t-l}^{b}\right)$, it remains to analyze the Lyapunov condition:
\[
\frac{1}{\sum_{g\in G} \mathrm{E}^{b}\left[\left\vert Z_{g}^{b}\right\vert^{2} \right]^{1+\delta/2}}\sum_{g\in G} \mathrm{E}^{b}\left[\left\vert Z_{g}^{b}\right\vert^{2+\delta} \right]\stackrel{\mathrm{Pr}}{\rightarrow}0
\]
for some $\delta>0$. That is:
\[
\frac{1}{\left(\sum_{g\in G} \left\vert Z_{g}\right\vert^{2}\right)^{1+\delta/2}}\sum_{g\in G} \left(\left\vert Z_{g}\right\vert^{2+\delta} \right)\stackrel{\mathrm{Pr}}{\rightarrow}0.
\]
Thus, first define $\overline{Z}_{g}=\vert Z_{g} \vert^{2+\delta} - \mathrm{E}\left[\vert Z_{g}^{b} \vert^{2+\delta}\right]$, and note that for some $\epsilon>0$:
\begin{align*}
&\mathrm{Pr}\left[\frac{1}{G}\sum_{g}\left\vert \overline{Z}_{g} \right\vert < \epsilon \right]\\
\leq & \frac{\mathrm{E}\left[\left(\sum_{g}\overline{Z}_{g} \right)^4 \right]}{G^4 \epsilon^4}\\
\leq &\frac{C\left(\sum_{g}\mathrm{E}\left[\overline{Z}_{g}^4 \right] + \left(\sum_{g}\mathrm{E}\left[\overline{Z}_{g}^2 \right]\right)^2\right)}{G^4 \epsilon^4},
\end{align*}
where the last line follows from Rosenthal's inequality for some $C>1$. The RHS of the last line is bounded, however, by the moment conditions in Assumption \ref{A2A} since:
\begin{align*}
&\mathrm{E}\left[\overline{Z}_{g}^4 \right]\\
=&\mathrm{E}\left[\left(\vert Z_{g} \vert^{2+\delta} - \mathrm{E}\left[\vert Z_{g}^{b} \vert^{2+\delta}\right]\right)^4 \right]\\
=&O\left(\mathrm{E}\left[\vert Z_{g} \vert^{8+4\delta}\right]\right)\\
\leq &O\left(\mathrm{E}\left[\sup_{g,t}\vert Y_{g,t} \vert^{8+4\delta}\right]\right)\\
=&O(1).
\end{align*}
This in turn implies that
\[
\sum_{g\in G} \mathrm{E}^{b}\left[\left\vert Z_{g}^{b}\right\vert^{2+\delta} \right]=O_{\mathrm{Pr}}\left(G\right).
\]
Similarly, for the denominator we have that:
\[
\sum_{g\in G} \mathrm{E}^{b}\left[\left\vert Z_{g}^{b}\right\vert^{2} \right]^{1+\delta/2}=O_{\mathrm{Pr}}\left(G^{1-\delta/2}\right),
\]
which implies that Lyapunov's condition is satisfied. It follows that:
\[
\sqrt{G}\left(Q_{1,0}^{b}(t,l)-\mathrm{E}^{b}\left[Q_{1,0}^{b}(t,l)\right]\right)\stackrel{d^{b}}{\rightarrow} N(\mathbf{0},\mathrm{Var}_{\infty}\left[Q_{1,0}(t,l)\right]),
\]   
in probability, and since the same argument can also be made for the other terms that:
\[
\sqrt{G}\left(\mathbf{U}^{b}(t,l)-\mathrm{E}^{b}\left[\mathbf{U}^{b}(t,l)\right]\right)\stackrel{d^{b}}{\rightarrow} N(\mathbf{0},\boldsymbol{\Sigma}(t,l)),
\]   
in probability. Moreover, since $f(t,l)=f_{N}(t,l)\cdot f_{D}(t,l)$ is differentiable, by the uniform Delta method and the same arguments as in the first part of the proof:
\[
\sqrt{G} \left(\widehat{\tau}_{1,0}(t,l) - \tau_{1,0}(t,l)\right)\stackrel{d}{\rightarrow}N\left(0,\sigma^{2}(t,l)\right),
\]
in probability, with $\sigma^{2}(t,l)$ defined in (\ref{EQASYVARELEMENT}).

It remains to analyze the covariance of different elements of the vector $\sqrt{G}\left(\widehat{\boldsymbol{\tau}}^{b} - \widehat{\boldsymbol{\tau}}\right)$. For simplicity, we only examine the terms involving $(1,0)$:
\begin{align*}
&\text{Cov}^{b}\left[\sqrt{G}\left( \frac{Q_{1,0}^{b}(t,l)}{P^{b}_{1,0}(t,l)}-\frac{Q_{1,0}(t,l)}{P_{1,0}(t,l)}\right),\sqrt{G}\left( \frac{Q_{1,0}^{b}(j,k)}{P^{b}_{1,0}(j,k)}-\frac{Q_{1,0}(j,k)}{P_{1,0}(j,k)}\right)\right]\\
=&\mathrm{E}^{b}\left[G\left( \frac{Q_{1,0}^{b}(t,l)}{P^{b}_{1,0}(t,l)}-\frac{Q_{1,0}(t,l)}{P_{1,0}(t,l)}\right)\left( \frac{Q_{1,0}^{b}(j,k)}{P^{b}_{1,0}(j,k)}-\frac{Q_{1,0}(j,k)}{P_{1,0}(j,k)}\right)\right]\\
&-\mathrm{E}^{b}\left[\sqrt{G}\left( \frac{Q_{1,0}^{b}(t,l)}{P^{b}_{1,0}(t,l)}-\frac{Q_{1,0}(t,l)}{P_{1,0}(t,l)}\right)\right]\times\mathrm{E}^{b}\left[\sqrt{G}\left( \frac{Q_{1,0}^{b}(j,k)}{P^{b}_{1,0}(j,k)}-\frac{Q_{1,0}(j,k)}{P_{1,0}(j,k)}\right)\right]\\
=&\mathrm{E}^{b}\left[G\left( \frac{Q_{1,0}^{b}(t,l)}{P^{b}_{1,0}(t,l)}-\frac{Q_{1,0}(t,l)}{P_{1,0}(t,l)}\right)\left( \frac{Q_{1,0}^{b}(j,k)}{P^{b}_{1,0}(j,k)}-\frac{Q_{1,0}(j,k)}{P_{1,0}(j,k)}\right)\right],
\end{align*}
where the equality follows since, conditional on the sample, the bootstrap mean of the recentered terms is zero. Straightforward algebra also shows that the above term can be rewritten as:
\begin{align*}
&\mathrm{E}^{b}\left[G\left( \frac{Q_{1,0}^{b}(t,l)-Q_{1,0}(t,l)}{P_{1,0}(t,l)}\right)\left( \frac{Q_{1,0}^{b}(j,k)-Q_{1,0}(j,k)}{P_{1,0}(j,k)}\right)\right]\\
+&\mathrm{E}^{b}\left[G\left( \frac{1}{P^{b}_{1,0}(t,l)}-\frac{1}{P_{1,0}(t,l)}\right)\left( \frac{1}{P^{b}_{1,0}(j,k)}-\frac{1}{P_{1,0}(j,k)}\right)Q_{1,0}^{b}(t,l)Q_{1,0}^{b}(j,k)\right]\\
=&I_{1,0}^{b}+II_{1,0}^{b}
\end{align*}
We will only examine $I_{1,0}^{b}$ in what follows, the term $II_{1,0}^{b}$ follows by similar arguments. Moreover, let
\[
m(t,l)\equiv \frac{1}{G}\sum_{g\in G}\mathbb{I}\{ D_{g, t}=1, D_{g, t-1}=0\} N_{g,t} \left(Y_{g, t-1}-Y_{g, t-l}\right),
\]
noting that $m(t,l)=\mathrm{E}^{b}\left[\mathbb{I}\{ D^{b}_{g, t}=1, D^{b}_{g, t-1}=0\} N_{g,t} \left(Y_{g, t-1}^{b}-Y_{g, t-l}^{b}\right)\right]$. Thus, observe that:
\begin{align*}
&\mathrm{E}^{b}\left[G\left( \frac{Q_{1,0}^{b}(t,l)-Q_{1,0}(t,l)}{P_{1,0}(t,l)}\right)\left( \frac{Q_{1,0}^{b}(j,k)-Q_{1,0}(j,k)}{P_{1,0}(j,k)}\right)\right]\\
=& \mathrm{E}^{b}\left[\frac{1}{P_{1,0}(t,l)\sqrt{G}}\sum_{g\in G}\left(\mathbb{I}\{ D^{b}_{g, t}=1, D^{b}_{g, t-1}=0\} N_{g,t} \left(Y_{g, t-1}^{b}-Y_{g, t-l}^{b}\right) - m(t,l)\right)\right.\\
&\left.\times\frac{1}{P_{1,0}(j,k)\sqrt{G}}\sum_{g\in G}\left(\mathbb{I}\{ D^{b}_{g, j}=1, D^{b}_{g, j-1}=0\} N_{g,j} \left(Y_{g, j-1}^{b}-Y_{g, j-k}^{b}\right)-m(j,k)\right)\right]\\
=&\mathrm{E}^{b}\left[\frac{1}{P_{1,0}(t,l)P_{1,0}(j,k) G}\sum_{g\in G}\left(\mathbb{I}\{ D^{b}_{g, t}=1, D^{b}_{g, t-1}=0\} N_{g,t} \left(Y_{g, t-1}^{b}-Y_{g, t-l}^{b}\right) - m(t,l)\right)\right.\\
&\left.\times\left(\mathbb{I}\{ D^{b}_{g, j}=1, D^{b}_{g, j-1}=0\} N_{g,j} \left(Y_{g, j-1}^{b}-Y_{g, j-k}^{b}\right)-m(j,k)\right)\right]\\
&+\mathrm{E}^{b}\left[\frac{1}{P_{1,0}(t,l)P_{1,0}(j,k) G}\sum_{g\in G}\sum_{h\neq g}\left(\mathbb{I}\{ D^{b}_{g, t}=1, D^{b}_{g, t-1}=0\} N_{g,t} \left(Y_{g, t-1}^{b}-Y_{g, t-l}^{b}\right) - m(t,l)\right)\right.\\
&\left.\times\left(\mathbb{I}\{ D^{b}_{h, j}=1, D^{b}_{h, j-1}=0\} N_{h,j} \left(Y_{h, j-1}^{b}-Y_{h, j-k}^{b}\right)-m(j,k)\right)\right]\\
=&\frac{1}{P_{1,0}(t,l)P_{1,0}(j,k) G}\sum_{g\in G}\mathbb{I}\{ D_{g, t}=1, D_{g, t-1}=0\}\mathbb{I}\{ D_{g, j}=1, D_{g, j-1}=0\} N_{g,t} N_{g,j}\\
&\times \left(Y_{g, t-1}-Y_{g, t-l}\right) \left(Y_{g, j-1}-Y_{g, j-k}\right)\\
&-\left(\frac{1}{P_{1,0}(t,l) G}\sum_{g\in G}\mathbb{I}\{ D_{g, t}=1, D_{g, t-1}=0\}N_{g,t} \left(Y_{g, t-1}-Y_{g, t-l}\right)\right)\\
&\times\left(\frac{1}{P_{1,0}(j,k) G}\sum_{g\in G}\mathbb{I}\{ D_{g, j}=1, D_{g, j-1}=0\} N_{g,j} \left(Y_{g, j-1}-Y_{g, j-k}\right)\right),
\end{align*}
where the last equality follows since the blocks are, conditional on the sample, independent. We thus have that the bootstrap covariance is equal to the sample covariance, which converges in probability to $\lim_{G\rightarrow \infty}\mathrm{Cov}\left[\sqrt{G}(\widehat{\tau}_{1,0}(t,l)-\tau_{1,0}(t,l)),\sqrt{G}(\widehat{\tau}_{1,0}(j,k)-\tau_{1,0}(j,k))\right]$.  Uniform convergence  now  follows  by  an  application  of  the  Cram\'{e}r-Wold  device,  see  for  instance \citet{White2000}. Proposition \ref{PROPTEST} (i) then follows from continuous mapping.

(ii) Under the alternative $H_{A}^{MC}$:
\begin{align*}
&\sqrt{G}\left(\widehat{\boldsymbol{\tau}}_{1,0}-\mathbf{0}\right)\\
=&\sqrt{G}\left(\widehat{\boldsymbol{\tau}}_{1,0}-\boldsymbol{\tau}_{1,0}\right) +\sqrt{G}\left(\boldsymbol{\tau}_{1,0}-\mathbf{0}\right)
\end{align*}
For the first term on the RHS, it follows by the arguments of part (i) that:
\[
\sqrt{G}\left(\widehat{\boldsymbol{\tau}}_{1,0}-\boldsymbol{\tau}_{1,0}\right)\stackrel{d}{\rightarrow}N\left(\mathbf{0},\mathbf{J}_{f}	\left(\mathbf{U}^{\infty}\right)\boldsymbol{\Sigma}_{\mathbf{U}} \mathbf{J}_{f}	\left(\mathbf{U}^{\infty}\right)^{\prime}\right).
\]
On the other hand, for the second term, we have that:
\[
\left\vert \sqrt{G}\left(\boldsymbol{\tau}_{1,0}-\mathbf{0}\right)\right\vert \rightarrow + \infty.
\]
Since the bootstrap mimics the distribution under $H_{0}^{MC}$, the result follows.
\end{proof}

\bibliographystyle{chicago}
\bibliography{AGL}

@Book{VanderVaart1998,
  author    = {Van der Vaart, A.W.},
  publisher = {Cambridge University Press},
  title     = {Asymptotic Statistics},
  year      = {1998},
  series    = {Cambridge Series in Statistical and Probabilistic Mathematics},
  doi       = {doi:10.1017/CBO9780511802256},
}

@Article{CH2020,
  author  = {De Chaisemartin, C. and D'Haultefoeuille, X.},
  journal = {American Economic Review},
  title   = {Two-Way Fixed Effects Estimators withHeterogeneous Treatment Effects},
  year    = {2020},
  number  = {9},
  pages   = {2964-2996},
  volume  = {110},
}

@Article{White2000,
  author    = {White, Halbert},
  journal   = {Econometrica},
  title     = {A Reality Check for Data Snooping},
  year      = {2000},
  number    = {5},
  pages     = {1097--1126},
  volume    = {68},
  publisher = {Econometric Society},
}

@TechReport{DK2022,
  author      = {Denteh, A. and Kedagni, D.},
  institution = {arXiv},
  title       = {Misclassification in Difference-in-Differences Models},
  year        = {2022},
  number      = {arXiv:2207.11890v2},
  type        = {arXiv working paper},
}

@Article{Bindler2018,
  author  = {Bindler, Anna and Hjalmarsson, Randi},
  journal = {American Economic Journal: Economic Policy},
  title   = {How Punishment Severity Affects Jury Verdicts: Evidence from Two Natural Experiments},
  year    = {2018},
  number  = {4},
  pages   = {36--78},
  volume  = {10},
  doi     = {10.1257/pol.20170214},
}

@article{Rambachan2023,
    author = {Rambachan, Ashesh and Roth, Jonathan},
    title = {A More Credible Approach to Parallel Trends},
    journal = {Review of Economic Studies},
    volume = {90},
    pages = {2555--2591},
    year = {2023},
    doi = {10.1093/restud/rdad018},
    publisher = {Oxford University Press on behalf of The Review of Economic Studies Limited},
    note = {Advance access publication 15 February 2023}
}

@TechReport{CH2024,
  author      = {De Chaisemartin, C. and D'Haultefoeuille, X.},
  institution = {ArXiv},
  title       = {Under the null of valid specification, pre-tests cannotmake post-test inference liberal},
  year        = {2024},
  number      = {2407.03725v2},
  type        = {arXiv working paper},
}

@Article{CSA2021,
  author  = {Callaway, B. and Sant'Anna, P.},
  journal = {Journal of Econometrics},
  title   = {Difference-in-Differences with multiple time periods},
  year    = {2021},
  number  = {2},
  pages   = {200-230},
  volume  = {225},
}

@Article{L2007,
  author  = {Lewbel, A.},
  journal = {Econometrica},
  title   = {Estimation of Average Treatment Effects with Misclassification},
  year    = {2007},
  number  = {2},
  pages   = {537-551},
  volume  = {75},
}

@Article{U2018,
  author  = {Ura, T.},
  journal = {Quantitative Economics},
  title   = {Heterogeneous treatment effects with mismeasured endogenous treatment},
  year    = {2018},
  number  = {3},
  pages   = {1335-1370},
  volume  = {9},
}

@Article{BLW2022,
  author  = {Baker, A.C. and Larcker, D.F. and Wang, C.C.Y.},
  journal = {Journal of Financial Economics},
  title   = {How much should we trust staggered difference-in-differences estimates?},
  year    = {2022},
  pages   = {370–395},
  volume  = {144},
}

@Article{BBS2022,
  author  = {Bismarck-Osten, C.V. and Borusyak, K. and Sch\"{o}nberg, U.},
  journal = {Econonmic Policy},
  title   = {The role of schools in transmission of the SARS-CoV-2 virus: quasi-experimental evidence from Germany},
  year    = {2022},
  number  = {109},
  pages   = {87–130},
  volume  = {37},
}

@Article{BPRSS2024,
  author  = {Berkhout, E. and Pradhan, M. and Rahmawati and Suryadarma, D. and Swarnata, A.},
  journal = {Journal of Development Economics},
  title   = {Using technology to prevent fraud in high stakes national schoolexaminations: Evidence from Indonesia},
  year    = {2024},
  pages   = {103307},
  volume  = {170},
}

@Article{GS2017,
  author  = {Goldschmidt, D. and Schmieder, J.F.},
  journal = {Quarterly Journal of Economics},
  title   = {THE RISE OF DOMESTIC OUTSOURCING AND THEEVOLUTION OF THE GERMAN WAGE STRUCTURE},
  year    = {2017},
  number  = {3},
  pages   = {1165-1217},
  volume  = {132},
}

@Article{B2022,
  author  = {Barrios, J.M.},
  journal = {Journal of Accounting Research},
  title   = {Occupational Licensing andAccountant Quality: Evidence fromthe 150-Hour Rule},
  year    = {2022},
  number  = {1},
  pages   = {4-43},
  volume  = {60},
}

@Article{BG2018,
  author  = {Botosaru, I. and Gutierrez, F.H.},
  journal = {Journal of Applied Econometrics},
  title   = {Difference-in-differences when the treatment status is observed inonly one period},
  year    = {2018},
  pages   = {73-90},
  volume  = {33},
}

@Article{NN2025,
  author  = {Negi, A. and Negi, D.S.},
  journal = {Journal of Applied Econometrics},
  title   = {Difference-in-Differences With a Misclassified Treatment},
  year    = {2025},
  pages   = {1-13},
  volume  = {0},
}

@TechReport{F2023,
  author      = {Fabre, A.},
  institution = {TSE Working Paper},
  title       = {Robustness of Two-Way Fixed Effects Estimators toHeterogeneous Treatment Effects},
  year        = {2023},
  number      = {1362},
  type        = {techreport},
}

@Article{Abadie2005,
  author  = {Abadie, A.},
  journal = {Review of Economic Studies},
  title   = {Semiparametric difference-in-differences estimators},
  year    = {2005},
  pages   = {1-19},
  volume  = {72},
}

@Article{G2016,
  author  = {Gutknecht, D.},
  journal = {Journal of Econometrics},
  title   = {Testing for Monotonicity under Endogeneity - An Application to the Reservation Wage Function},
  year    = {2016},
  pages   = {110-114},
  volume  = {190},
}

@Article{Giacomini2013,
  author  = {Giacomini, Raffaella and Politis, Dimitris N and White, Halbert},
  journal = {Econometric Theory},
  title   = {{A Warp-Speed Method for Conducting Monte Carlo Experiments Involving Bootstrap Estimators}},
  year    = {2013},
  issn    = {0266-4666},
  number  = {3},
  pages   = {567--589},
  volume  = {29},
  type    = {Journal Article},
}

@Article{BJS2024,
  author  = {Borusyak, K. and Jaravel, X. and Spiess, J.},
  journal = {Review of Economic Studies},
  title   = {Revisiting Event-Study Designs: Robust and Efficient Estimation},
  year    = {2024},
  number  = {6},
  pages   = {3253-3285},
  volume  = {91},
}

@Article{Khan2010,
  author  = {Khan, S. and Tamer, E.},
  journal = {Econometrica},
  title   = {Irregular Identification, Support conditions, and Inverse Weight Estimation},
  year    = {2010},
  number  = {6},
  pages   = {2021-2042},
  volume  = {78},
}

\end{document}